\newif\iflong
\newif\ifshort
 \renewcommand{\paragraph}{\@startsection{paragraph}{6}{\z@}{2ex}{-0.7em}{\normalsize\bf}}
\begin{document}


\newcommand{\Alphabet}{\Sigma}
\newcommand{\Props}{P}
\newcommand{\Int}[2]{#1^#2}
\newcommand{\rel}{\lhd}
\newcommand{\Data}{\mathfrak{D}}

\newcommand{\RelSymb}{\sigma}
\newcommand{\RelInt}{\mathfrak{I}}

\newcommand{\graph}{G}

\newcommand{\EMSOtwo}{\extEMSO_2(\SimSign \cup \{<\})}
\newcommand{\ClassEMSOtwo}{\extClassEMSO_2(\SimSign \cup \{<\})}

\newcommand{\tsucc}{+1}
\newcommand{\tequiv}{\sim}
\newcommand{\tequivp}[1]{\mathsf{equ}^{k}}

\newcommand{\req}{\mathsf{r}}
\newcommand{\ack}{\mathsf{a}}

\newcommand{\eoex}{\hfill$\Diamond$}

\newcommand{\nrel}{\rel_\Gamma}

\newcommand{\nMSO}{\MSO^+}
\newcommand{\nClassMSO}{\ClassMSO^+}

\newcommand{\newhat}[1]{#1'}

\newcommand{\wsimulate}[4]{#3 \mathrel{\sqsubseteq^{#1}_{#2}} #4}
\newcommand{\conn}[1]{\mathrel{\leftrightarrow}^{#1}}
\newcommand{\entf}{e}
\newcommand{\range}{z}

\newcommand{\io}{\alpha}
\newcommand{\jo}{\beta}

\newcommand{\myhat}[1]{#1_\Gamma}

\newcommand{\sphere}{S}
\newcommand{\core}{\mathit{core}}
\newcommand{\esphere}{E}
\newcommand{\sdist}{\mathit{dist}}
\newcommand{\scenter}{\gamma}
\newcommand{\sactive}{\alpha}
\newcommand{\scolor}{\mathit{col}}
\newcommand{\slabel}{\mathit{label}}
\newcommand{\sguard}{\mathit{data}}

\newcommand{\Part}[1]{\mathit{Part}(#1)}

\newcommand{\MSCs}{\mathsf{MSC}}

\newcommand{\mydist}{\mathit{dist}}

\newcommand{\myautomata}{class register automata\xspace}
\newcommand{\myautomaton}{class register automaton\xspace}

\newcommand{\ngautomata}{non-guessing class register automata\xspace}
\newcommand{\ngautomaton}{non-guessing class register automaton\xspace}
\newcommand{\emphngautomata}{\emph{non-guessing class register automata}\xspace}
\newcommand{\emphngautomaton}{\emph{non-guessing class register automaton}\xspace}

\newcommand{\cmautomata}{class memory automata\xspace}
\newcommand{\cmautomaton}{class memory automaton\xspace}

\newcommand{\emphmyautomaton}{\emph{class register automaton}\xspace}
\newcommand{\emphcmautomaton}{\emph{class memory automaton}\xspace}

\newcommand{\Guards}{\mathit{Guards}}

\newcommand{\mytrue}{\mathit{true}}
\newcommand{\myfalse}{\mathit{false}}

\newcommand{\ptype}[1]{\mathit{type}^-(#1)}
\newcommand{\ntype}[1]{\mathit{type}^+(#1)}

\newcommand{\N}{\mathds{N}}

\newcommand{\Acc}{\Phi}
\newcommand{\source}{p}
\newcommand{\test}{g}
\newcommand{\update}{f}

\newcommand{\set}[1]{[#1]}

\newcommand{\lformula}[2]{\letter{#1} = #2}

\newcommand{\neglformula}[2]{\letter{#1} \neq #2}

\newcommand{\init}{\iota}

\newcommand{\dvec}{\bar{d}}

\newcommand{\param}{m}

\newcommand{\symb}{\alpha}

\newcommand{\guess}{\mathsf{guess}}

\newcommand{\letter}[1]{\ell(#1)}
\newcommand{\dvalue}[2]{d^{#2}(#1)}

\newcommand{\letterp}[2]{\mathbb{l}^{#1}(#2)}
\newcommand{\dvaluep}[3]{\mathbb{d}^{#1}(#2,#3)}

\newcommand{\Sign}{\mathscr{S}}
\newcommand{\sign}{\sigma}

\newcommand{\Aut}{\mathcal{A}}
\newcommand{\Baut}{\mathcal{B}}

\newcommand{\labelone}{\lambda}
\newcommand{\labeltwo}{\nu}
\newcommand{\guard}{g}

\newcommand{\Partition}[1]{\mathit{Equ}(#1)}

\newcommand{\Local}{F}
\newcommand{\Global}{G}

\newcommand{\Trans}{\Delta}

\newcommand{\egal}{\ast}

\newcommand{\nd}{\ell}

\newcommand{\data}{d}

\newcommand{\radius}{B}

\newcommand{\reg}{\rho}
\newcommand{\Reg}{R}

\newcommand{\noentry}{\bot}

\newcommand{\edge}[1]{\prec_{#1}}
\newcommand{\edgep}[2]{\prec_{#2}^{#1}}
\newcommand{\fedge}[1]{\prec_{#1}}
\newcommand{\fedgep}[2]{\sqsubset_{#2}^{#1}}

\newcommand{\equ}[1]{\sim_{#1}}

\newcommand{\edgesim}{\prec_\sim}
\newcommand{\edgesucc}{\prec_{+1}}

\newcommand{\before}[1]{<_{#1}}

\newcommand{\pos}[2]{\#1^{#1}}

\newcommand{\Spheres}[1]{#1\textup{-}\mathit{Spheres}_\Sign}
\newcommand{\eSpheres}[1]{#1\textup{-}\mathit{eSpheres}_\Sign}
\newcommand{\mineSpheres}[1]{\mathit{eSpheres}_{#1}^\textup{min}}

\newcommand{\Sph}[3]{#1\textup{-}\mathit{Sph}^{#2}(#3)}
\newcommand{\Sphp}[4]{#1\textup{-}\mathit{Sph}^{#2}(#3)}

\newcommand{\trans}[3]{#1 \stackrel{#2}{\longrightarrow} #3}

\newcommand{\Graph}[1]{\graph(#1)}

\newcommand{\loc}{{\textup{r}}}

\newcommand{\MSO}{\loc\textup{MSO}}
\newcommand{\extMSO}{\textup{MSO}}
\newcommand{\EMSO}{\loc\textup{EMSO}}
\newcommand{\extEMSO}{\textup{EMSO}}
\newcommand{\FO}{\loc\textup{FO}}
\newcommand{\extFO}{\textup{FO}}

\newcommand{\ClassMSO}{\loc\mathbb{MSO}}
\newcommand{\ClassEMSO}{\loc\mathbb{EMSO}}
\newcommand{\ClassFO}{\loc\mathbb{FO}}

\newcommand{\extClassMSO}{\mathbb{MSO}}
\newcommand{\extClassEMSO}{\mathbb{EMSO}}
\newcommand{\extClassFO}{\mathbb{FO}}

\newcommand{\ClassCRA}{\mathbb{CRA}}
\newcommand{\ClassRCRA}{\mathbb{CRA}^-}
\newcommand{\ClassCMA}{\mathbb{CMA}}

\newcommand{\Comm}{\Sigma_{\mathsf{dyn}}}

\newcommand{\CommSign}{\Sign_{\mathsf{dyn}}^2}
\newcommand{\SimSign}{\smash{\Sign_{+1,\sim}^1}}

\newcommand{\CaSign}{\Sign^{2}_{+1,\sim}}
\newcommand{\RaSign}{\Sign^{1}_{\textup{+1}}}

\newcommand{\lsucc}{\edgesucc}
\newcommand{\lcsucc}{\oplus 1}

\newcommand{\ldsuccp}[2]{#1 + 1 = #2}
\newcommand{\lcsuccp}[2]{#1 \oplus 1 = #2}

\newcommand{\dpred}{\mathsf{succ}_{-}}
\newcommand{\dsucc}{\mathsf{succ}_{+}}

\newcommand{\nextwp}[1]{\mathsf{next}_{#1}}
\newcommand{\nextp}[2]{\mathsf{next}_{#2}^{#1}}

\newcommand{\prev}[1]{\mathsf{prev}_{#1}}
\newcommand{\prevp}[2]{\mathsf{prev}_{#2}^{#1}}

\newcommand{\csucc}[1]{\mathsf{next}_{#1}}
\newcommand{\csuccp}[2]{\mathsf{next}_{#2}^{#1}}

\newcommand{\cpred}[1]{\mathsf{prev}_{#1}}
\newcommand{\cpredp}[2]{\mathsf{prev}_{#2}^{#1}}

\newcommand{\Test}{T}
\newcommand{\Write}{U}

\renewcommand{\epsilon}{\varepsilon}

\newcommand{\Proc}{\mathit{Proc}}

\newcommand{\dom}{\textup{dom}}

\newcommand{\df}{:=}

\newcommand{\States}{Q}
\newcommand{\state}{q}

\newcommand{\pone}{c}
\newcommand{\ptwo}{d}

\newcommand{\spawnsymb}{\mathsf{f}}
\newcommand{\startsymb}{\mathsf{n}}
\newcommand{\sendsymb}{!}
\newcommand{\recsymb}{?}

\newcommand{\msgsymb}{\mathsf{msg}}
\newcommand{\procsymb}{\mathsf{proc}}
\newcommand{\forksymb}{\mathsf{fork}}

\newcommand{\spawn}[2]{\spawnsymb(#1,#2)}
\newcommand{\start}[2]{\startsymb(#1,#2)}
\newcommand{\send}[2]{!(#1,#2)}
\newcommand{\rec}[2]{?(#1,#2)}

\newcommand{\rparam}{\bar{\pi}}
\newcommand{\grparam}{\bar{\pi}}
\newcommand{\pto}{\mathrel{\rightharpoonup}}
\newcommand{\tspawn}[3]{#1 \leftarrow \mathsf{spawn}(#2,#3)}
\newcommand{\Act}{\mathit{Act}}

\renewcommand{\phi}{\varphi}


\title{An automaton over data words that captures EMSO logic}

\author{Benedikt Bollig}

\institute{ LSV, ENS Cachan, CNRS \& INRIA, France\\
\email{bollig@lsv.ens-cachan.fr} }

\maketitle

\begin{abstract}
We develop a general framework for the specification and implementation of systems whose executions are words, or partial orders, over an infinite alphabet. As a model of an implementation, we introduce class register automata, a one-way automata model over words with multiple data values. Our model combines register automata and class memory automata. It has natural interpretations. In particular, it captures communicating automata with an unbounded number of processes, whose semantics can be described as a set of (dynamic) message sequence charts. On the specification side, we provide a local existential monadic second-order logic that does not impose any restriction on the number of variables. We study the realizability problem and show that every formula from that logic can be effectively, and in elementary time, translated into an equivalent class register automaton.
\end{abstract}

  \gasset{AHangle=30,AHlength=1.4,AHLength=1.6}

 \def\propertyname{Fact}

\section{Introduction}

A recent research stream, motivated by models from XML database theory,
considers \emph{data words}, i.e., strings over an infinite alphabet \cite{Bou-IPL2002,Neven2004,DL-tocl08,Bjorklund10,Segoufin06}. The alphabet is the cartesian product of a finite supply of \emph{labels} and an
infinite supply of \emph{data values}.
While labels may represent, e.g., an
XML tag or reveal the type of an action that a system performs, data values
can be used to model time stamps
\iflong\cite{Bou-IPL2002,Bouyer2003,FHL-express2010}\fi
\ifshort\cite{Bou-IPL2002}\fi, process identifiers \cite{BH-csr10,Tzevelekos2011}, or text contents in XML documents\iflong~\cite{BMSS-jacm09}\fi.


We will consider data words as behavioral models of concurrent systems. In this regard, it is natural to look at suitable logics and automata. Logical formulas may serve as specifications, and automata as system models or tools for deciding logical theories. This viewpoint raises the following classical problems/tasks: \emph{satisfiability} (does a given logical formula have a model\,?), \emph{model checking} (do all executions of an automaton satisfy a given formula\,?), and \emph{realizability} (given a formula, construct a system model in terms of an  automaton whose executions are precisely the models of the formula). Much work has indeed gone into defining logics and automata for data words, with a focus on  satisfiability \cite{Bojanczy06,David2010}.

One of the first logical approaches to data words is due to \cite{Bou-IPL2002}.
Since then, a two-variable logic has become a commonly accepted yardstick wrt.\ expressivity and decidability \cite{Bojanczy06}. The logic contains a predicate to compare data values of two positions for equality. Its satisfiability problem is decidable, indeed, but supposedly of very high complexity. An elementary upper bound has been obtained only for weaker fragments \cite{Bojanczy06,David2010}. For specification of communicating systems, however, two-variable logic is of limited use: it cannot express properties like ``whenever a process Pid1 spawns some Pid2, then this is followed by a message from Pid2 to Pid1''. Actually, the logic was studied for words with only one data value at each each position, which is not enough to encode executions of message-passing systems. But three-variable logics as well as extensions to two data values lead to undecidability. To put it bluntly, any ``interesting'' logic for dynamic communicating systems has an undecidable satisfiability problem.

Instead of satisfiability or model checking, we therefore consider realizability. A system model that \emph{realizes} a given formula can be considered correct by construction. Realizability questions for data words have, so far, been neglected. One reason may be that there is actually no automaton that could serve as a realistic system model. Though data words naturally reflect executions of systems with an unbounded number of threads, existing automata fail to model distributed computation. Three features are minimum requirements for a suitable system model. First, the automaton should be a \emph{one-way device}, i.e., read an execution once, processing it ``from left to right'' (unlike data automata \cite{Bojanczy06}, class automata \cite{BL2010}, two-way register automata, and pebble automata \cite{Neven2004}). Second, it should be \emph{non-deterministic} (unlike alternating automata \cite{Neven2004,DL-tocl08}). Third, it should reflect paradigms that are used in concurrent programming languages such as process creation and message passing. Two known models match the first two properties: register automata \cite{Kaminski1994,KaminskiZ10,Tzevelekos2011} and class memory automata \cite{Bjorklund10}; but they clearly do not fulfill the last requirement.

\paragraph{Contribution.}

We provide an existential MSO logic over data words, denoted \EMSO, which does not impose any restriction on the number of variables. The logic is strictly more expressive than the two-variable logic from \cite{Bojanczy06} and suitable to express interesting properties of dynamic communicating systems.

We then define \emph{class register automata} as a system model. They are a mix of register automata \cite{Kaminski1994,KaminskiZ10,Tzevelekos2011} and class memory automata \cite{Bjorklund10}. A \myautomaton is a non-deterministic one-way device. Like a class memory automaton, it can access certain configurations in the past. However, we extend the notion of a configuration, which is no longer a simple state but composed of a state \emph{and} some data values that are stored in registers. This is common in concurrent programming languages and can be interpreted as ``read current state of a process'' or ``send process identity from one to another process''. Moreover, it is in the spirit of
communicating finite-state machines \cite{Brand1983} or nested-word automata \cite{Alur2009}, where more than one resource (state, channel, stack, etc.) can be accessed at a time. Actually, our automata run over directed acyclic graphs rather than words. To our knowledge, they are the first automata model of true concurrency that deals with structures over infinite alphabets.

We study the realizability problem and show that, for every \EMSO~formula, we can compute, in elementary time, an equivalent \myautomaton.
The effective translation is based on Hanf's locality theorem \cite{Hanf1965} and
\iflong
properly generalizes \cite{Bollig06,BK-IC08} to a dynamic setting with unbounded process creation.
\fi
\ifshort
properly generalizes \cite{Bollig06} to a dynamic setting.
\fi

\paragraph{Outline.}
Sections~\ref{sec:definitions} and \ref{sec:logic} introduce data words and their logics.
In Section~\ref{sec:CRA}, we define the new automata model.
Section~\ref{sec:autvslogic} is devoted to the realizability problem and states our main result. In Section~\ref{sec:specialcases}, we give translations from automata back to logic.
\iflong
An extension of our main result to infinite data words is discussed in Section~\ref{sec:infinite}.
\fi
We conclude in Section~\ref{sec:conclusion}.
\ifshort
Omitted proofs, as well as an extension of our main result to infinite data words, can be found in the full version of this paper available at: {\small\url{http://hal.archives-ouvertes.fr/hal-00558757/}}
\fi

\section{Data Words}\label{sec:definitions}

Let $\N = \{0,1,2,\ldots\}$ denote the set of natural numbers. For $\param
\in \N$, we denote by $[\param]$ the set $\{1,\ldots,\param\}$. A
\emph{boolean formula} over a (possibly infinite) set $A$ of \emph{atoms} is a finite object 
generated by the grammar $\beta ::= \mytrue \mid \myfalse \mid a \in A \mid \neg \beta \mid \beta \vee \beta \mid \beta \wedge \beta$. For an assignment of truth values to elements of $A$, a boolean formula $\beta$ is evaluated to true or false as usual. Its size $|\beta|$ is the number of vertices of its syntax tree. Moreover, $|A| \in \N \cup \{\infty\}$ denotes the size of a set $A$. The symbol $\,\mathord{\cong}\,$ will be used to denote isomorphism of two structures. For a partial function $f$, the domain of $f$ is denoted by $\dom(f)$.

\medskip

We fix an infinite set $\Data$ of \emph{data values}. Note that $\Data$ can be \emph{any} infinite set. For examples, however, we usually choose $\Data = \N$. In a data word, every position will carry $\param \ge 0$
data values. It will also carry a \emph{label} from a non-empty finite
alphabet $\Sigma$. Thus, a \emph{data word} is a finite sequence over $\Sigma
\times \Data^\param$ (over $\Sigma$ if $m=0$). Given a data word $w = (a_1,d_1) \ldots (a_n,d_n)$ with $a_i \in \Sigma$ and $d_i
= (d_i^1,\ldots,d_i^\param) \in \Data^m$, we let $\letter{i}$ refer to label $a_i$ and
$\dvalue{i}{k}$ to data value $d_i^k$.

Classical words without data come with natural relations on word positions such as the direct successor relation $\edge{+1}$ and its transitive closure $\mathord{<}$. In the context of data
words with one data value (i.e., $\param=1$), it is natural to consider also a
relation $\edge{\sim}$ for successive positions with identical data values
\cite{Bojanczy06}. As, in the present paper, we deal with multiple data
values, we generalize these notions in terms of a signature. A \emph{signature} $\Sign$ is a pair $(\RelSymb,\RelInt)$. It consists of a finite set $\RelSymb$ of binary \emph{relation symbols} and an \emph{interpretation} $\RelInt$. The latter associates, with every $\rel \in \RelSymb$ and every data word $w = w_1 \ldots w_n \in (\Sigma \times \Data^\param)^\ast$, a relation $\Int{\rel}{w} \subseteq [n] \times [n]$ such that the following hold, for all word positions $i,j,i',j' \in [n]$:
\begin{itemize}
\item[(1)] $i \Int{\rel}{w} j$ implies $i < j$
\item[(2)] there is at most one $k$ such that $ i \Int{\rel}{w} k$
\item[(3)] there is at most one $k$ such that $k \Int{\rel}{w} i$
\item[(4)] if $i \Int{\rel}{w} j$ and $i' \Int{\rel}{w} j'$ and $w_i = w_{i'}$ and $w_j = w_{j'}$, then $i < i'$ iff $j < j'$
\end{itemize}
In other words, we require that $\Int{\rel}{w}$ (1) complies with $<$, (2) has out-degree at most one, (3) has in-degree at most one, and (4) is monotone. Our translation from logic into automata will be symbolic and independent of $\RelInt$, but its applicability and correctness rely upon the above conditions. However, several examples will demonstrate that the framework is quite flexible and allows us to capture existing logics and automata for data words. Note that $\Int{\rel}{w}$ can indeed be \emph{any} relation satisfying (1)--(4). It could even assume an order on $\Data$.

As the interpretation $\RelInt$ is mostly understood, we may identify $\Sign$ with $\RelSymb$ and write $\rel \in \Sign$ instead of $\rel \in \RelSymb$, or $|\Sign|$ to denote $|\sign|$. If not stated otherwise, we let in the following $\Sign$ be any signature.

\begin{example}\label{ex:sim}
Typical examples of relation symbols include $\edge{+1}$ and $\edge{\tequiv}^k$ relating direct successors and, respectively, successive positions with the same $k$-th data value: For $w = w_1 \ldots w_n$, we let $\mathord{\edge{+1}^w} = \{(i,i+1) \mid i \in \{1,\ldots,n-1\}\}$ and $(\mathord{\edge{\tequiv}^k)^w} = \{(i,j) \mid 1 \le i < j \le n$,  $\dvalue{i}{k} = \dvalue{j}{k}$, and there is no $i < i' < j$ such that $\dvalue{i}{k} = \dvalue{i'}{k}\}$. When $m = 1$, we write $\edge{\sim}$ instead of $\edge{\sim}^1$.
Automata and logic have been well studied in the presence of
one single data value ($m = 1$) and for signature $\SimSign =
\{\edge{+1}\,,\,\edge{\sim}\}$ with the above interpretation \cite{Bojanczy06,Bjorklund10}. Here, and in the following, we adopt the convention that the upper index of a signature denotes the number $m$ of data values.
Figure~\ref{fig:dataword} depicts a data word over $\Sigma = \{\req,\ack\}$ (request/acknowledgment) and $\Data = \N$ as well as the relations $\edge{+1}$ (straight arrows) and $\edge{\sim}$ (curved arrows) imposed by $\SimSign$. \eoex
\end{example}

\setlength\belowcaptionskip{-0.2cm}

\begin{figure}[t]
\centering
{
\begin{minipage}[b]{0.44\textwidth}
\centering
{
 \scalebox{0.85}{
\begin{picture}(56,16)(-1,-4)
\newcommand{\dist}{\hspace{0.8em}&\hspace{0.8em}}
\gasset{Nh=4,Nw=3.8,Nadjustdist=1,AHangle=35,AHLength=1.0,AHlength=0.4,Nframe=n,Nfill=n,linewidth=0.1}
\unitlength=0.24em
\put(0,-6.5){
  \node(w)(35,0){$
\begin{array}{ccccccccc}
  8 \dist 5 \dist 3 \dist 4 \dist 3 \dist 4 \dist 5 \dist 4\\~
\end{array}
$}
}
\node[Nh=3](A1)(0,0){$\req$}
\node[Nh=3](A2)(10,0){$\req$}
\node[Nh=3](A3)(20,0){$\req$}
\node[Nh=3](A4)(30,0){$\req$}
\node(A5)(40,0){}
\node(A6)(50,0){}
\node(A7)(60,0){}
\node(A8)(70,0){}

\node(b)(40,0.4){$\ack$}
\node(b)(50,0.4){$\ack$}
\node(b)(60,0.4){$\ack$}
\node(b)(70,0.4){$\ack$}

\drawedge[curvedepth=7](A3,A5){}
\drawedge[curvedepth=7](A4,A6){}
\drawedge[curvedepth=7](A6,A8){}
\drawbcedge(A2,16,16,A7,54,16){$\edgesim$}

\drawedge[ELside=l,ELpos=50](A1,A2){\small$\edgesucc$}
\drawedge(A2,A3){}
\drawedge(A3,A4){}
\drawedge(A4,A5){}
\drawedge(A5,A6){}
\drawedge(A6,A7){}
\drawedge(A7,A8){}
\end{picture}
}
}
\caption{Data word over $\SimSign$\label{fig:dataword}}\vspace{0.4ex}
\end{minipage}
}
~
{
\begin{minipage}[b]{0.48\textwidth}
\centering
{
\scalebox{0.84}{
\begin{picture}(63,27)(0,-20)
\newcommand{\dist}{\hspace{0.6em}&\hspace{0.6em}}
\gasset{Nh=4,Nw=3.8,Nadjustdist=1,AHangle=35,AHLength=1.0,AHlength=0.4,Nframe=n,Nfill=n,linewidth=0.1}
\unitlength=0.20em
\put(10,-20){
 \node(w)(40,-5.9){$
\begin{array}{ccccccccccc}
  \startsymb \dist \spawnsymb \dist \startsymb \dist \spawnsymb \dist \startsymb \dist \sendsymb \dist \recsymb \dist \sendsymb \dist \sendsymb \dist \recsymb \dist \recsymb\\
  2 \dist 2 \dist 3 \dist 2 \dist 1 \dist 2 \dist 3 \dist 1 \dist 1 \dist 3 \dist 3\\
  2 \dist 3 \dist 2 \dist 1 \dist 2 \dist 3 \dist 2 \dist 3 \dist 3 \dist 1 \dist 1
\end{array}
$}
}
%
%
\node(A1)(0,0){}
\node(A2)(10,0){}
 \node(A3)(20,-10){}
 \node(A4)(30,0){}
 \node(A5)(50,0){}
 \node(A6)(40,10){}
 \node(A7)(60,-10){}
 \node(A8)(70,10){}
 \node(A9)(80,10){}
 \node(A10)(90,-10){}
 \node(A11)(100,-10){}
 %
 %
 \node(b)(0,-0.4){$\startsymb$}
 \node(b)(10,0){$\spawnsymb$}
 \node(b)(20,-10.4){$\startsymb$}
 \node(b)(30,0){$\spawnsymb$}
 \node(b)(50,0){$\sendsymb$}
 \node(b)(40,9){$\startsymb$}
 \node(b)(60,-10){$\recsymb$}
 \node(b)(70,10){$\sendsymb$}
 \node(b)(80,10){$\sendsymb$}
 \node(b)(90,-10){$\recsymb$}
 \node(b)(100,-10){$\recsymb$}

 \drawedge(A1,A2){}
 \drawedge[ELside=r,ELdist=0.05,ELpos=30](A2,A3){$\edge{\forksymb}$}
 \drawedge[ELside=l,ELdist=0.3](A2,A4){$\edge{\procsymb}$}
 \drawedge(A4,A6){}
 \drawedge(A5,A7){}
 \drawedge[ELside=r,ELpos=30,ELdist=0.1](A8,A10){$\fedge{\msgsymb}$}
 \drawedge[ELside=l,ELpos=60,ELdist=0.1](A9,A11){$\fedge{\msgsymb}$}
 \drawedge(A3,A7){}
 \drawedge(A4,A5){}
 
  \drawedge(A6,A8){}
  \drawedge(A7,A10){}
  \drawedge(A8,A9){}
  \drawedge(A10,A11){}
 \end{picture}
}
}
\caption{Data word over $\CommSign$\label{fig:msc}}
\vspace{0.1ex}
\end{minipage}
}
\end{figure}

\begin{example}\label{ex:comm}
  We develop a framework for message-passing systems with dynamic process creation. Each process has a unique identifier from $\Data = \N$. Process $\pone \in \N$
  can execute an action $\spawn{\pone}{\ptwo}$, which forks a new process with
  identity $\ptwo$. This action is eventually followed by
  $\start{\ptwo}{\pone}$, indicating that $\ptwo$ is new (created by $\pone$)
  and begins its execution. Processes can exchange messages. When $c$ executes $\send{\pone}{\ptwo}$, it sends a message through an unbounded first-in-first-out (FIFO) channel \mbox{$c \to d$}. Process $d$ may execute $\rec{\ptwo}{\pone}$ to receive the message.
  Elements from $\Comm =
  \{\spawnsymb,\startsymb\,,\,\sendsymb\,,\recsymb\}$ reveal the nature of
  an action, which requires two identities so that we choose $\param
  = 2$. When a process performs an action, it should access
  the current state of (i) its own, (ii) the
  spawning process if a new-action is executed, and (iii) the sending process if a receive is
  executed (message contents are encoded in states).
  To this aim, we define a signature $\CommSign = \{\edge{\procsymb}\,,\,\edge{\forksymb}\,,\,\fedge{\msgsymb}\}$ with the following interpretation. Assume $w = w_1 \ldots w_n \in (\Comm \times \N \times \N)^\ast$ and consider, for $a,b \in \Comm$ and $i,j \in [n]$, the property
\[P_{(a,b)}(i,j) ~=~ (\letter{i} = a
  \mathrel{\wedge} \letter{j} = b \mathrel{\wedge} \dvalue{i}{1} =
  \dvalue{j}{2} \mathrel{\wedge} \dvalue{i}{2} = \dvalue{j}{1})\,.\]
We set $\mathord{\edge{\procsymb}^w} = (\mathord{\edge{\sim}^1})^w$, which relates successive positions with the same executing process. Moreover, let 
$i \mathrel{\edge{\forksymb}^w} j$ if $i < j$, $P_{(\spawnsymb,\startsymb)}(i,j)$, and there is no $i < k < j$ such that $P_{(\spawnsymb,\startsymb)}(i,k)$ or $P_{(\spawnsymb,\startsymb)}(k,j)$.
Finally, we set
$i \mathrel{\edge{\msgsymb}^w} j$ if $i < j$, $P_{(!,?)}(i,j)$, and
\[|\{i'  < i \mid P_{(!,?)}(i',j)\,\}| ~=~ |\{j'  < j \mid P_{(!,?)}(i,j')\,\}|\,.\]
This models FIFO communication.
An example data word is given in Figure~\ref{fig:msc}, which also depicts the relations induced by $\CommSign$. Horizontal arrows reflect $\edge{\procsymb}$, vertical arrows either $\edge{\forksymb}$ or $\edge{\msgsymb}$, depending on the labels. Note that $\start{2}{2}$ is executed by ``root process'' $2$, which was not spawned by some other process.
\eoex
\end{example}


\newcommand{\mywrap}{
\begin{wrapfigure}[2]{r}{0.17\textwidth}
\vspace{-23pt}
\centering
{
\scalebox{0.85}{
\begin{picture}(20.5,7)(20,-0.5)
\gasset{Nh=4,Nw=3.7,Nadjustdist=1,AHangle=35,AHLength=1.0,AHlength=0.4,Nframe=n,Nfill=n,linewidth=0.1}
\unitlength=0.24em
\node[Nh=3](A3)(20,0){$\req$}
\node[Nw=3.5,Nmr=0,Nframe=y,Nh=4](A4)(30,0){$\req$}
\node(A5)(40,0){}
\node(A6)(50,0){}

\node(b)(40,0.4){$\ack$}
\node(b)(50,0.4){$\ack$}

\drawedge[curvedepth=7](A3,A5){}
\drawedge[curvedepth=7](A4,A6){}

\drawedge(A3,A4){}
\drawedge(A4,A5){}
\drawedge(A5,A6){}
\end{picture}
}
}
\end{wrapfigure}
}

\iflong
\paragraph{Graph Abstraction.}
Note that the graph induced by the data word from Figure~\ref{fig:msc} does not resemble a word anymore, as the direct successor relation on word positions is abandoned.
Actually, we can see data words from a different angle. A signature $\Sign$ determines a class of \emph{data graphs} $\mathcal{G}$ with $(\Sigma \times \Data^\param)$-labeled nodes and $\Sign$-labeled edges. A data graph is contained in $\mathcal{G}$ if it can
  be ``squeezed'' into a word $w$ such that nodes that are connected by a
  $\rel$-labeled edge turn into word positions that are related by
  $\rel^w$. In other words, we consider directed acyclic graphs such
  that at least one linearization (extension to a total order) matches the
  requirements imposed by the signature.
  
  Our principal proof technique relies on a graph abstraction of data words
where data values are classified into equivalence classes.
\fi
\ifshort
Our principal proof technique relies on a graph abstraction of data words.
\fi
Let $\Part{m}$ be the set of all partitions of $[\param]$. An $\Sign$-\emph{graph} is a (node- and edge-labeled) graph $\graph = (V,(\rel^\graph)_{\rel \in \Sign},\labelone,\labeltwo)$. Here, $V$ is the finite set of nodes, $\labelone: V \to \Sigma$ and $\labeltwo: V \to \Part{m}$ are node-labeling functions, and each $\rel^\graph \subseteq V \times V$ is a set of edges such that, for all $i \in V$, there is at most one $j \in V$ with $i \rel^\graph j$, and there is at most one $j \in V$ with $j \rel^\graph i$. We represent $\rel^\graph$ and $(\rel^\graph)^{-1}$ as partial functions and set $\csuccp{\graph}{\rel}(i) = j$ if $i \rel^\graph j$, and $\cpredp{\graph}{\rel}(i) = j$ if $j \mathrel{\rel^\graph} i$.

Local graph patterns, so-called spheres, will also play a key role. For nodes $i,j \in V$, we denote by $\mydist^\graph(i,j)$ the \emph{distance} between $i$ and $j$, i.e., the length of the shortest path from $i$ to $j$ in the undirected graph $(V \,,\,\bigcup_{\rel \in \Sign} \mathord{\rel^\graph} \,\cup\, \mathord{(\rel^\graph)^{-1}})$ (if such a path exists). In particular, $\smash{\mydist^\graph}(i,i) = 0$. For some \emph{radius} $\radius \in \N$, the $\radius$\emph{-sphere of} $\graph$ \emph{around} $i$, denoted by $\Sph{B}{\graph}{i}$, is the substructure of $\graph$ induced by $\{j \in V \mid \mydist^\graph(i,j) \le B\}$. In addition, it contains the distinguished element $i$ as a constant, called \emph{sphere center}.

These notions naturally transfer to data words: With word $w$ of length $n$, we associate the graph $\Graph{w} = ([n],(\rel^w)_{\rel \in \Sign},\labelone,\labeltwo)$ where $\labelone$ maps $i$ to $\letter{i}$ and $\labeltwo$ maps $i$ to $\{\{l \in [\param] \mid \dvalue{i}{k} = \dvalue{i}{l}\} \mid k \in [\param]\}$. Thus, $K \in \labeltwo(i)$ contains indices with the same data value at position $i$. Now, $\csucc{\rel}^w$, $\cpred{\rel}^w$, $\mydist^w$, and $\Sph{B}{w}{i}$ are defined with reference to the graph $\Graph{w}$. We hereby assume that $\Sign$ is understood. We might also omit the index $w$ if it is clear from the context.

Data words $u$ and $v$ are called ($\Sign$-)\emph{equivalent} if $\Graph{u} \cong \Graph{v}$. For a language $L$, we let $[L]_{\Sign}$ denote the set of words that are equivalent to some word in $L$.

Given the data word $w$ from Figure~\ref{fig:dataword}, we have $\mydist^w(1,8) = 3$. The picture\mywrap on the right shows $\Sph{1}{w}{4}$. The sphere center is framed by a  rectangle; node labelings  of the form $\{\{1\}\}$ are omitted.

\section{Logic}\label{sec:logic}

We consider monadic second-order logic to specify properties of data words.
Let us fix countably infinite supplies of first-order variables $\{x,y,\ldots\}$ and second-order variables $\{X,Y,\ldots\}$.

\medskip

The set $\extMSO(\Sign)$ of \emph{monadic second-order formulas} is given by the grammar
\[\phi \,::=\,  \lformula{x}{a} \mid \dvalue{x}{k} = \dvalue{y}{l}  \mid  x \rel y
\mid x = y \mid x \in X \mid \neg \phi \mid \phi \mathrel{\vee} \phi \mid
\exists x\, \phi \mid \exists X\, \phi\]
where $a \in \Sigma$, $k,l \in [\param]$, $\rel \in \Sign$, $x$ and $y$ are
first-order variables, and $X$ is a second-order variable. The \emph{size} $|\phi|$ of $\phi$ is the number of nodes of its syntax tree.

Important fragments of $\extMSO(\Sign)$ are $\extFO(\Sign)$, the set of first-order formulas, which do not use any second-order quantifier, and $\extEMSO(\Sign)$, the set of formulas of the form $\exists X_1 \ldots \exists X_n\, \phi$ with $\phi \in \extFO(\Sign)$.

The models of a formula are data words. First-order variables are
interpreted as word positions and second-order variables as sets of positions.
Formula $\lformula{x}{a}$ holds in data word $w$ if position $x$ carries an $a$, and
formula $\dvalue{x}{k} = \dvalue{y}{l}$ holds if the $k$-th data value at position $x$ equals the $l$-th data value at position $y$. Moreover, $x \mathrel{\rel} y$ is satisfied if $x \mathrel{\rel^w} y$. The atomic formulas $x = y$ and $x \in X$ as
well as quantification and boolean connectives are interpreted as usual.

For realizability, we will actually consider a restricted, more ``local'' logic: let $\MSO(\Sign)$ denote the fragment of $\extMSO(\Sign)$ where we can only use $\dvalue{x}{k} = \dvalue{x}{l}$ instead of the more general $\dvalue{x}{k} = \dvalue{y}{l}$. Thus, data values of \emph{distinct} positions can only be compared via $x \mathrel{\rel} y$. This implies that $\MSO(\Sign)$ cannot distinguish between words $u$ and $v$ such that $\Graph{u} \cong \Graph{v}$. The fragments $\FO(\Sign)$ and $\EMSO(\Sign)$ of $\MSO(\Sign)$ are defined as expected.

In the case of one data value ($\param=1$), we will also refer to the
logic \mbox{$\EMSOtwo$} that was considered in \cite{Bojanczy06}
and restricts EMSO logic to two first-order variables. The predicate $<$ is interpreted as the strict linear order on word positions (strictly speaking, it is not part of a signature as we defined it). We shall later see that $\EMSO(\SimSign)$ is strictly more expressive than \mbox{$\EMSOtwo$}, though the latter involves the non-local predicates $\dvalue{x}{1} = \dvalue{y}{1}$ and $<$. This gain in expressiveness comes at the price of an undecidable satisfiability problem.

A \emph{sentence} is a formula without free variables. The language defined by
sentence $\phi$, i.e., the set of its models, is denoted by $L(\phi)$. By
$\extClassMSO(\Sign)$, $\ClassMSO(\Sign)$, $\ClassEMSO(\Sign)$, etc., we refer to the corresponding language classes.

\begin{example}\label{ex:server}
Think of a server that can receive requests ($\req$) from an unbounded number of processes, and acknowledge ($\ack$) them. We let $\Sigma = \{\req,\ack\}$, $\Data = \N$, and $m=1$. A data value from $\Data$ is used to model the process identity of the requesting and acknowledged process. We present three properties formulated in $\FO(\SimSign)$. Formula $\phi_1 = \exists x\exists y\,(\lformula{x}{\req} \,\wedge\, \lformula{y}{\ack} \,\wedge\, x \edge{\sim} y)$ expresses that there is a request that is acknowledged. Dually, $\phi_2 = \forall x \exists y\,(\lformula{x}{\req} \,\rightarrow\, \lformula{y}{\ack} \mathrel{\wedge} x \edge{\sim} y)$ says that every request is acknowledged before the same process sends another request. A last formula guarantees that two \emph{successive} requests are acknowledged in the order they were received:
\[\phi_3 = \forall x,y\,
\left(
\begin{array}{cl}
& \lformula{x}{\req} \,\wedge\, \lformula{y}{\req} \,\wedge\, x \edgesucc y\\
\rightarrow \exists x',y'\,
      \bigl( & \lformula{x'}{\ack} \,\wedge\, \lformula{y'}{\ack}
        \,\wedge\, x \edgesim x' \edgesucc y' \wedge y \edgesim y' \,\bigr)
\end{array}
\right)\]
This is not expressible in \mbox{$\EMSOtwo$}. We will see that $\phi_1,\phi_2,\phi_3$ form a hierarchy of languages that correspond to different automata models, our new model capturing $\phi_3$. \eoex
\end{example}

\begin{example}
\newcommand{\Prop}{\phi}
We pursue Example~\ref{ex:comm} and consider $\Comm$ with signature $\CommSign$. Recall that we wish to model systems where an unbounded number of processes communicate via message-passing through unbounded FIFO channels. 
Obviously, not every data word represents an execution of such a system. Therefore, we identify some \emph{well formed} data words,
  which have to satisfy $\Prop_1 \mathrel{\wedge} \Prop_2 \mathrel{\wedge} \Prop_3 \in \FO(\CommSign)$ given as follows. We require that there is exactly one root
  process:
$\Prop_1 = \exists x\, \left(
 \lformula{x}{\startsymb} \,\wedge\,\dvalue{x}{1} = \dvalue{x}{2}
        \,\wedge\, \forall y\,( \dvalue{y}{1} = \dvalue{y}{2} \,\rightarrow\, x = y )
       \right)$.
Next, we assume that every fork is followed by a corresponding new-action, the first action of a process is a new-event, and every new process was forked by some other process:
\[\Prop_2 = \forall x
  \left(
      \begin{array}{ll}
        & \lformula{x}{\,\spawnsymb} \,\rightarrow\, \exists y\, (x \edge{\forksymb} y)\\
        \wedge & \lformula{x}{\,\startsymb} \,\leftrightarrow\, \neg\exists y\, (y \edge{\procsymb} x)\\
        \wedge & \lformula{x}{\,\startsymb} \,\rightarrow\, \left(\dvalue{x}{1} = \dvalue{x}{2} \vee \exists y\, (y \edge{\forksymb} x)\right)
  \end{array}\right)\]
Finally, every send should be followed by a
receive, and a receive be preceded by a send action:
$\Prop_3 = \forall x\, \bigl(\ell(x) \in \{\,\sendsymb\,,\recsymb\,\} \rightarrow \exists y\,(x \fedge{\msgsymb}
  y \,\vee\, y \fedge{\msgsymb} x)\bigr)$.
This formula actually ensures that, for every $c,d \in \N$,
there are as many symbols $\send{c}{d}$ as $\rec{d}{c}$, the $N$-th send symbol being matched with the $N$-th receive symbol. We call a data word over $\Comm$ and $\CommSign$ a \emph{message sequence chart} (MSC, for short) if it satisfies $\Prop_1 \mathrel{\wedge} \Prop_2 \mathrel{\wedge} \Prop_3$. Figure~\ref{fig:msc} shows an MSC and the induced relations. When we restrict to MSCs, our logic corresponds to that from \cite{LeuckerMM02}. Note that model checking $\MSO(\CommSign)$ specifications against \emph{fork-and-join grammars}, which can generate infinite sets of MSCs, is decidable \cite{LeuckerMM02}.

A last $\FO(\CommSign)$-formula (which is not satisfied by all MSCs) specifies that, whenever a process $c$ forks some $d$, then this is followed by a message from $d$ to $c$:
$\forall x_1,y_1\,
\left(
 x_1 \edge{\forksymb} y_1 \,\rightarrow\, \exists x_2,y_2\, (x_1 \edge{\procsymb} x_2 \,\wedge\, y_1 \edge{\procsymb} y_2 \fedge{\msgsymb} x_2 )\right)$.
\eoex
\end{example}

\section{Class Register Automata}\label{sec:CRA}

In this section, we define \myautomata, a non-deterministic one-way automata model that captures \EMSO~logic. It combines register automata \cite{Kaminski1994,KaminskiZ10} and class memory automata \cite{Bjorklund10}.
When processing a data word, data values from the current position can be stored
in registers. The automaton reads the data word from left to right but can look
back on certain states and register contents from the past (e.g., at the last position that is executed by the same process).
Positions that can be accessed in this way are determined by the signature $\Sign$.
Their register entries can be compared with one another, or with current values from the input. Moreover, when taking a transition, registers can be updated by 
either a current value, an old register entry, or a guessed value.

\begin{definition}
  A \emphmyautomaton (over signature $\Sign$) is a tuple
  $\Aut = (\States,\Reg,\Trans,{(\Local_\rel)}_{\rel \in \Sign},\Acc)$ where
\begin{itemize}\itemsep=0.5ex
\item
$\States$ is a finite set of \emph{states},
\item
$\Reg$ is a finite set of \emph{registers},
\item
the $\Local_\rel \subseteq \States$ are sets of \emph{local final states},
\item
$\Acc$ is the \emph{global acceptance condition:} a boolean formula over $\{\,\textup{`}q
\le N\textup{'} \mid q \in Q$ and $N \in \N\}$, and
\item
$\Trans$ is a finite set of \emph{transitions} of the form
  \[\trans{(\source,\test)}{a}{(\state,\update)}\,.\]
  Here, $\source: \Sign \pto Q$ is a partial mapping representing the source states. Moreover,  $\test$ is a guard, i.e., a boolean formula over $\{\,\textup{`}\theta_1 = \theta_2\textup{'} \mid \theta_1,\theta_2 \in [m] \cup (\dom(\source) \mathrel{\times} \Reg)\}$ to perform comparisons of values that are are currently read and those that are stored in registers. Finally, $a \in \Sigma$ is the current label, $q \in Q$ is the target state, and
  $\update: R \pto (\dom(\source) \times \Reg) \mathrel{\cup} ([m] \times \N)$ is a partial mapping to update registers.
\end{itemize}
\end{definition}

In the following, we write $\source_\rel$ instead of $\source(\rel)$.
Transition $\trans{(\source,\test)}{a}{(\state,\update)}$ can be
executed at position $i$ of a data word if the state at position
$\prev{\rel}(i)$ is $\source_\rel$ (for all $\rel \in \dom(\source)$) and, for a register
guard $(\rel_1,r_1) = (\rel_2,r_2)$, the entry of register $r_1$
at $\prev{\rel_1}(i)$ equals that of $r_2$ at $\prev{\rel_2}(i)$. The
automaton then reads the label $a$ together with a tuple of data values that
also passes the test given by $\test$, and goes to $q$. Moreover, register $r$ obtains a new
value according to $\update(r)$: if $\update(r) = (\rel,r') \in \dom(\source) \times \Reg$, then the new value of $r$ is the value of $r'$ at position $\prev{\rel}(i)$; if $\update(r) = (k,B) \in [m] \times \N$, then $r$ obtains any $k$-th data value in the $B$-sphere around $i$. In particular, $\update(r) = (k,0)$ assigns to $r$ the (unique) $k$-th data value of the current position. To some extent, $\update(r) = (k,B)$ calls an oracle to guess a data value. The guess is local and, therefore, weaker than \cite{KaminskiZ10}, where a non-deterministic reassignment allows one to write \emph{any} data value into a register. This latter approach can indeed simulate our local version (this is not immediately clear, but can be shown using the \emph{sphere automaton} from Section~\ref{sec:autvslogic}).

Let us be more precise. A configuration of $\Aut$ is a pair $(\state,\reg)$ where
$\state \in Q$ is the current state and $\reg: \Reg \pto \Data$ is a partial mapping denoting the current register contents. If $\reg(r)$ is undefined, then there is no entry in $r$. Let $w = w_1 \ldots w_n \in (\Sigma \times \Data^m)^\ast$ be a data word and $\xi = (\state_1,\reg_1) \ldots (\state_n,\reg_n)$ be a sequence of configurations. For $i \in [n]$, $k \in [m]$, and $B \in \N$,
let $\Data_B^k(i) = \{\dvalue{j}{k} \mid j \in [n]$ such that $\mydist^w(i,j) \le B\}$. We call $\xi$ a \emph{run} of $\Aut$ on $w$ if,
for every position $i \in [n]$, there is a transition
$\trans{(\source_i,\test_i)}{\letter{i}}{(\state_i,\update_i)}$ such that the following hold:

\begin{itemize}\itemsep=1ex
\item[(1)] $\dom(\source_i) = \{\rel \in \Sign \mid \prev{\rel}(i)$ is defined$\}$

\item[(2)] for all $\rel \in \dom(\source_i)$\,: ${(\source_i})_\rel = \state_{\prev{\rel}(i)}$

\item[(3)] $\test_i$ is evaluated to true on the basis of its atomic subformulas: $\theta_1 = \theta_2$ is true iff $\mathit{val}_i(\theta_1) = \mathit{val}_i(\theta_2) \in \Data$
where $\mathit{val}_i(k) = \dvalue{i}{k}$ and $\mathit{val}_i((\rel,r)) = \reg_{\prev{\rel}(i)}(r)$ (the latter might be undefined and, therefore, not be in $\Data$)

\item[(4)] for all $r \in \Reg$\,: $\left\{\begin{array}{ll}
      \reg_i(r) = \reg_{\prev{\rel}(i)}(r') &  \textup{if~} \update_i(r)=(\rel,r') \,\in \dom(\source) \times \Reg\\[0.5ex]
      \reg_i(r) \in \Data_B^k(i) & \textup{if~} \update_i(r) = (k,B) \,\in [m] \times \N\\[0.5ex]
      \reg_i(r)~\textup{undefined} & \textup{if~} \update_i(r) \textup{~undefined}
    \end{array}\right.$
\end{itemize}
\smallskip
Run $\xi$ is accepting if $q_i \in \Local_\rel$ for all $i \in [n]$ and $\rel \in \Sign$ such that $\nextwp{\rel}(i)$ is undefined. Moreover, we
require that the global condition $\Acc$ is met. Hereby, an atomic
constraint $q \le N$ is satisfied by $\xi$ if $|\{\,i \in [n] \mid
q_i = q\}| \le N$. The language $L(\Aut) \subseteq (\Sigma \times \Data^m)^\ast$ of $\Aut$ is defined in the obvious manner. The corresponding language class is denoted by
$\ClassCRA(\Sign)$.

\medskip

The acceptance conditions are inspired by Bj{\"o}rklund and Schwentick
\cite{Bjorklund10}, who also distinguish between local and global acceptance. Local final states can be motivated as follows. When data values
model process identities, a $\edge{\sim}$-maximal position of a data word is the last
position of some process and must give rise to a local final
state. Moreover, in the context of $\CommSign$, a sending position that does not lead to a local final state in $F_{\fedge{\msgsymb}}$ requires a matching receive event.
Thus, local final states can be used to model ``communication requests''. The global acceptance condition of \myautomata is more general than that of \cite{Bjorklund10} to cope with all possible signatures. However, in the special case of $\SimSign$, there is some global control in terms of $\edge{+1}$. We could then perform some counting up to a finite threshold and restrict, like \cite{Bjorklund10}, to a set of global final states.

We can classify many of the non-deterministic one-way models from the literature (most of them defined for $m=1$) in our unifying framework:


\begin{itemize}\itemsep=1ex
\item A \emphcmautomaton \cite{Bjorklund10} is a \myautomaton where, in all
  transitions $\smash{\trans{(\source,\test)}{a}{(\state,\update)}}$, the update function $\update$ is undefined everywhere.
The corresponding language class is denoted by $\ClassCMA(\Sign)$.

\item As an intermediary subclass of \myautomata, we consider \emphngautomata: for all transitions $\trans{(\source,\test)}{a}{(\state,\update)}$ and registers $r$, one requires $\update(r) \in (\dom(\source) \times \Reg) \mathrel{\cup} ([m] \times \{0\})$.
We denote the corresponding language class by $\ClassRCRA(\Sign)$.

\item A \emph{register automaton} \cite{Kaminski1994,DL-tocl08} is a \ngautomaton over $\Sign_{+1}^m = \{\edge{+1}\}$. Moreover, non-guessing \myautomata over $\SimSign$ capture \emph{fresh-register automata} \cite{Tzevelekos2011}, which can dynamically generate data values that do not occur in the history of a run. Actually, this feature is also present in dynamic communicating automata \cite{BH-csr10} and in \cmautomata over $\SimSign$ where a fresh data value is guaranteed by a transition $\trans{(\source,\test)}{a}{(\state,\update)}$ such that $\source_{\edge{\sim}}$ is undefined.

\item  Class register automata are a model of distributed computation: considered over $\Comm$ and $\CommSign$, they subsume dynamic communicating automata \cite{BH-csr10}. In particular, they can handle unbounded process creation and message passing. Updates of the form $\update(r) = (\edge{\forksymb},r')$ and $\update(r) = (\fedge{\msgsymb},r')$ correspond to receiving a process identity from the spawning/sending process. Moreover, when a process requests a message from the thread whose identity is stored in register $r$, a corresponding transition is guarded by $(\edge{\procsymb},r) = (\fedge{\msgsymb},r_0)$ where we assume that every process keeps its identity in some register $r_0$.
\end{itemize}


\setlength\belowcaptionskip{-0.2cm}

 \begin{example}\label{ex:rCRA}
 Let us give a concrete example. Suppose $\Sigma = \{\req,\ack\}$ and $\Data = \N$.  We pursue Example~\ref{ex:server} and build a \ngautomaton $\Aut$ over $\SimSign$ for
 $L = [\{(\req,1) \ldots (\req,n)(\ack,1) \ldots (\ack,n) \mid n \ge 1\}]_{{\SimSign}}$. 
Roughly speaking, there is a request phase followed by an acknowledgment phase, and requests are acknowledged in the order they are received. Figure~\ref{table:rCRA} presents $\Aut$ and an accepting run on $(\req,8)(\req,5)(\ack,8)(\ack,5)$. The states of $\Aut$ are $q_1$ and $q_2$. State $q_1$ is assigned to request positions (first phase), state $q_2$ to acknowledgments (second phase). Moreover, $\Aut$ is equipped with registers $r_1$ and $r_2$. During the first phase, $r_1$ always contains the data value of the current position, and $r_2$ the data value of the $\edge{+1}$-predecessor (unless we deal with the very first position, where $r_2$ is undefined, denoted $\bot$). These invariants are ensured by transitions 1 and 2. In the second phase, by transition 3, position $n+1$ carries the same data value as the first position, which is the only request with undefined $r_2$. Guard $(\edge{\sim},r_2) = \bot$ is actually an abbreviation for $\neg((\edge{\sim},r_2) = (\edge{\sim},r_2))$. By transition 4, position $n+i$ with $i \ge 2$ has to match the request position whose $r_2$-contents equals $r_1$ at $n+i-1$. Finally, $F_{\edge{\sim}} = \{q_2\}$, $F_{\edge{+1}} = \{q_2\}$, and $\Acc = \neg(q_1 \le 0)$.
\eoex
 \end{example}

\newcommand{\height}{\parbox[0pt][3.5ex][c]{0cm}{}}
\newcommand{\dheight}{\parbox[0pt][6ex][c]{0cm}{}}
 
  \begin{figure*}[t]
  \centering
 \scalebox{0.85}{
   $\begin{array}{|r|c|c|c|c|c|l|c|c|c|c|c|}
   \multicolumn{6}{c}{~~~~~~~~~~~~~~~~~~~\textup{Transitions}} & \multicolumn{6}{c}{~~~~~~~~~~~~~~~~~~~~~~~~~~~~~~~~~~\textup{Run}}\\
     \hhline{-------~----}
     & \multicolumn{2}{c|}{\textup{source (}\source\textup{)}} & \textup{guard (}\test\textup{)} & ~\textup{input}~ &  \parbox{2em}{\centering$q$} &
     \parbox{7em}{\centering update ($\update$)} & \height~~~~~~~~~~ & ~\textup{input}~ & \,\textup{state}\, & ~r_1~ & ~r_2~\\
     & \parbox{2em}{\centering$\edge{\sim}$} & \parbox{2em}{\centering$\edge{+1}$} & & & & & & & & &\\
     \hhline{-------~----}
     ~1~ & \height & & & (\req,d) & q_1 & ~\begin{array}{l} r_1 := d \end{array} & & (\req,8) & q_1 & 8 & \bot\\
     \hhline{-------~----}
     ~2~ & \dheight & q_1 &  & (\req,d) & q_1 & ~\begin{array}{l} r_1 := d\\r_2 := (\edge{+1},r_1) \end{array} & & (\req,5) & q_1 & 5 & 8\\
     \hhline{-------~----}
     ~3~ & \height q_1 & q_1 & ~ (\edge{\sim},r_2) = \bot~ & (\ack,d) &  q_2 & ~\begin{array}{l} r_1 := d \end{array} & & (\ack,8) & q_2 & 8 & \bot\\
     \hhline{-------~----}
     ~4~ & \height q_1 & q_2 & ~(\edge{\sim},r_2) = (\edge{+1},r_1)~ & (\ack,d) &  q_2 & ~\begin{array}{l} r_1 := d \end{array} & & (\ack,5) & q_2 & 5 & \bot\\
     \hhline{-------~----}
  \end{array}$
 }
 \caption{A \ngautomaton over $\SimSign$ and a run \label{table:rCRA}}
 \end{figure*}

For the language $L$ from Example~\ref{ex:rCRA}, one can show $L \not\in \ClassCMA(\SimSign)$, using an easy pumping argument. Next, we will see that \ngautomata, though more expressive than \cmautomata, are not yet enough to capture \EMSO~logic. Thus, dropping just one feature such as registers or guessing data values makes \myautomata incomparable to the logic. Assume $m = 2$ and consider $\Sign_\sim^2 = \{\edge{\sim}^1\,,\,\edge{\sim}^2\}$ (cf.\ Example~\ref{ex:sim}).

\begin{lemma}\label{lem:weakng}
$\ClassFO(\Sign_\sim^2) \,\not\subseteq\, \ClassRCRA(\Sign_\sim^2)$.
\end{lemma}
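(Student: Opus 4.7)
The plan is to exhibit a specific first-order formula $\phi \in \FO(\Sign_\sim^2)$ whose language lies outside $\ClassRCRA(\Sign_\sim^2)$. A natural candidate, exposing the crucial limitation, is the ``crossing'' formula
\[\phi ~=~ \exists a,b,c,d\,\bigl(a\edge{\sim}^1 b \,\wedge\, c\edge{\sim}^1 d \,\wedge\, a\edge{\sim}^2 d \,\wedge\, b\edge{\sim}^2 c\bigr)\,,\]
whose satisfaction forces the \emph{same} position $a$ to play simultaneously the roles of $\prev{\sim^2}(d)$ and of $\prev{\sim^1}(\prev{\sim^2}(\prev{\sim^1}(d)))$. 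It is plainly in $\FO(\Sign_\sim^2)$.

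The proof hinges on the following structural observation about a \ngautomaton: the only admissible updates are $(\rel,r')$ and $(k,0)$, so by induction on positions every register value at a position $i$ equals some $\dvalue{j}{k}$ for a position $j$ in the past cone of $i$ under $\sim^1,\sim^2$. The state-and-register configuration at $i$ is therefore determined by the isomorphism type of this past cone together with its labels and data-equality pattern; in particular, two physically distinct positions sharing the same $(\dvalue{}{1},\dvalue{}{2})$-pair are indistinguishable to the automaton.

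Suppose, for contradiction, that some \ngautomaton $\Aut$ satisfies $L(\Aut)=L(\phi)$. I would construct two data words $w,w'$ on which $\Aut$ takes identical runs, while $w\in L(\phi)$ and $w'\notin L(\phi)$. The construction introduces in $w'$ a downstream position whose data tuple duplicates that of the intended $a$-position of $w$, so that at the candidate ``$d$'' the direct $\sim^2$-predecessor is a different physical position from the one reached through the chain $d\to c\to b\to a$. This destroys every potential crossing witness in $w'$, yet the past-cone data-equality and labeling pattern at every position coincides with that in $w$, so that $\Aut$ cannot observe the swap. A pigeonhole argument over the finitely many configuration types of $\Aut$, applied to sufficiently long $w,w'$, guarantees that such an indistinguishable duplication is always available.

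The main obstacle is engineering the construction to simultaneously (i) preserve the data-equality pattern of every past cone across $w$ and $w'$, so that $\Aut$'s runs genuinely coincide, and (ii) definitively break every crossing quadruple in $w'$ without accidentally creating a new one. The non-guessing restriction is what makes this argument go through: a guessing automaton could store an ``oracle'' value in a register so as to certify the positional coincidence demanded by $\phi$, whereas a \ngautomaton can only compare data tuples it has literally encountered and is therefore fooled whenever a duplicate of the $(\dvalue{}{1},\dvalue{}{2})$-tuple of the intended $a$ appears downstream.
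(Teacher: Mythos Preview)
Your proposed witness formula is existential, and that is fatal: $L(\phi)$ is in fact recognised by a \cmautomaton over $\Sign_\sim^2$, hence a fortiori by a \ngautomaton. The automaton nondeterministically tags one position with each of the symbols $a,b,c,d$; the transition assigning tag $b$ requires the $\edge{\sim}^1$-source state to carry tag $a$, the transition assigning $c$ requires the $\edge{\sim}^2$-source to carry $b$, and the transition assigning $d$ requires its $\edge{\sim}^1$-source to carry $c$ and its $\edge{\sim}^2$-source to carry $a$. The global condition $\Acc$ then demands $q_a \le 1$ and $\neg(q_d \le 0)$. Uniqueness of the $a$-tag is precisely what forces the two paths $\prev{\sim^2}(d)$ and $\prev{\sim^1}(\prev{\sim^2}(\prev{\sim^1}(d)))$ to end in the \emph{same} physical position, so the ``positional coincidence'' you want the automaton to miss is certified purely by states and counting---no registers are needed at all. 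Consequently no indistinguishability argument of the kind you outline can succeed for this $\phi$.

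The paper instead takes a \emph{universal} formula---every position must lie on a $4$-cycle---which defeats the tagging trick above. It then builds a word from many nested disjoint cycles, uses pigeonhole on \emph{transitions} (not on configurations) to find two cycles executed identically, and swaps a single data coordinate between them; this merges the two $4$-cycles into one $8$-cycle, violating the formula, while an explicitly modified register assignment shows the run is still accepting. Your structural intuition about past cones is in the right direction, but the specific claim that two positions sharing a $(\dvalue{}{1},\dvalue{}{2})$-pair are ``indistinguishable'' is false (duplicating a data tuple alters the $\edge{\sim}^k$-edges and hence the past cones of later positions), and the duplication construction is never actually carried out.
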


\iflong
\begin{proof}
We determine a formula $\phi \in \FO(\Sign_\sim^2)$ and show, by contradiction, that every \ngautomaton capturing \mbox{$L = L(\phi)$} will necessarily accept a data word outside $L$. Roughly speaking, $L$ consists of words where every position belongs to a pattern that is depicted in Figure~\ref{fig:pattern} and captured by the formula
$\textit{pattern}(x_1,\ldots,x_4) = x_1 \edge{\sim}^1 x_3 \mathrel{\wedge} x_1 \edge{\sim}^2 x_4 \mathrel{\wedge} x_2 \edge{\sim}^2 x_3 \mathrel{\wedge} x_2 \edge{\sim}^1 x_4$.
With this, $\phi = \forall x \exists x_1,\ldots,x_4\, (x \in \{x_1,\ldots,x_4\}
        \wedge  \mathit{pattern}(x_1,\ldots,x_4)) \in \FO(\Sign_\sim^2)$ is the formula for $L$. Suppose that there is a \ngautomaton $\Aut$ over $\Sign_\sim^2$ recognizing $L$. We build a data word $w = (a,d_1) \ldots (a,d_n) \in L$ with $n \in 4\N$  and \mbox{$\{d_1^1,\ldots,d_n^1\} \mathrel{\cap} \{d_1^2,\ldots,d_n^2\} = \emptyset$} by nesting disjoint patterns as depicted in 
Figure~\ref{fig:pattern}: we first create $i_1,\ldots,i_4$, then add $j_1,\ldots,j_4$; the next pattern is to be inserted at $n_1,\ldots,n_4$, etc. We assume that the data values of distinct patterns are disjoint. If we choose $n$ large enough, then there are an accepting run $\xi = (\state_1,\reg_1) \ldots (\state_n,\reg_n)$ of $\Aut$ on $w$ (with transition \mbox{$t_i = \trans{(\source_i,\test_i)}{a}{(\state_i,\update_i)}$} at position $i$) and positions $i_1,\ldots,i_4,j_1,\ldots,j_4$ of $w$ such that $j_1 < i_1$, 
$i_1,\ldots,i_4$ and $j_1,\ldots,j_4$ form two (disjoint) patterns, and
$t_{i_1} = t_{j_1}, \ldots, t_{i_4} = t_{j_4}$.
Now, consider the data word $w'$ that we obtain from $w$ when we swap the second data values of positions $i_1$ and $j_1$. Thus, the data part of $w'$ is \[d_1 \ldots d_{j_1 - 1} \,(d_{j_1}^1,d_{i_1}^2)\, d_{j_1+1} ~\ldots~ d_{i_1 - 1} \,(d_{i_1}^1,d_{j_1}^2)\, d_{i_1+1} \ldots d_n\,.\] We have the situation depicted in Figure~\ref{fig:forbidden}. In particular, $i_1,\ldots,i_4$ do not form a single closed cycle. This violates $\phi$, as $x=i_1$ implies $x_1 \in \{i_1,i_2\}$. Thus, $w' \not\in L$. However, applying transitions $t_1, \ldots, t_n$ still yields an accepting run $\xi' = (\state_1,\reg_1') \ldots (\state_n,\reg_n')$ of $\Aut$ on $w'$. For $i \in [n]$, $\rho_i'$ is then given as follows:
\[\rho_i'(r) = \left\{\begin{array}{ll}
    d_{i_1}^2  & \textup{~if~} \rho_i(r) = d_{j_1}^2 \textup{~and~} i \in \{j_1,j_3\}\\[0.3ex]
    d_{j_1}^2  & \textup{~if~} \rho_i(r) = d_{i_1}^2 \textup{~and~} i \in \{i_1,i_3\}\\[0.3ex]
    d_{i_1}^1  & \textup{~if~} \rho_i(r) = d_{j_1}^1 \textup{~and~} i = j_4\\[0.3ex]
    d_{j_1}^1  & \textup{~if~} \rho_i(r) = d_{i_1}^1 \textup{~and~} i = i_4\\\
    \!\!\!\rho_i(r)  & \textup{~otherwise}
\end{array}\right.\]
One can verify that $\xi'$ is indeed an accepting run on $w'$.
\qed
\end{proof}

\begin{figure}[t]
\centering
{
 \scalebox{0.8}{
\begin{picture}(90,36)(-7.5,-14)
\gasset{Nh=4,Nw=3.8,AHangle=35,AHLength=1.0,AHlength=0.4,Nframe=n,Nfill=n,linewidth=0.1}


\unitlength=0.23em
\node[Nh=7,Nw=6](i1)(0,0){$j_1$}
\node[Nh=7,Nw=6](i2)(30,0){$j_2$}
\node[Nh=7,Nw=6](i3)(70,0){$j_3$}
\node[Nh=7,Nw=6](i4)(100,0){$j_4$}

\node[Nh=7,Nw=5](j1)(10,0){$i_1$}
\node[Nh=7,Nw=5](j2)(40,0){$i_2$}
\node[Nh=7,Nw=5](j3)(60,0){$i_3$}
\node[Nh=7,Nw=5](j4)(90,0){$i_4$}

\node[Nframe=y,Nmr=0,Nw=6](a1)(-10,0){$n_1$}
\node[Nframe=y,Nmr=0,Nw=6](a2)(21,0){$n_2$}
\node[Nframe=y,Nmr=0,Nw=6](a3)(79,0){$n_3$}
\node[Nframe=y,Nmr=0,Nw=6](a4)(110,0){$n_4$}

\drawedge[dash={1.5}0,curvedepth=15](i1,i3){$\edge{\sim}^1$}
\drawedge[dash={1.5}0,curvedepth=15](i2,i4){}
\drawedge[dash={1.5}0,curvedepth=-18,ELside=r](i1,i4){$\edge{\sim}^2$}
\drawedge[dash={1.5}0,curvedepth=-10](i2,i3){}

\drawedge[curvedepth=10](j1,j3){}
\drawedge[curvedepth=10](j2,j4){}
\drawedge[curvedepth=-14](j1,j4){}
\drawedge[curvedepth=-6](j2,j3){}

\end{picture}
}
}
\caption{Nested patterns\label{fig:pattern}}
\end{figure}
\begin{figure}[t]
\centering
{
 \scalebox{0.8}{
\begin{picture}(90,30)(-7.5,-12)
\gasset{Nh=4,Nw=3.8,AHangle=35,AHLength=1.0,AHlength=0.4,Nframe=n,Nfill=n,linewidth=0.1}


\unitlength=0.23em
\node[Nh=7,Nw=6](i1)(0,0){$j_1$}
\node[Nh=7,Nw=6](i2)(30,0){$j_2$}
\node[Nh=7,Nw=6](i3)(70,0){$j_3$}
\node[Nh=7,Nw=6](i4)(100,0){$j_4$}

\node[Nh=7,Nw=5](j1)(10,0){$i_1$}
\node[Nh=7,Nw=5](j2)(40,0){$i_2$}
\node[Nh=7,Nw=5](j3)(60,0){$i_3$}
\node[Nh=7,Nw=5](j4)(90,0){$i_4$}

\drawedge[dash={1.5}0,curvedepth=15](i1,i3){}
\drawedge[dash={1.5}0,curvedepth=15](i2,i4){}
\drawedge[curvedepth=-17](i1,j4){}
\drawedge[dash={1.5}0,curvedepth=-10](i2,i3){}

\drawedge[curvedepth=10](j1,j3){}
\drawedge[curvedepth=10](j2,j4){}
\drawedge[dash={1.5}0,curvedepth=-17](j1,i4){}
\drawedge[curvedepth=-6](j2,j3){}

\end{picture}
}
}
\caption{Merging patterns\label{fig:forbidden}}
\end{figure}
\fi

The proof of Lemma \ref{lem:weakng} can be adapted to show $\ClassFO(\CommSign) \not\subseteq \ClassRCRA(\CommSign)$. It reveals that \ngautomata can in general not detect \emph{cycles}. However, this is needed to capture $\FO$ logic \cite{Hanf1965}. In Section~\ref{sec:autvslogic}, we show that \emph{full} \myautomata capture $\FO$ and, as they are closed under projection, also $\EMSO$ logic.
Closure under projection is meant in the following sense. Let $\Gamma$ be a non-empty finite alphabet. Given $\Sign = (\RelSymb,\RelInt)$, we define another signature $\Sign_\Gamma$ for data words over $(\Sigma \times \Gamma) \times \Data^m$. Its set of relation symbols is $\{\nrel \mid \rel \in \Sign\}$. For $w \in ((\Sigma \times \Gamma) \times \Data^m)^\ast$, we set $i \mathrel{\Int{\nrel}{w}} j$ iff $i \mathrel{\Int{\rel}{{\mathit{proj}_\Sigma(w)}}} j$. Hereby, the projection $\mathit{proj}_\Sigma$ just removes the $\Gamma$ component while keeping $\Sigma$ and the data values. For $\mathcal{C} \in \{\ClassCRA,\ClassRCRA,\ClassCMA\}$, we say that $\mathcal{C}(\Sign)$ is \emph{closed under projection} if, for every $\Gamma$ and $L \subseteq ((\Sigma \times \Gamma) \times \Data^m)^\ast$, $L \in \mathcal{C}(\Sign_\Gamma)$ implies $\mathit{proj}_\Sigma(L) \in \mathcal{C}(\Sign)$.

\begin{lemma}\label{lem:projection}
For every signature $\Sign$, $\ClassCRA(\Sign)$, $\ClassRCRA(\Sign)$, and $\ClassCMA(\Sign)$ are closed under union, intersection, and projection. They are, in general, not closed under complementation.
\end{lemma}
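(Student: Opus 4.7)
The plan is to give direct constructions for each of union, intersection, and projection that stay inside the chosen class ($\ClassCRA$, $\ClassRCRA$, or $\ClassCMA$), and, for the non-closure part, to exhibit one witness language per class. For \emph{union}, I take two automata $\Aut_i = (Q_i, R_i, \Trans_i, (F_\rel^i)_{\rel \in \Sign}, \Acc_i)$ over $\Sign$ and form the disjoint union of their states, registers, transitions, and local final-state sets. Since the state sets are disjoint and every transition's source states and target state live in the same component, the choice of transition at any position picks out states from a single component; to rule out ``mixing'' at distinct roots of a possibly disconnected $\Sign$-graph, I use the global acceptance condition $(\Acc_1 \wedge \bigwedge_{q \in Q_2}(q \le 0)) \vee (\Acc_2 \wedge \bigwedge_{q \in Q_1}(q \le 0))$, which forces the whole run to live in one component. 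No new registers and no new guessing updates are introduced, so $\ClassRCRA$ and $\ClassCMA$ are preserved as well.

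For \emph{intersection}, I take the Cartesian product $Q_1 \times Q_2$ of states, the disjoint union $R_1 \sqcup R_2$ of registers, and synchronize transitions by pairing $t_1 \in \Trans_1$ and $t_2 \in \Trans_2$ with identical label, combining their source maps, guards, and updates componentwise; disjointness of the registers ensures that the combined guards are well-formed. The local final-state set for each $\rel$ is $F_\rel^1 \times F_\rel^2$. The subtle point, and the main obstacle, is the global acceptance: an atomic constraint $q \le N$ of $\Acc_i$ with $q \in Q_i$ cannot be read off a single product state, since the counting atoms of $\Acc$ speak only about single state names. I rewrite it as the finite disjunction, over all tuples $(N_{q'})_{q' \in Q_{3-i}}$ of naturals with $\sum_{q'} N_{q'} \le N$, of $\bigwedge_{q' \in Q_{3-i}} \bigl((q, q') \le N_{q'}\bigr)$; this is a finite boolean combination of admissible atoms, and the conjunction of the two rewritten acceptance conditions then serves as global acceptance of the product.

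For \emph{projection}, given $\Aut$ over $\Sign_\Gamma$ recognizing $L \subseteq ((\Sigma \times \Gamma) \times \Data^m)^\ast$, the definition of $\Sign_\Gamma$ provides a canonical bijection $\nrel \leftrightarrow \rel$ under which the interpretation of $\nrel$ on any data word $w$ coincides with that of $\rel$ on $\mathit{proj}_\Sigma(w)$. I obtain $\Aut'$ over $\Sign$ by replacing each transition label $(a,b) \in \Sigma \times \Gamma$ by $a$ and applying this bijection inside source maps, guards, and updates. Runs of $\Aut'$ on $\mathit{proj}_\Sigma(w)$ are then in bijection with runs of $\Aut$ on data words projecting to it, the non-determinism in label choice accounting for the forgotten $\Gamma$-component; hence $L(\Aut') = \mathit{proj}_\Sigma(L(\Aut))$.

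For \emph{non-closure under complementation}, one example per class suffices. Over $\RaSign$ with $m = 1$, the language $\{w : \exists i < j,\ \dvalue{i}{1} = \dvalue{j}{1}\}$ is recognized by a one-register non-guessing automaton that guesses the first occurrence and checks equality at a later position, while its complement ``all data values pairwise distinct'' is not: a pigeon-hole argument on a data word with many pairwise distinct values shows that at some position some data value must be absent from all registers, and the only guessing updates available in $\RaSign$, of the form $(1, B)$, draw from a bounded window around the current position and so cannot compensate; swapping a sufficiently far data value for a repetition then preserves the run, contradicting complement recognition. This gives non-closure for $\ClassCRA(\RaSign)$ and $\ClassRCRA(\RaSign)$; for $\ClassCMA$, the analogous conclusion is established in~\cite{Bjorklund10}.
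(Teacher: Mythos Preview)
Your constructions for union, intersection, and projection are correct and, in fact, spell out what the paper leaves as ``standard automata-theoretic constructions'' and ``projection preserves the graph structure''. The rewriting of an atom $q \le N$ in the product as a finite disjunction over tuples $(N_{q'})$ with $\sum_{q'} N_{q'} \le N$ is the right way to handle global acceptance; note that negated atoms are taken care of automatically since you are producing a boolean formula and can push negation through.

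For non-closure under complementation you take a genuinely different route from the paper. The paper observes that, for $m=0$, the three classes $\ClassCRA$, $\ClassRCRA$, $\ClassCMA$ coincide (no data means registers and guessing are vacuous), and then invokes the known non-complementability of communicating automata over the signature $\Sign_\Proc^0$ from~\cite{Bollig06}; a single citation thus settles all three classes at once. Your approach instead works over $\RaSign$ with $m=1$ and proves the register-automaton style result directly. This is legitimate but costs more: for $\ClassCRA(\RaSign)$ your pigeon-hole sketch is correct in spirit, yet the presence of guessing updates $(1,B)$ means one must argue carefully that (i) every register value at position $i$ has a source in $\{1,\ldots,i+B\}$, (ii) a value $d_j$ with $j$ small enough is absent from all registers at all positions $\ge i$, and (iii) replacing some far-away $d_{i'}$ by $d_j$ leaves every guard and every update constraint intact. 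These points are routine but not entirely immediate from your one-sentence summary. The paper's $m=0$ shortcut avoids this analysis altogether; your approach has the advantage of staying within the single-data-value world and not importing the communicating-automata machinery.
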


\iflong
\begin{proof}
Closure under union and intersection follows standard automata-theoretic constructions. Closure under projection holds since projection preserves the graph structure of a data word. For non-complementability, we can rely on the corresponding result for communicating automata \cite{Bollig06}. Roughly speaking, a communicating automaton is a dynamic communicating automaton with a fixed set of at least two processes $\Proc$. It can be identified as a special case of our framework: We let $m=0$, since the number of processes is fixed. Moreover, $\Sigma = \{\,\send{c}{d}\,,\rec{c}{d} \mid c,d \in \Proc$ such that $c \neq d\,\}$ is the set of actions. Finally, we define the signature $\Sign_\Proc^0 = \{\edge{\procsymb}\,,\,\fedge{\msgsymb}\}$ as the straightforward restriction of $\CommSign$ (cf.\ Example~\ref{ex:comm}) to this bounded case. Speaking in terms of our framework, \cite{Bollig06} indeed shows that \myautomata (or, as $m=0$, \cmautomata) over $\Sign_\Proc^0$ are not closed under complementation.
\qed
\end{proof}
\fi

\section{Realizability of EMSO Specifications}\label{sec:autvslogic}

In this section, we solve the realizability problem for
$\EMSO$ specifications:

\begin{theorem}\label{thm:main} For all signatures $\Sign$, $\ClassEMSO(\Sign) \subseteq \ClassCRA(\Sign)$. An automaton can be computed in elementary time and is of elementary size.
\end{theorem}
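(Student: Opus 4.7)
The plan is to follow the classical locality-based route: reduce $\EMSO$ to $\FO$ via closure under projection (Lemma~\ref{lem:projection}), put $\FO$ into a Hanf-local normal form, build a \emph{sphere automaton} that tags each position with the isomorphism type of its $B$-neighborhood, and use the global acceptance condition $\Acc$ of a class register automaton to verify the Hanf threshold condition. For the first step, a formula $\phi = \exists X_1 \cdots \exists X_n\, \psi$ with $\psi \in \FO(\Sign)$ is handled by enriching the alphabet to $\Sigma \times \{0,1\}^n$, encoding the valuations of $X_1, \ldots, X_n$ as the extra labels, building a CRA over $\Sign_\Gamma$ with $\Gamma = \{0,1\}^n$ for $\psi$, and projecting $\Gamma$ away using Lemma~\ref{lem:projection}. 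This reduces the problem to $\FO(\Sign)$.

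Conditions (2) and (3) on signatures ensure that the underlying $\Sign$-graph has degree at most $2|\Sign|$, so Hanf's locality theorem applies: every $\FO(\Sign)$ sentence $\psi$ is equivalent to a boolean combination of threshold statements ``at least $N$ positions have $B$-sphere isomorphic to $\tau$'', for finitely many isomorphism types $\tau$ of $B$-spheres, with $B$ and $N$ elementary in $|\psi|$. The core construction is then a \emph{sphere automaton} $\Aut_B$ that decorates each position $i$ with the isomorphism type $\tau_i$ of $\Sph{B}{w}{i}$. At position $i$, $\Aut_B$ nondeterministically guesses $\tau_i$ and loads the data values occurring in that sphere into registers using updates of the form $\update(r) = (k,B)$, which is precisely what the local-guessing feature of class register automata provides. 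Consistency with the previously guessed types $\tau_{\prev{\rel}(i)}$ for each $\rel \in \Sign$ is enforced through the source function $\source$ and the guard $\test$: the spheres centered at $i$ and at $\prev{\rel}(i)$ must coincide on their common subgraph of radius $B-1$, including the data-equality pattern, which is checkable by comparing registers across the two configurations.

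Once $\Aut_B$ is built, the state $q_i$ records the type $\tau_i$, so a threshold statement ``at least $N$ positions have $B$-sphere $\tau$'' translates into the atomic acceptance constraint $\neg (q_\tau \le N-1)$, and boolean combinations yield a formula $\Acc$ of exactly the permitted shape; local final states may be set to all states. Composing with the projection from the first step gives a CRA of elementary size recognizing $L(\phi)$, since Hanf's bounds and the number of $B$-sphere types are both elementary in $|\phi|$ and $|\Sign|$. The \textbf{main obstacle} is the correctness of the sphere automaton: $\Aut_B$ reads left-to-right, so it can certify a position's sphere only by cross-checking its guess against neighbors accessible via the backward pointers $\prev{\rel}$; the combination of $(k,B)$-updates (to pull sphere data values into registers), the guard $\test$ (to check the data-equality pattern across several $\prev{\rel}(i)$-configurations simultaneously), and the local final states (to rule out pending obligations at boundary positions) must be orchestrated carefully so that every edge, label, and data coincidence inside each guessed sphere is eventually certified, and so that every certified sphere is actually realized. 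This is exactly the place where the extension to a dynamic setting with unbounded creation (in contrast to the fixed-process setting of \cite{Bollig06}) demands the registers and the bounded-radius data guessing.
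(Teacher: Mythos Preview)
Your proposal is correct and follows the same route as the paper: extend the alphabet by $\Gamma = 2^{\{1,\ldots,n\}}$ and use projection (Lemma~\ref{lem:projection}) to reduce $\EMSO$ to $\FO$, apply Hanf's normal form (Theorem~\ref{thm:Hanf}), build a sphere automaton (Proposition~\ref{prop:sphereaut}) that tags each position with its $B$-sphere type, and encode the threshold condition directly in the global acceptance $\Acc$.

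One point where the paper is more specific than your sketch: the sphere automaton is \emph{not} implemented by guessing $\tau_i$ and checking radius-$(B{-}1)$ overlap with $\tau_{\prev{\rel}(i)}$. Instead, the state at position $i$ is a \emph{set} of extended spheres $(S,\alpha,\scolor)$, one for each nearby center $i_c$, where $\alpha$ marks the current position inside $S$ and $\scolor$ is a bounded color used to separate overlapping isomorphic instances. Each such triple is propagated along edges by shifting $\alpha$ (conditions T2--T6), and the registers $(E,k)$ hold the data values guessed (via $(k,B)$-updates) at the minimal node of each instance, to be cross-checked against the actual input and against each other at every incoming edge (T7--T8). It is this propagation-plus-coloring discipline, together with monotonicity of the signature, that makes the infinite-descent argument for cycle detection in the correctness proof go through; a pure overlap check between centered spheres would not by itself rule out the ``merging'' phenomenon of Lemma~\ref{lem:weakng}. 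Your high-level plan is right, but the sphere automaton you describe would need to be unfolded into essentially this mechanism.
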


Classical procedures that translate formulas into automata follow an inductive approach, use two-way mechanisms and tools such as pebbles, or rely on reductions to existing translations. There is no obvious way to apply any of these techniques to prove our theorem.

We therefore follow a technique from \cite{Bollig06}, which is based on ideas from \cite{ThoPOMIV96,SchwentickB99}. We first transform the first-order kernel of the formula at hand into a normal form due to Hanf \cite{Hanf1965}. According to that normal form, satisfaction of a first-order formula wrt.\ data word $w$ only depends on the spheres that occur in $\Graph{w}$, and on how often they occur,
counted up to a threshold. The size of a sphere is bounded by a radius that depends on the formula. The threshold can be computed from the radius and $|\Sign|$. We can indeed
apply Hanf's Theorem, as the structures that we consider have \emph{bounded
  degree}: every node/word position has at most $|\Sign|$ incoming and at most
$|\Sign|$ outgoing edges. In a second step, we transform the formula in normal form into a \myautomaton.

Recall that $\Sph{B}{\graph}{i}$ denotes the $B$-sphere of graph/data word $\graph$ around $i$ (cf.\ Section~\ref{sec:definitions}). Its size (number of nodes) is bounded by $\mathit{maxSize} \df (2|\Sign|+2)^{B}$. Let $\Spheres{B} = \{\Sph{B}{\graph}{i} \mid \graph=(V,\ldots)$ is an $\Sign$-graph and $i \in V\}$. We do not distinguish between isomorphic structures so that $\Spheres{B}$ is finite.

\begin{theorem}[cf.\ \cite{Hanf1965,BK2011}]\label{thm:Hanf}
Let $\phi \in \FO(\Sign)$. One can compute, in elementary time, $B \in \N$ and a boolean formula $\beta$ over $\{\,\textup{`}\sphere \le N\textup{'} \mid \sphere \in \Spheres{B}$
and $N \in \N\}$ such that $L(\phi)$ is the set of data words that satisfy $\beta$.  Here, we say that $w = w_1 \ldots w_n$ satisfies atom $\sphere \le N$ iff $|\{i \in [n] \mid \Sph{B}{w}{i} \cong \sphere\}| \le N$. The radius $B$ and the size of $\beta$ and its constants $N$ are elementary in $|\phi|$ and $|\Sign|$.
\end{theorem}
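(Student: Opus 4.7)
The plan is to reduce to the classical Hanf locality theorem for structures of bounded degree. First, by conditions~(2) and~(3) of the signature definition, every $\Sign$-graph $\Graph{w}$ has at most $|\Sign|$ incoming and $|\Sign|$ outgoing edges per node, so its underlying undirected graph has degree at most $2|\Sign|$. Moreover, satisfaction of a formula in $\FO(\Sign)$ is invariant under $\Sign$-isomorphism of $\Graph{w}$: the only data-comparison atom allowed, $\dvalue{x}{k} = \dvalue{x}{l}$, is already encoded in the node label $\labeltwo(x) \in \Part{m}$. Hence $\phi$ can be viewed as an ordinary first-order sentence over the finite relational vocabulary of $\Graph{w}$, interpreted in a class of bounded-degree structures over a finite labelling alphabet.

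The second step is to apply Hanf's theorem in its standard bounded-degree form (see, e.g., \cite{BK2011}): for every first-order sentence of quantifier rank $q$ over $d$-bounded-degree structures, there exist a radius $B$ and a threshold $N$, both elementary in $q$ and $d$, such that any two such structures agree on all formulas of quantifier rank $\le q$ provided that, for every isomorphism type of $B$-sphere $\sphere$, they contain $\sphere$ either exactly the same number of times, or both strictly more than $N$ times. Instantiate this with $d = 2|\Sign|$ and $q = \mathit{qr}(\phi)$ to obtain concrete $B$ and $N$.

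The third step is to construct $\beta$ explicitly. Enumerate the finite set $\Spheres{B}$, and call a \emph{profile} any map $\pi : \Spheres{B} \to \{0,1,\ldots,N,{>}N\}$. For each profile $\pi$, attempt to build a data word $w_\pi$ (of size at most $N \cdot |\Spheres{B}| \cdot \mathit{maxSize}$) whose $B$-sphere multiset matches $\pi$; if one exists, evaluate $\phi$ on $w_\pi$ directly. By Hanf's theorem, this fixes the truth value of $\phi$ on every word of profile $\pi$. Let $\beta$ be the disjunction, over realizable profiles $\pi$ with $w_\pi \models \phi$, of a conjunction $\bigwedge_{\sphere \in \Spheres{B}} \chi_\pi(\sphere)$, where each $\chi_\pi(\sphere)$ is a boolean combination of atoms of the form $\sphere \le N'$ with $0 \le N' \le N$, expressing that $\sphere$ occurs exactly $\pi(\sphere)$ times when $\pi(\sphere) \le N$, and strictly more than $N$ times when $\pi(\sphere) = {>}N$ (using $\sphere \le k \wedge \neg(\sphere \le k-1)$ and $\neg(\sphere \le N)$ respectively).

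The main concern is not correctness but elementarity. The Hanf radius $B$ and threshold $N$ are elementary in $q$ and $|\Sign|$; the cardinality of $\Spheres{B}$ is elementary in $B$, $|\Sigma|$, $m$, and $|\Sign|$; the number of profiles is then elementary in these parameters; each candidate witness word $w_\pi$ has elementary size; and model-checking $\phi$ on a fixed data word takes time elementary in $|\phi|$ and the word size. Composing these bounds delivers the announced elementary bounds on $B$, on $|\beta|$, and on the constants $N'$ appearing in $\beta$.
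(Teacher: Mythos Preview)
Your first two steps match the paper exactly: rewrite the data atom $\dvalue{x}{k}=\dvalue{x}{l}$ as a condition on the partition label $\nu(x)$, obtaining an ordinary first-order sentence over $\Graph{w}$, and then invoke Hanf's theorem for structures of degree $\le 2|\Sign|$. The paper does nothing more here.

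The gap is in your third step. You assert that every realizable profile $\pi$ has a witness of size at most $N\cdot|\Spheres{B}|\cdot\mathit{maxSize}$, but you give no argument. This is not obvious: if you start from a large witness and delete positions, you will in general change the $B$-spheres of the surviving positions near the boundary, so the profile is not preserved. Proving that a small witness always exists (with an elementary bound) \emph{is} the non-trivial content of the effective Hanf normal form; this is why the paper does not attempt a direct construction but simply cites \cite{BK2011} for the triply-exponential algorithm. Your argument, as written, would fail on any profile that happens to be realizable only by words larger than your bound, and you have not excluded that case.

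There is a second, more structural issue. You propose to enumerate candidate data words $w_\pi$ and then ``evaluate $\phi$ on $w_\pi$ directly''. Both steps require computing the relations $\rel^{w_\pi}$, hence the interpretation $\RelInt$. But $\RelInt$ is an arbitrary function satisfying (1)--(4); the paper is explicit that the translation must be ``symbolic and independent of $\RelInt$'', and the theorem statement takes only $\phi$ and $\Sign$ as input. The paper avoids this by working entirely at the level of $\Sign$-graphs: it applies \cite{BK2011} to the translated sentence $\tilde\phi$ over the class of all $\Sign$-graphs, which needs no knowledge of $\RelInt$. You could make the same move---enumerate small $\Sign$-graphs instead of data words and evaluate $\tilde\phi$---but you would still owe the small-model lemma from the previous paragraph.
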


\iflong
\begin{proof}
A simple but crucial observation is that there exists a first-order sentence that is
equivalent to $\phi$ but talks about $\Graph{w}$ rather than
$w$. We simply write \mbox{$\lambda(x) = a$} instead of \mbox{$\letter{x} = a$}, and
$\bigvee_{\eta \in \mathcal{P}} \nu(x) = \eta$ instead of $\dvalue{x}{k} =
\dvalue{x}{l}$ where $\mathcal{P} \subseteq \Part{m}$ is the set of partitions of $[\param]$ such
that $k$ and $l$ occur in the same set. As $\MSO(\Sign)$-formulas cannot
distinguish between data words that induce the same graph, the boolean formula $\beta$ in normal form exists due to \cite{Hanf1965}. Actually, $\beta$ can be computed in triply exponential time \cite{BK2011}.
\qed
\end{proof}
\fi

By Theorem~\ref{thm:Hanf}, it will be useful to have a \myautomaton that, when
reading a position $i$ of data word $w$, outputs the sphere of $w$ around $i$.
Its construction is actually the main difficulty in the proof of Theorem~\ref{thm:main}, as spheres have to be computed ``in one go'', i.e., reading the word from left to right, while
accessing only certain configurations from the past.

\begin{proposition}\label{prop:sphereaut}
  Let \mbox{$B \in \N$}. One can compute, in elementary time, a \myautomaton $\Aut_B = (\States,\Reg,\Trans,{(\Local_\rel)}_{\rel \in \Sign},\mytrue)$ over $\Sign$, as well as a mapping $\pi: \States \to \Spheres{B}$ such that $L(\Aut_B) = (\Sigma \times \Data^\param)^\ast$ and, for every data word $w=w_1 \ldots w_n$, every accepting run $(q_1,\reg_1) \ldots
  (q_n,\reg_n)$ of $\Aut_B$ on $w$, and every $i \in [n]$,
  $\pi(q_i) \cong \Sph{B}{w}{i}$. Moreover, $|\States|$ and $|\Reg|$ are elementary in $B$ and $|\Sign|$.
\end{proposition}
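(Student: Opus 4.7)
The overall plan is to store, at each position $i$ of a data word $w$, the complete $B$-sphere around $i$: the isomorphism type of $\Sph{B}{w}{i}$ (with distinguished center) is carried in the state, while the concrete data values attached to each node of that sphere live in the registers. Concretely, I set $\States = \Spheres{B}$ and let $\pi$ be the identity. For every abstract slot $v$ that can occur in a sphere of $\Spheres{B}$ and every $k \in [m]$, introduce a register $r_{v,k}$; since every sphere has at most $\mathit{maxSize} = (2|\Sign|+2)^B$ nodes, both $|\States|$ and $|\Reg|$ are elementary in $B$ and $|\Sign|$.

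Each transition $\trans{(\source,\test)}{a}{(q,\update)}$ is then determined by a guess of the current sphere $q$, predecessor spheres $\source_\rel$ for $\rel \in \dom(\source)$, and, for each such $\rel$, an embedding $\eta_\rel$ of the intended overlap between $\source_\rel$ and $q$, i.e.\ a partial isomorphism identifying which nodes of $\source_\rel$ are supposed to refer to the same world positions as nodes of $q$. The guard $\test$ enforces: (i) $a$ is the label at the center of $q$; (ii) the partition of $[m]$ induced by the input data tuple agrees with $\labeltwo$ at that center; and (iii) for each overlap node $u$ and each $k$, the register $r_{\eta_\rel^{-1}(u),k}$ at $\prev{\rel}(i)$ agrees across all predecessors that see $u$, and, when $u$ is the center of $q$, also agrees with the current $k$-th input value. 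The update $\update$ fills each new register $r_{u,k}$ by one of three rules: if $u$ is the center of $q$, set $\update(r_{u,k}) = (k,0)$; if $u$ lies in the overlap with some $\prev{\rel}(i)$, copy via $\update(r_{u,k}) = (\rel,r_{\eta_\rel^{-1}(u),k})$; otherwise $u$ is a fresh slot of $q$ which, if $q$ is really $\Sph{B}{w}{i}$, must be a strictly future position of $w$, and we guess via $\update(r_{u,k}) = (k,B)$. This last step is sound precisely because any such $u$ is within distance $B$ of $i$, so $\dvalue{u}{k} \in \Data_B^k(i)$.

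Local final states implement the sphere frontier: for each $\rel \in \Sign$, $\Local_\rel$ is the set of spheres whose center has no outgoing $\rel$-edge, which prevents a position $i$ with $\nextwp{\rel}(i)$ undefined from carrying a sphere that fabricates a $\rel$-successor. The global acceptance condition is $\Acc = \mytrue$. Completeness of the construction is then immediate: for any $w$, the canonical run that labels each $i$ with the true $\Sph{B}{w}{i}$ and fills each register with the true data value passes every guard, update and local constraint, since all guesses can be chosen to match the truth. Soundness reduces to an induction on the nodes of the sphere: the guards force any two spheres labelling $\rel$-adjacent positions to agree on their overlap (both structurally and on data), the center conditions pin down the node being read, and the local final states forbid phantom edges at the frontier, so the only sphere consistent with $w$ and the accepting-run constraints at $i$ is $\Sph{B}{w}{i}$ itself. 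The main obstacle is exactly this forward-looking portion of the sphere—positions of $w$ strictly to the right of $i$ yet still within distance $B$; the $(k,B)$ guessing primitive available to \myautomata (but not to \ngautomata, cf.\ Lemma~\ref{lem:weakng}) is what allows one to commit to those data values at $i$ and have them validated at the correct future positions through the guards of subsequent transitions. Finally, $|\Spheres{B}|$, the number of admissible embeddings $\eta_\rel$, and hence $|\Trans|$ are all elementary in $B$ and $|\Sign|$, so the whole construction is of elementary size and produced in elementary time.
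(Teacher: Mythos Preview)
Your construction diverges from the paper's in a way that leaves a real gap in the soundness direction. In the paper, the state at position $i$ is not a single sphere but a \emph{set} of extended spheres $(\sphere,\sactive,\scolor)$, one for each nearby center $i_c$ with $\mydist^w(i_c,i)\le B$; the active node $\sactive$ records where the current position sits inside the sphere of $i_c$, and the color separates overlapping isomorphic spheres. This lets the automaton carry the sphere of $i_c$ along every path through it and check, at each position, that the local label, data partition, and incident edges agree with what that sphere predicts. Correctness then follows from a path-simulation lemma and an infinite-descent argument exploiting monotonicity of the signature.

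Your approach stores only $\Sph{B}{w}{i}$ at $i$ and replaces the per-center bookkeeping by guessed partial isomorphisms $\eta_\rel$ between adjacent spheres. But you give no argument that a wrong embedding (or a wrong sphere) is rejected. Two concrete problems: (a) your local final states only forbid phantom $\rel$-successors of the \emph{center}; a non-center node $u$ of $q$ might claim a $\rel$-successor that does not exist in $w$, and nothing at position $i$ detects this --- you implicitly rely on it being caught at the position $i_u$ corresponding to $u$, which presupposes that the chain of embeddings from $i$ to $i_u$ tracks $u$ correctly; (b) for that chain to be correct, the $\eta_\rel$'s must compose to the identity on overlap positions, yet in the presence of automorphisms of the overlap (or of repeated data values, which make register checks vacuous) nothing in your guards forces this. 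Your ``induction on the nodes of the sphere'' is therefore circular: it assumes the embeddings are faithful in order to verify the sphere whose correctness is needed to pin down the embeddings. This is precisely the difficulty the paper's active-node/color machinery and the register invariant of its simulation lemma are designed to resolve; your proposal has not replaced that mechanism with anything equivalent.
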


The proposition is proved below. Let us first show how we can use it, together with Theorem~\ref{thm:Hanf}, to translate an $\EMSO$ formula into a \myautomaton.

\begin{proof}[of Theorem~\ref{thm:main}]
Let $\phi = \exists X_1 \ldots \exists X_n\, \psi \in
\EMSO(\Sign)$ be a sentence with $\psi \in \FO(\Sign)$ (we also assume $n \ge 1$). Since Theorem~\ref{thm:Hanf} applies to first-order formulas only, we extend $\Sigma$ to $\Sigma \times \Gamma$ where $\Gamma = 2^{\{1,\ldots,n\}}$. Consider the extended signature $\myhat{\Sign}$ (cf.\ Section~\ref{sec:CRA}). From $\psi$, we obtain a formula $\smash{\myhat{\psi}} \in \FO(\smash{\myhat{\Sign}})$ by replacing $\letter{x} = a$ with $\bigvee_{M \in \Gamma} \letter{x} = (a,M)$ and $x \in X_j$ with $\bigvee_{\substack{a \in \Sigma,\,M \in \Gamma}} \letter{x} = (a,M \cup \{j\})$.
Consider the radius $B \in \N$ and the normal form $\beta_\Gamma$ for $\myhat{\psi}$ due to Theorem~\ref{thm:Hanf}.
Let $\Aut_B = (\States,\Reg,\Trans,{(\Local_\rel)}_{\rel \in \myhat{\Sign}},\mytrue)$ be the \myautomaton over $\Sign_\Gamma$ from Proposition~\ref{prop:sphereaut} and $\pi$ be the
associated mapping. The global acceptance
condition of $\Aut_B$ is obtained from $\beta_\Gamma$ by replacing every atom $\sphere \le N$ with $\pi^{-1}(\sphere) \le N$ (which can be expressed as a suitable boolean formula). We hold $\Aut_B'$, a \myautomaton satisfying $L(\Aut_B') = L(\myhat{\psi})$. Exploiting closure under projection (Lemma~\ref{lem:projection}), we obtain a \myautomaton over $\Sign$ that recognizes $L(\phi) = \mathit{proj}_\Sigma(L(\myhat{\psi}))$.
\qed
\end{proof}

\paragraph{The Sphere Automaton.}

In the remainder of this section, we construct the \myautomaton $\Aut_B = (\States,\Reg,\Trans,{(\Local_\rel)}_{\rel \in \Sign},\mytrue)$ from Proposition~\ref{prop:sphereaut}, together with $\pi: \States \to \Spheres{B}$. The idea is that, at each
position $i$ in the data word $w$ at hand, $\Aut_B$ guesses the $B$-sphere
$\sphere$ of $w$ around $i$. To verify that the guess is correct, i.e., $\sphere
\cong \Sph{B}{w}{i}$, $\sphere$ is passed to each position that is connected to $i$ by an edge in $\Graph{w}$. That new position locally checks label and data
equalities imposed by $\sphere$, then also forwards $\sphere$ to its
neighbors, and so on. Thus, at any time, several local patterns have to be validated
simultaneously so that a state $q \in Q$ is actually a \emph{set} of spheres.
In fact, we consider \emph{extended} spheres $\esphere = (\sphere,\sactive,\scolor)$
where $\sphere = (U,(\rel^\esphere)_{\rel \in \Sign},\labelone,\labeltwo,\scenter)$ is
a sphere (with universe $U$ and sphere center $\gamma$), $\sactive \in U$ is
the \emph{active node}, and $\scolor$ is a color from a finite set, which will be specified later. The active node $\alpha$ indicates the current context, i.e., it corresponds to the position currently read.

Let $\eSpheres{B}$ denote the set of extended spheres, which is finite up to isomorphism. For $\esphere = (\sphere,\sactive,\scolor) \in \eSpheres{B}$, $\sphere = (U,(\rel^\esphere)_{\rel \in \Sign},\labelone,\labeltwo,\scenter)$, and $j \in U$, we let
$\esphere[j]$ refer to the extended sphere $(\sphere,j,\scolor)$ where the
active node $\alpha$ has been replaced with $j$. Now suppose that the state $q$ of
$\Aut_B$ that is reached after reading position $i$ of data word $w$ contains
$\esphere = (\sphere,\sactive,\scolor)$. Roughly speaking, this means that the
neighborhood of $i$ in $w$ shall look like the neighborhood of $\alpha$ in
$\sphere$. Thus, if $\sphere$ contains $j'$ such that $\sactive \rel^\esphere j'$,
then we must find $i'$ such that $i \rel^w i'$ in the data word. Local
final states will guarantee that $i'$ indeed exists. Moreover, the state assigned
to $i'$ in a run of $\Aut_B$ will contain the new proof obligation
$\esphere[j']$ and so forth. Similarly, an edge in (the graph of) $w$ has to be present in
spheres, unless it is beyond their scope, which is limited by $B$. All this is reflected below, in conditions T2--T6 of a transition.

We are still facing two major difficulties. Several \emph{isomorphic} spheres have to be verified simultaneously, i.e., a state must be allowed to include isomorphic spheres in different contexts. A solution to this problem is provided by the additional coloring $\scolor$. It makes sure that centers of overlapping isomorphic spheres with different colors refer to distinct nodes in the input word. To put it differently, for a given position $i$ in data word $w$, there may be $i'$ such that $0 < \mydist^w(i,i') \le 2B + 1$ and $\Sph{B}{w}{i} \cong \Sph{B}{w}{i'}$. Fortunately, there cannot be more than $(2|\Sign|+1) \cdot \mathit{maxSize}^2$ such positions. As a consequence, the coloring $\scolor$ can be restricted to the set $\{1,\ldots,(2|\Sign|+1) \cdot \mathit{maxSize}^2 +1\}$.

Implementing these ideas alone would do without registers and yield a \cmautomaton. But this cannot work due to Lemma~\ref{lem:weakng}. Indeed, a faithful simulation of cycles in spheres has to make use of data values. They need to be anticipated, stored in registers, and locally compared with current data values from the input word. We introduce a register $(\esphere,k)$ for every extended sphere $\esphere$ and $k \in [m]$. To get the idea behind  this, consider a run $(q_1,\rho_1) \ldots (q_n,\rho_n)$ of $\Aut_B$ on $w = (a_1,d_1) \ldots (a_n,d_n)$. Pick a position $i$ of $w$ and suppose that $\esphere = (U,(\rel^\esphere)_{\rel \in \Sign},\labelone,\labeltwo,\scenter,\sactive,\scolor) \in q_i$. If $\sactive$ is minimal in $E$, then there is no pending requirement to check. Now, as $\sactive$ shall correspond to the current position $i$ of $w$, we write, for every $k \in [m]$, $d_i^k$ into register $(E,k)$ (first case of T8 below).
\ifshort
For all $j \in U \setminus \{\sactive\}$, on the other hand, we anticipate data values  and store them in $(E[j],k)$ (also first case of T8). They will be forwarded (second case of T8) and checked later against both the guesses made at other minimal nodes of $\esphere$ (second line in T7) and the actual data values in $w$ (end of line 1 in T7). This procedure makes sure that the values that we carry along within an accepting run agree with the actual data values of $w$.
\fi
\iflong
For all $j \in U \setminus \{\sactive\}$, on the other hand, we anticipate data values  and store them in $(E[j],k)$ (also first case of T8). They will be forwarded (second case of T8) and checked later against both the guesses made at other minimal nodes of $\esphere$ (guard $g_3$ of T7) and the actual data values in $w$ (guard $g_2$). This procedure makes sure that the values that we carry along within an accepting run agree with the actual data values of $w$.
\fi

Now, as $\cpred{\rel}^w$ and $\csucc{\rel}^w$ are monotone wrt.\ positions with identical labels and data values, two isomorphic cycles cannot be ``merged'' into one larger one, unlike in \ngautomata where different parts may act erroneously on the assumption of inconsistent data values (cf.\ Lemma~\ref{lem:weakng}). As a consequence, spheres are correctly simulated by the input word.

\smallskip

Let us formalize $\Aut_B=(\States,\Reg,\Trans,{(\Local_\rel)}_{\rel \in \Sign},\mytrue)$ and the mapping $\pi: \States \to \Spheres{B}$, following the above ideas. The set of registers is $R = \eSpheres{B} \times [\param]$. A state from $Q$ is a non-empty set $q \subseteq \eSpheres{B}$ such that
\begin{itemize}\itemsep=0.5ex
\item[(i)] there is a unique
$\esphere=(U,(\rel^\esphere)_{\rel \in \Sign},\labelone,\labeltwo,\scenter,\sactive,\scolor) \in q$ such that $\scenter = \sactive$ (we set $\pi(q) = (U,(\rel^\esphere)_{\rel \in \Sign},\labelone,\labeltwo,\scenter)$ to obtain the mapping required by Prop.~\ref{prop:sphereaut}),
\item[(ii)] there are $a \in
\Sigma$ and $\eta \in \Part{m}$ such that, for all $E=(\ldots,\lambda,\nu,\ldots) \in q$, we have $\lambda(\sactive) = a$ and $\nu(\sactive) = \eta$ (we let $\slabel(q) = a$ and $\sguard(q) = \eta$), and
\item[(iii)] for every
$(\sphere,\sactive,\scolor),(\sphere,\sactive',\scolor) \in q$, we have
$\sactive = \sactive'$.
\end{itemize}

Before we turn to the transitions, we introduce some notation. Below, $\esphere$ will always denote $(\sphere,\sactive,\scolor)$ with $\sphere = (U,(\rel^\esphere)_{\rel \in \Sign},\labelone,\labeltwo,\scenter)$; in particular, $\sactive$ refers to the active node of $\esphere$. The mappings $\csucc{\rel}^\esphere$, $\prev{\rel}^\esphere$, and $\mydist^E$ are defined for extended spheres in the obvious manner. For $j \in U$, we set $\ptype{j} = \{\rel \in \Sign \mid \prev{\rel}^\esphere(j)$ is defined$\}$. Let us fix, for all $E \in \eSpheres{B}$ such that $\ptype{\alpha} \neq \emptyset$, some arbitrary $\rel_E \in \ptype{\alpha}$. Finally, for state $q$ and $k_1,k_2 \in [m]$, we write $k_1 \sim_q k_2$ if there is $K \in \sguard(q)$ such that $\{k_1,k_2\} \subseteq K$.

\smallskip

We have a transition
$\trans{(\source,\test)}{a}{(\state,\update)}$ iff the following hold:
\begin{itemize}\itemsep=0.5ex
\item[T1] $\slabel(q) = a$

\item[T2] \parbox{9.8em}{for all $\rel \in \Sign$, $\esphere \in
  q$\,:} $\rel \not\in \dom(\source) ~\Longrightarrow~ \prev{\rel}^\esphere(\sactive) \textup{ is undefined}$
  
\item[T3] \parbox{16em}{for all $\rel \in \dom(\source)$, $\esphere \in q$, $j \in U$:}
   \parbox{11em}{$j \rel^\esphere \sactive ~\Longleftrightarrow~ \esphere[j] \in \source_\rel$}
   
\item[T4] \parbox{16em}{for all $\rel \in \dom(\source)$, $\esphere \in \source_\rel$, $j \in U$:}
   \parbox{11em}{$\sactive \rel^\esphere j ~\Longleftrightarrow~ \esphere[j] \in q$}
  
\item[T5] \parbox{11.9em}{for all $\rel \in \dom(\source)$, $\esphere \in q$\,:}
    $\parbox{9.3em}{\hfill$\prev{\rel}^\esphere(\sactive) \textup{~undefined~}$} \Longrightarrow~ \sdist^\esphere(\gamma,\sactive) = B$
\item[T6] \parbox{11.9em}{for all $\rel \in \dom(\source)$, $\esphere \in \source_\rel$:}
  $\parbox{9.3em}{\hfill$\nextwp{\rel}^\esphere(\sactive) \textup{~undefined~}$} \Longrightarrow~ \sdist^\esphere(\gamma,\sactive) = B$\vspace{1ex}
  

\ifshort
\item[T7] $
\begin{array}[t]{ll}
\test = & \underset{\substack{k_1,k_2 \in [m]\\k_1 \,\sim_q\, k_2}}{\bigwedge}\hspace{-0.5em} k_1 = k_2
\,\mathrel{\wedge} \underset{\substack{k_1,k_2 \in [m]\\k_1 \,\not\sim_q\, k_2}}{\bigwedge} \hspace{-0.5em}\neg\,(k_1 = k_2) \,\mathrel{\wedge}\,
\underset{\substack{k \in [m]~\esphere \,\in\, q\\\rel \in \ptype{\sactive}}}{\bigwedge}\hspace{-0.5em} k = (\rel,(\esphere,k))\\[5ex]
& \qquad\wedge  \hspace{-0em}\underset{\substack{k \in [m]~~\esphere \,\in\, q ~~ j \,\in\, U\\\rel_1,\rel_2 \in \ptype{\sactive}}}{\bigwedge}\hspace{-1em} (\rel_1,(\esphere[j],k)) = (\rel_2,(\esphere[j],k))\vspace{1ex}
\end{array}$
\fi

\iflong
\item[T7] $\test = \test_1 \wedge \test_2 \wedge \test_3$ where\\
\begin{itemize}
\item[]\hspace{-3em}
$\test_1 = \underset{\substack{k_1,k_2 \in [m]\\k_1 \,\sim_q\, k_2}}{\bigwedge}\hspace{-0.5em} k_1 = k_2
\,\mathrel{\wedge} \underset{\substack{k_1,k_2 \in [m]\\k_1 \,\not\sim_q\, k_2}}{\bigwedge} \hspace{-0.5em}\neg\,(k_1 = k_2)$ \qquad $\test_2 =
\underset{\substack{k \in [m]~\esphere \,\in\, q\\\rel \in \ptype{\sactive}}}{\bigwedge}\hspace{-0.5em} k = (\rel,(\esphere,k))$
\\
\item[]\hspace{-3em}
$\test_3 = \hspace{-1em}\underset{\substack{k \in [m]~~\esphere \,\in\, q ~~ j \,\in\, U\\\rel_1,\rel_2 \in \ptype{\sactive}}}{\bigwedge}\hspace{-1em} (\rel_1,(\esphere[j],k)) = (\rel_2,(\esphere[j],k))$
\medskip
\end{itemize}
\fi

\item[T8] for all $k \in [m]$ and $\esphere \in \eSpheres{B}$\,:
\[\update((E,k)) = \left\{\begin{array}{ll}
    (k,\sdist^\esphere(j,\sactive))   &~  \textup{if~} \exists j \in U: E[j] \in q \textup{~and~}\ptype{j} = \emptyset\\

    (\rel_{E[j]},(E,k))   &~ \textup{if~} \exists j \in U: E[j] \in q \textup{~and~}\ptype{j} \neq \emptyset\\
    \textup{undefined}   &~ \textup{otherwise}
\end{array}\right.\]
\end{itemize}

\smallskip

For every $\rel \in \Sign$, the local acceptance condition is given by $F_\rel = \{q \in Q \mid$ for all $\esphere \in q$, $\nextwp{\rel}^\esphere(\sactive)$ is undefined$\}$. Recall that the global one is $\mytrue$.

As the maximal size of a sphere is exponential in $B$ and polynomial in $|\Sign|$, the numbers $|Q|$ and $|\Reg|$ are elementary in $B$ and $|\Sign|$. Note that $\Aut_B$ can actually be constructed in elementary time.



\iflong
In the appendix, we show that the construction of $\Aut_B$ and $\pi$ is correct in the sense of Proposition~\ref{prop:sphereaut}.
\fi

\section{From Automata to Logic}
\label{sec:specialcases}

Next, we give translations from automata back to logic. Note that $\ClassEMSO(\RaSign) \subsetneqq \ClassCRA(\RaSign)$, as $\EMSO(\RaSign)$ cannot reason about data values.
However, we show that the behavior of a \myautomaton is always MSO definable and, in a sense, ``regular''. There are natural finite-state automata that do not share this property: two-way register automata (even deterministic ones) over one-dimensional data words are incomparable to $\extMSO(\SimSign)$ \cite{Neven2004}.

\begin{theorem}\label{thm:logaut}
For every signature $\Sign$, we have $\ClassCRA(\Sign) \subseteq \extClassMSO(\Sign)$.
\end{theorem}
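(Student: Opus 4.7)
The plan is to MSO-simulate $\Aut$ in the classical way: existentially guess an accepting run via second-order set variables, then verify it through a first-order kernel. The twist is that register contents live in the infinite data domain, so they cannot simply be guessed. Fix a \myautomaton $\Aut = (\States,\Reg,\Trans,(\Local_\rel)_{\rel \in \Sign},\Acc)$. The sentence $\phi_\Aut \in \extMSO(\Sign)$ will introduce a second-order variable $X_q$ for every $q \in \States$ (positions at which the run visits $q$) and a set variable $Y_t$ for every $t \in \Trans$ (positions at which $t$ fires). The first-order kernel asserts that the $X_q$'s and $Y_t$'s partition the domain, that the source states attached to $t$ agree with the $X_{q'}$'s on the $\prev{\rel}$-predecessors of $x$, that the input label of $t$ equals $\letter{x}$, and that the local acceptance sets $\Local_\rel$ are respected wherever no $\rel$-successor exists. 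The global condition $\Acc$ is a boolean combination of atoms $q \le N$ with fixed constants $N$, and is expressible by bounded first-order counting inside $X_q$.

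The real work is the guards. The key observation is that every defined register entry arising in any run coincides with some data value $\dvalue{j_0}{k_0}$ already present in the input: new values enter only through ``base'' updates $\update(r) = (k_0,B)$, which pick a $k_0$-th data value inside the $B$-sphere of the current position, while all other updates $\update(r) = (\rel,r')$ simply copy an entry from $(\prev{\rel}(i),r')$. Hence, from any defined pair $(i,r)$, the chain of copy-updates traces back to a unique base-case pair $(i_0,r_0)$ carrying its own index $k_0$ and sphere choice $j_0$. Once the $Y_t$'s are fixed, the one-step backward relation on pairs $(i,r)$ is first-order definable by case analysis, and I plan to use MSO's standard encoding of transitive closure, $R^*(x,y) \equiv \forall X\,(x\in X \wedge (\forall u,v\,(u \in X \wedge R(u,v) \to v \in X)) \to y \in X)$, adapted to a finite family $\{X_r\}_{r \in \Reg}$ of set variables, to extract from any defined $(i,r)$ the base-case position $i_0$, the base register $r_0$, and hence the index $k_0$.

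It remains to name the sphere witness $j_0$. Since the radii $B$ occurring in $\Aut$ are constants, there are only elementarily many path descriptors $p \in (\Sign \cup \Sign^{-1})^{\le B}$, and for every base-case transition $t$, register $r$, and descriptor $p$ I quantify an additional second-order variable $Z_{t,r,p}$ collecting the positions $i \in Y_t$ at which $j_0$ is reached from $i$ by following $p$. A first-order clause enforces that, for each such $i$, exactly one $p$ is well-defined at $i$ and selected by the $Z_{t,r,p}$'s. Each guard atom of the transition fired at a position $x$ then reduces to a data equality between positions the formula has already named: $k_1 = k_2$ becomes $\dvalue{x}{k_1} = \dvalue{x}{k_2}$; $k = (\rel,r)$ becomes $\dvalue{x}{k} = \dvalue{y_0}{k_0}$ where $y_0$ and $k_0$ come from the source chain of $(\prev{\rel}(x),r)$; and $(\rel_1,r_1) = (\rel_2,r_2)$ becomes $\dvalue{y_1}{k_1} = \dvalue{y_2}{k_2}$ analogously. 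Undefinedness of a register entry is detectable from the fired transition and the source chain, and is used to falsify the corresponding atom, as prescribed by the semantics of guards.

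The main obstacle I expect is the bookkeeping of the source-tracing encoding: one must show by induction along a run that any accepting run of $\Aut$ gives rise to a model of $\phi_\Aut$ with the $X_q$, $Y_t$, and $Z_{t,r,p}$'s interpreted as described, and, conversely, that any model of $\phi_\Aut$ determines a consistent register content at every position and hence a genuine accepting run of $\Aut$. The remainder — assembling the partition assertions, unfolding the finite disjunctions over $\Reg$, $\Sign$, and $[m]$, and verifying that the result lives in $\extMSO(\Sign)$ — is routine.
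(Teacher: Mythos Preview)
Your proposal is correct and follows essentially the same route as the paper's proof: guess the run via second-order variables indexed by transitions, record at each base update which sphere position supplies the data value via a bounded path descriptor, and then trace each register reference back along the chain of copy-updates to its source so that every guard atom becomes a data equality $\dvalue{y_1}{k_1} = \dvalue{y_2}{k_2}$ between positions the formula has already named. The paper does the tracing by existentially quantifying, per guard atom, a set $X$ of path positions together with sets $(X_r)_{r\in\Reg}$ marking which register is tracked at each step, whereas you phrase it as an MSO transitive-closure computation over the copy relation on pairs $(i,r)$; these are interchangeable encodings, and your path-descriptor variables $Z_{t,r,p}$ coincide with the paper's $X_{r,B}^{\beta}$.
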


\iflong
\begin{proof}
As usual, second-order
variables are used to encode an assignment of positions to transitions, which
is then checked for being an accepting run. To simulate register contents, we
extend a technique from \cite{Neven2004}. Let us describe how a \myautomaton $\Aut = (\States,\Reg,\Trans,{(\Local_\rel)}_{\rel \in \Sign},\Acc)$ over $\Sign$ is translated into an $\extMSO(\Sign)$-sentence $\phi_{\!\Aut}$ such that $L(\phi_{\!\Aut}) = L(\Aut)$.
Suppose $\mathcal{B}$ is the maximum of all $B$ for which there is a transition $\trans{(\source,\test)}{a}{(\state,\update)} \in \Delta$ with $f(r)=(k,B)$, for some $r$ and $k$.

\newcommand{\xupdate}{x_{\textup{u}}}
\newcommand{\xvalue}{x_{\textup{v}}}

We assume a second-order variable $X_\delta$ for every transition $\delta \in \Trans$. Moreover, we assume a variable $X_{r,B}^{\beta}$ for each $r \in \Reg$, $B \in \{1,\ldots,\mathcal{B}\}$, and each formula $\beta(\xupdate,\xvalue) \in \FO(\Sign)$, with free variables $\xupdate$ and $\xvalue$, that is of the form
\[\beta(\xupdate,\xvalue) = \exists x_1,\ldots,x_B\,(\xupdate \bowtie_1 x_1 \bowtie_2 \ldots \bowtie_B x_B = \xvalue)\]
where $\mathord{\bowtie_i} \in \{\,=\,,\,\rel\,,\,\rel^{-1} \mid \rel \in \Sign\}$. The intuition of these variables is as follows. If a position $x$ is contained in $X_\delta$ with $\delta = \trans{(\source,\test)}{a}{(\state,\update)}$ and $f(r) = (k,B)$, then $x$ will also be contained in some $X_{r,B}^{\beta}$, meaning that $x$ executes $\delta$ and the new data value of $r$ is the $k$-th data value at the unique $y$ such that $\beta(x,y)$ is satisfied.

\smallskip
The formula $\phi_{\!\Aut}$ will be of the form $\smash{\exists {(X_\delta)}_{\delta}\, \exists{(X_{r,B}^{\beta})}_{r,B,\beta}\, (\psi_1 \wedge \psi_2)}$. Here, $\psi_1 \in \FO(\Sign)$ checks whether the following hold:
\begin{itemize}
\item each position $x$ is contained in exactly one set $X_\delta$
\item for all $x$ and $r \in \Reg$, $x$ is contained in at most one set of the form $X_{r,B}^\beta$
\item if $x \in X_\delta$ with $\delta = \trans{(\source,\test)}{a}{(\state,\update)}$ and $f(r) = (k,B)$, then $x \in X_{r,B}^{\beta}$ for some $\beta$
\item the label at position $x \in X_\delta$ corresponds to the label of $\delta$
\item conditions (1) and  (2) in the definition of a run are met
\item the (potential) run is accepting, i.e., $\Local_\rel$ and $\Acc$ are respected
\end{itemize}

It remains to define $\psi_2 \in \extMSO(\Sign)$ to check property (3) of a run. This can be done by means of formulas $\psi_g(x)$, one for each atomic guard $\guard \in \{\,\theta_1 = \theta_2 \mid \theta_1,\theta_2 \in [m] \mathrel{\cup} (\Sign \mathrel{\times} \Reg)\}$. We restrict here to $\guard = ((\rel,r) = l)$ with $l \in [m]$. The other cases are similar.
Formula $\psi_\guard(x)$ checks if the contents of $r$ at position
$\cpred{\rel}(x)$ equals the $l$-th data value at $x$. It
will be of the form $\exists X\, \exists {(X_r)}_{r \in
  R}\, \chi_\guard$. The idea is that the positions in $X$ describe a path $x_1 \mathrel{\rel_1} x_2 \mathrel{\rel_2} \ldots \mathrel{\rel_{n-1}} x_n \mathrel{\rel} x$ that ``transports'' the data value $\dvalue{x}{l}$. We suppose that every position $x_i$ is contained in precisely one set $X_{r_i}$ meaning that register $r_i$ is updated by the contents of $r_{i-1}$ at position $x_{i-1}$. More precisely, we require that, for all $i \in \{2,\ldots,n\}$, there is a transition $\delta$ with register-update mapping $\update$ such that $x_i \in X_\delta$ and $\update(r_i) = (\rel_{i-1},r_{i-1})$. The last update should
concern $r$, i.e., we require $x_n \in X_r$. So suppose $x_1 \in X_{r_1}$.
It remains to ensure that register $r_1$, at $x_1$, obtains the value $\dvalue{x}{l}$. More precisely, there should be a transition $\delta$ with update mapping $\update$, as well as $k,B,\beta$ and a position $x_0$ such that $\beta(x_1,x_0)$ holds, $\update(r_1) = (k,B)$, $x_1 \in X_\delta \mathrel{\cap} X_{r_1,B}^\beta$ and $\dvalue{x_0}{k} = \dvalue{x}{l}$.

Note that $\chi_\guard$ can be defined as an $\extFO(\Sign)$-formula and $\psi_\guard(x)$
holds iff the register contents of $r$ at $\prev{\rel}(x)$ equals $\dvalue{x}{l}$.
\qed
\end{proof}
\fi

\noindent
In the proof, the non-local predicate $\dvalue{x}{k} = \dvalue{y}{l}$ is indeed essential to simulate register assignments, as we need to compare data values at positions where registers are updated. For one-dimensional data words, however, the predicate can be easily defined in $\MSO(\SimSign)$. The following theorem is dedicated to this classical setting over $\SimSign$.

\begin{theorem}\label{thm:onedata} We have the inclusions depicted 
in Figure~\ref{fig:hierarchy}.
Here, $\longrightarrow$ means `strictly included' and $\dashrightarrow$ means `included'.
\end{theorem}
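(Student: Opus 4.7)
The plan is to establish, in the single-data-value setting with signature $\SimSign$, each arrow in Figure~\ref{fig:hierarchy} individually. The arrows naturally split into three groups: (i) inclusions between automaton classes, which are structural; (ii) inclusions between logic and automaton classes, which invoke the main translation results; and (iii) strict separations, which require concrete witness languages.

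For group (i), I would first note that by definition a class memory automaton is a \myautomaton whose update function is everywhere undefined, and a \ngautomaton is a \myautomaton whose updates only use $(\rel,r')$ or $(k,0)$. This yields the syntactic inclusions $\ClassCMA(\SimSign) \subseteq \ClassRCRA(\SimSign) \subseteq \ClassCRA(\SimSign)$. For group (ii), the inclusion $\ClassEMSO(\SimSign) \subseteq \ClassCRA(\SimSign)$ is immediate from Theorem~\ref{thm:main}. The inclusion $\ClassEMSOtwo \subseteq \ClassCMA(\SimSign)$ is by the translation of Bj\"orklund and Schwentick \cite{Bjorklund10} (the two-variable logic considered there maps into their class memory automata). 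For the upper bound $\ClassCRA(\SimSign) \subseteq \ClassMSO(\SimSign)$, I would invoke Theorem~\ref{thm:logaut} to land in $\extClassMSO(\SimSign)$, then observe that in the one-data-value case the non-local predicate $\dvalue{x}{1}=\dvalue{y}{1}$ is expressible inside $\MSO(\SimSign)$: two positions carry the same data value iff they lie on a common $\edge{\sim}$-chain, which is a standard MSO-expressible reachability property (quantifying over a set closed under $\edge{\sim}$ and $\edge{\sim}^{-1}$ that contains $x$ and asking whether it contains $y$).

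For group (iii), each strict inclusion needs a separating language. The separation $\ClassCMA(\SimSign) \subsetneq \ClassRCRA(\SimSign)$ is witnessed by the language $L$ of Example~\ref{ex:rCRA}, which is recognized by the \ngautomaton constructed there but fails to be in $\ClassCMA(\SimSign)$ via the pumping argument indicated after that example. For $\ClassRCRA(\SimSign) \subsetneq \ClassCRA(\SimSign)$, I would adapt the argument of Lemma~\ref{lem:weakng}: the cycle-detection formula used there lives over $\Sign_\sim^2$, but it can be re-encoded over $\SimSign$ by simulating the second data coordinate with a finite coloring and using $\edge{+1}$ to interleave the two "tracks", keeping a first-order formula whose separating pattern can still be nested freely. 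Executing the same swap-two-patterns argument against any purported \ngautomaton then yields the contradiction. Finally, $\ClassEMSOtwo \subsetneq \ClassEMSO(\SimSign)$ is witnessed by the formula $\phi_3$ of Example~\ref{ex:server}, which lies in $\FO(\SimSign) \subseteq \ClassEMSO(\SimSign)$; a standard Ehrenfeucht-Fra\"iss\'e game on two-variable logic over words where two consecutive requests are arbitrarily far apart shows $L(\phi_3)\notin\ClassEMSOtwo$.

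The main obstacle I anticipate is the transfer of Lemma~\ref{lem:weakng} from two data values to the one-data-value signature $\SimSign$: the original separation relies crucially on having two independent equality relations $\edge{\sim}^1$ and $\edge{\sim}^2$ to build disjoint but interlocking four-cycles, and mimicking this with labels plus $\edge{+1}$ requires care so that the "merged" pattern produced by the swap is still accepted by the hypothetical \ngautomaton yet still falsifies the adapted cycle formula. The two logic-level separations and the CMA vs.\ RCRA separation are comparatively routine, and the logic-to-automaton inclusions reduce cleanly to the general theorems already proved in Sections~\ref{sec:autvslogic} and \ref{sec:specialcases}.
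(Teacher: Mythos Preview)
Your plan misreads the diagram in two places, and this leads you to spend effort on a claim the paper does not make while overlooking two non-trivial claims it does.

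First, the arrow from $\ClassRCRA(\SimSign)$ to $\ClassCRA(\SimSign)$ is \emph{dashed}, i.e., the paper only asserts the syntactic inclusion, not strictness. Indeed, immediately after the proof the text says that the remaining strict inclusions are left open. So your whole ``main obstacle'' --- porting Lemma~\ref{lem:weakng} from $\Sign_\sim^2$ to $\SimSign$ --- is not something the theorem requires; you are attacking an open question, and your sketched encoding (simulate the second data track by finite colors and interleaving via $\edge{+1}$) would not preserve the pumping argument, since interleaving destroys the independence of the two $\edge{\sim}$-relations that the swap relies on.

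Second, and more importantly, you omit two arrows that \emph{are} solid and that carry most of the proof's substance:
\begin{itemize}
\item $\ClassCRA(\SimSign) \subsetneqq \ClassMSO(\SimSign)$. Strictness here is not free: the paper shows it by encoding labeled grids into data words (columns as $\edge{+1}$-blocks, rows as $\edge{\sim}$-classes), exhibiting an $\MSO(\SimSign)$-definable language whose information content across a single column grows doubly exponentially, and then bounding the number of distinguishable $n$-tuples of \myautomaton configurations by a singly-exponential function. This is the analogue of the classical argument of Thomas~\cite{ThoPOMIV96} and is the hardest separation in the theorem.
\item $\ClassCRA(\RaSign) = \ClassRCRA(\RaSign)$ (and the strict inclusion into $\ClassCMA(\SimSign)$). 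The equality needs an actual simulation: guessed values in a \myautomaton over $\RaSign$ refer to positions within a bounded window, so a \ngautomaton can delay the guess by carrying the last $\mathcal{B}$ input values in extra registers and maintaining, in the state, an equivalence relation between registers and the next $\mathcal{B}$ positions. The paper sketches this; your proposal does not mention the bottom row at all.
\end{itemize}

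Your treatment of $\ClassCMA(\SimSign) \subsetneqq \ClassEMSO(\SimSign)$ via $\phi_3$ and an EF game on two-variable logic is a plausible alternative, but the paper's route is cleaner: it reuses the language $L$ of Example~\ref{ex:rCRA}, writes an explicit $\FO(\SimSign)$-sentence for it, and pumps any \cmautomaton on $L$ by swapping two request data values carrying the same state. Since $\ClassEMSOtwo = \ClassCMA(\SimSign)$ by \cite{Bjorklund10,Bojanczy06}, this one argument simultaneously gives both strict arrows out of the $\ClassCMA$ node.
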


\begin{figure}[t]
  \centering
{
  \scalebox{0.9}{
\begin{picture}(40,47)(-5,-2)
\gasset{Nh=6,Nw=18,Nadjustdist=1,AHangle=45,AHLength=1.3,AHlength=0.3,Nframe=n,Nfill=n,linewidth=0.2}
\gasset{linegray=0.7,Nmr=0,Nadjust=wh,Nframe=y}
\unitlength=0.22em

\node(MSO)(20,60){\parbox{14em}{\centering$\extClassMSO(\SimSign) ~=~ \ClassMSO(\SimSign)$}}

\node[Nw=29](CRA)(45,45){\parbox{6em}{$\ClassCRA(\SimSign)$}}
\node(RCRA)(45,30){\parbox{6em}{$\ClassRCRA(\SimSign)$}}

\node(extEMSO)(-3.7,45){\parbox{6.5em}{$\extClassEMSO(\SimSign)$}}
\node(EMSO)(-3.7,30){\parbox{6.5em}{$\ClassEMSO(\SimSign)$}}

\node(EMSO2)(20,15){\parbox{17em}{$\ClassEMSOtwo ~=~ \ClassCMA(\SimSign)$}}
\node(CRAra)(20,0){\parbox{12em}{\centering$\ClassCRA(\RaSign) ~=~\ClassRCRA(\RaSign)$}}

\gasset{linegray=0,Nframe=n}

\node(B)(31,45){}
\node(A)(8,33){}
\drawedge[ELside=r,ELpos=75,ELdist=1,dash={1.5}0](A,B){\scalebox{0.8}{Thm.~\ref{thm:main}}}

\node(B)(-3.7,57.5){}
\node(A)(-3.7,47.5){}
\drawedge[dash={1.5}0](A,B){}

\node(B)(45,57.5){}
\node(A)(45,47.5){}
\drawedge(A,B){}

\node(B)(45,42.5){}
\node(A)(45,32.5){}
\drawedge[dash={1.5}0](A,B){}

\node(B)(45,27.5){}
\node(A)(45,17.5){}
\drawedge(A,B){}

\node(B)(-3.7,42.5){}
\node(A)(-3.7,32.5){}
\drawedge[dash={1.5}0](A,B){}

\node(B)(-3.7,27.5){}
\node(A)(-3.7,17.5){}
\drawedge(A,B){}

\node(B)(20,12.5){}
\node(A)(20,2.5){}
\drawedge[ELside=r,ELdist=1](A,B){\scalebox{0.8}{\cite{Bjorklund10}}}

\node(eq)(65,15){\scalebox{0.8}{\cite{Bjorklund10,Bojanczy06}}}
\end{picture}
}
}
\caption{A hierarchy of automata and logics over one-dimensional data words\label{fig:hierarchy}}
\end{figure}

\iflong
\begin{proof}
The inclusion $\ClassEMSO(\SimSign) \subseteq \ClassCRA(\SimSign)$ is due to Theorem~\ref{thm:main}, and $\ClassCRA(\SimSign) \subseteq \extClassMSO(\SimSign)$ is
due to Theorem~\ref{thm:logaut}. The equality $\extClassMSO(\SimSign) = \ClassMSO(\SimSign)$ is obvious.

\medskip

$\ClassCMA(\SimSign) \,\subsetneqq\, \ClassEMSO(\SimSign)$\,:
Consider a \cmautomaton $\Aut$. As $\Aut$ is completely state-based and does not make use
  of any register, it is standard to define a sentence $\psi \in
  \EMSO(\SimSign)$ such that $L(\psi) = L(\Aut)$. It remains to show
  strictness of the inclusion.\footnote{Note that satisfiability of $\EMSO(\SimSign)$ is undecidable, whereas emptiness of class memory automata over $\SimSign$ is decidable \cite{Bojanczy06}. This already implies that there is no \emph{effective} translation of automata into formulas.} Suppose $\Sigma = \{\req,\ack\}$ and $\Data = \N$, and let
  $L = [\{(\req,1) \ldots (\req,n)(\ack,1) \ldots (\ack,n) \mid n \ge 1\}]_{{\SimSign}}$ (note that the proof also works if $\Sigma$ is a singleton). Towards a contradiction, suppose $L$ is recognized by \cmautomaton $\Aut$. As $\Aut$ has no access to registers, a run of $\Aut$ on
  $(\req,1) \ldots (\req,n)(\ack,1) \ldots (\ack,n)$ is actually a sequence of states $q_1
  \ldots q_{2n}$. If $n$ is large enough, there are positions $1 \le i < j \le
  n$ such that $q_i = q_j$. Now, we can simply exchange the data values at
  positions $i$ and $j$ without affecting acceptance. More precisely, $q_1
  \ldots q_{2n}$ is also an accepting run on the data word $(\req,1) \ldots
  (\req,i-1)(\req,j)(\req,i+1) \ldots (\req,j-1)(\req,i)(\req,j+1) \ldots (\req,n)(\ack,1) \ldots (\ack,n)$, which is not contained in $L$, a contradiction. On the other hand,
  $L$ is the conjunction $\phi_1 \wedge \phi_2$ of the following
  $\FO(\SimSign)$-sentences:
\begin{itemize}\itemsep=1ex
\item $\phi_1 = \exists x\, \mytrue \,\wedge\, \forall x\,\exists^{=1} y\,
\left(
      \begin{array}{ll}
             & x \edgesim y \mathrel{\wedge} \lformula{x}{\req} \mathrel{\wedge} \lformula{y}{\ack}\\
        \vee & y \edgesim x \mathrel{\wedge} \lformula{y}{\req} \mathrel{\wedge} \lformula{x}{\ack}
      \end{array}\right)$
\item $\phi_2 = \forall x,y\,
\left(
\begin{array}{cl}
& x \edgesucc y \,\mathrel{\wedge}\, \neg(\lformula{x}{\req} \mathrel{\wedge} \lformula{y}{\ack})\\
\rightarrow & \exists x',y'
      \left(
      \begin{array}{ll}
        & x \edgesim x' \edgesucc y' \mathrel{\wedge} y \edgesim y'\\
        \vee & x' \edgesucc y' \edgesim y \mathrel{\wedge} x' \edgesim x
      \end{array}\right)
\end{array}
\right)$
\end{itemize}
\smallskip
The first formula expresses that the word has positive length and each $\sim$ equivalence class has size two.
The second formula ensures the FIFO structure of a data word.

\medskip

$\ClassCMA(\SimSign) \,\subsetneqq\, \ClassRCRA(\SimSign)$\,:
Consider the language $L$ from the previous paragraph. It is not in $\ClassCMA(\SimSign)$. However, Example~\ref{ex:rCRA} demonstrates that there is a \ngautomaton recognizing $L$.

\medskip

$\ClassMSO(\SimSign) \,\not\subseteq\, \ClassCRA(\SimSign)$\,:
We encode grids into data words. An $(i,j)$-\emph{grid} is a graph that has a
height $i \in \N$ and a width $j \in \N$ meaning that it has $i$ rows and $j$ columns that are connected by a horizontal and a vertical immediate successor relation. Nodes are labeled by elements from $\Sigma = \{a,b,c\}$. We encode an
$(i,j)$-grid as the data word \[(a_{11},1) \ldots (a_{i1},i) (a_{12},1) \ldots
(a_{i2},i) \ldots \ldots (a_{1j},1) \ldots (a_{ij},i)\] where $a_{kl} \in \Sigma$ is the labeling of the grid  node $(k,l)$. Hereby, each subword $(a_{1k},1) \ldots
(a_{ik},i)$ constitutes a column. Then, moving down in the grid corresponds to a
$\edgesucc$-step in the data word, moving right corresponds to a
$\edgesim$-step. These steps are $\FO(\SimSign)$-definable.

Consider the set $\mathcal{L}$ of grids of the form $H_1.C.H_2$ where
$C$ is a single column of $c$-labeled nodes, and $H_1$ and $H_2$ are grids with labels from $\{a,b\}$ such that the sets of different column words (over $\{a,b\}$) in $H_1$ and $H_2$ coincide. We know that $\mathcal{L}$ is MSO-definable in the signature of a grid. Therefore, the encoding $L$ of $\mathcal{L}$ into data words is $\MSO(\SimSign)$-definable. Using an argument from \cite{ThoPOMIV96}, we show that $L \not\in \ClassCRA(\SimSign)$. First observe that the number of distinct sets of columns words over $\{a,b\}$ of length $n$ is $2^{2^n}$. Suppose, towards a contradiction, that there is 
a \myautomaton $\Aut = (\States,\Reg,\Trans,{(\Local_\rel)}_{\rel \in \Sign},\Acc)$ such that $L(\Aut) = L$. Without loss of generality, we assume that  $\Acc$ is given in terms of a simple set of global final states. In a run of $\Aut$ on the data-word encoding of grid $H_1.C.H_2$ of height $n$, all the information that $\Aut$ has about $H_1$ must be encoded in the $n$ configurations that are taken while reading the $c$-labeled positions.
The number of tuples of $n$ configurations that $\Aut$ can distinguish is bounded by
\[N = |Q|^n \cdot 2^{(|R| \cdot n)^2} \cdot (n+1)^{|R| \cdot n}\,.\]
Here, the second factor is an upper bound on the number of equivalence classes on the set $\{1,\ldots,|R| \cdot n\}$, which captures guessed values, and the third factor is the number of registers assignments. Now, as $Q$ and $R$ are fixed, $N$ does not grow sufficiently fast so that $\Aut$ will accept a data word outside $L$.

\medskip

$\ClassCRA(\RaSign) \subseteq \ClassRCRA(\RaSign)$\,:
Note first that \myautomata over $\RaSign$ are a variant of the \emph{register automata with non-deterministic reassignment} from \cite{KaminskiZ10}. The crucial difference is that 
the ``look-ahead'' of $\ClassCRA(\RaSign)$ is bounded, while the automata from \cite{KaminskiZ10} can guess any arbitrary data value. As a consequence, the latter capture the set of data words such that all data values (except the last one) are different from the last data value. We will show that, on the other hand, \myautomata over $\RaSign$ are no more expressive than classical register automata, which cannot recognize that language.

Let $\Aut = (\States,\Reg,\Trans,{(\Local_\rel)}_{\rel \in \Sign},\Acc)$ be a \myautomaton over $\RaSign$. We sketch the construction of a \ngautomaton $\Aut' = (\States',\Reg',\Trans',{(\Local_\rel')}_{\rel \in \Sign},\Acc')$ over $\RaSign$ such that $L(\Aut) = L(\Aut')$.  Let $\mathcal{B}$ be the maximal value $B$ such that an update of $\Aut$ is of the form $\update(r) = (k,B)$. Without loss of generality, we assume that $B \ge 1$ exists. The idea is that $\Aut'$ keeps track of the register contents of the last $B$ positions, and of the last $B$ data values read. To this aim, we set $\Reg' = \{-\mathcal{B},\ldots,-1\} \times (\Reg \mathrel{\uplus} \{\mathsf{current}\})$. Register $(-i,\mathsf{current})$ contains the $i$-th last input data value (wrt.\ the next position to read), and register $(-i,r)$ simulates register assignments of $\Aut$ for $r$. In particular, this allows us to access every input data value from the last $B \le \mathcal{B}$ positions. In order to anticipate data values, a state of $\Aut'$ contains, apart from a state of $\Aut$, an equivalence relation over both the new set of registers $\Reg'$ and the next $B$ positions. Thus, a state of $\Aut'$ is a pair $(q,\sim)$ where $q \in Q$ and $\sim$ is an equivalence relation over $\Reg' \times \{1,\ldots,\mathcal{B}\}$.

To simulate an update $\update(r) = (k,B)$ of $\Aut$ with $B \ge 1$, $\Aut'$
either writes the current value or one of the values stored in $(-B,\mathsf{current}),\ldots,(-1,\mathsf{current})$ into $r$, or goes into a state in which $r$ and at least one of the next $B$ positions are considered equivalent. Of course, the equivalence has to be \emph{globally consistent} and \emph{locally consistent} meaning that two equivalent registers should contain the same data value. Moreover, when $\Aut'$ is in a state where the next position and a defined register $r$ are considered equivalent, then the next symbol to read is the contents of $r$. If, in contrast, the next position is not equivalent to some defined register, then $\Aut'$ should read a data value that is currently not stored, and store it in $r$ (unless another update for $r$ applies). This finally ensures that a suitable data value in terms of an equivalence relation has been guessed when performing an update of the form $\update(r) = (k,B)$. 
\qed
\end{proof}
\fi


\noindent
The remaining (strict) inclusions are left open. When there are no data values, we have expressive equivalence of EMSO logic and \myautomata (which then reduce to \cmautomata). The translation from automata to logic follows the standard approach. The following theorem is a proper generalization of the main result of \cite{Bollig06}.

\begin{theorem}\label{thm:nodata}
Suppose $m=0$. For every signature $\Sign$, $\extClassEMSO(\Sign) = \ClassCRA(\Sign)$.
\end{theorem}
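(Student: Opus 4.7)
\medskip
\noindent\textbf{Proof plan.}
The forward inclusion $\extClassEMSO(\Sign) \subseteq \ClassCRA(\Sign)$ is essentially Theorem~\ref{thm:main}. Indeed, when $m=0$ the atomic formulas $\dvalue{x}{k} = \dvalue{y}{l}$ do not exist, so $\extMSO(\Sign) = \MSO(\Sign)$ and a fortiori $\extEMSO(\Sign) = \EMSO(\Sign)$. The translation of Theorem~\ref{thm:main} therefore already produces a \myautomaton from every $\extEMSO(\Sign)$-sentence.

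For the converse $\ClassCRA(\Sign) \subseteq \extClassEMSO(\Sign)$, my first step is to observe that when $m=0$ a \myautomaton collapses to a \cmautomaton. Since $[m] = \emptyset$, no update of the form $\update(r) = (k,B)$ is available, and every other update just copies the (possibly undefined) contents of another register. Starting from undefined initial contents, every register therefore remains undefined throughout any run. According to clause~(3) of the run semantics, a guard $(\rel_1,r_1) = (\rel_2,r_2)$ requires both sides to denote elements of $\Data$, so every non-trivial guard evaluates to false. We may thus discard all transitions with non-trivial guards and assume, without loss of generality, that $\Aut$ is a class memory automaton.

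The second step is a standard logical encoding of the runs of this residual \cmautomaton $\Aut = (\States,\Reg,\Trans,{(\Local_\rel)}_{\rel\in\Sign},\Acc)$. I introduce a second-order variable $X_\delta$ for each $\delta \in \Trans$, intended to hold the positions executing $\delta$, and set $\phi_{\!\Aut} \;=\; \exists (X_\delta)_{\delta\in\Trans}\,(\psi_{\mathrm{part}} \wedge \psi_{\mathrm{lab}} \wedge \psi_{\mathrm{run}} \wedge \psi_{\mathrm{loc}} \wedge \psi_{\mathrm{glob}})$, where $\psi_{\mathrm{part}}$ says the $X_\delta$ form a partition of the positions; $\psi_{\mathrm{lab}}$ says that $x \in X_\delta$ implies $\letter{x}$ equals the label of $\delta$; and $\psi_{\mathrm{run}}$ enforces clauses (1) and (2) of the run definition by asserting, for each $\delta = \trans{(\source,\cdot)}{a}{(\state,\cdot)}$ and each $\rel \in \Sign$, that $x \in X_\delta$ implies $\exists y\,(y \rel x)$ iff $\rel \in \dom(\source)$, and that in the positive case the unique such $y$ lies in $\bigcup\{X_{\delta'} : \delta'\text{ has target state }\source_\rel\}$. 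Local acceptance $\psi_{\mathrm{loc}}$ asserts, for each $\rel \in \Sign$, that whenever $\neg\exists y\,(x \rel y)$ holds, $x$ belongs to some $X_\delta$ whose target state lies in $\Local_\rel$. Global acceptance $\psi_{\mathrm{glob}}$ is obtained from $\Acc$ by rewriting each atom `$q \le N$' as $\neg\exists x_0,\ldots,x_N\bigl(\bigwedge_{i\neq j} x_i \neq x_j \wedge \bigwedge_i \bigvee_{\delta\text{ ending in }q} x_i \in X_\delta\bigr)$, which is first-order. All remaining conditions (3) and (4) of the run definition are vacuous once all guards and register updates have been eliminated.

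Every component except the existential prefix is in $\extFO(\Sign)$ (in fact in $\FO(\Sign)$, since no cross-position data comparisons arise), so $\phi_{\!\Aut} \in \extEMSO(\Sign)$ and $L(\phi_{\!\Aut}) = L(\Aut)$. There is no substantial obstacle here: the main subtlety is only the opening observation that registers are inert when $m=0$, which reduces the task to the data-free case and allows the translation of \cite{Bollig06} to go through verbatim over an arbitrary signature $\Sign$.
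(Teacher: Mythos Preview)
Your proposal is correct and matches the paper's own treatment, which simply says the backward direction ``follows the standard approach'' and cites \cite{Bollig06}; your reduction to a \cmautomaton when $m=0$ and the subsequent EMSO encoding of runs via transition variables $X_\delta$ is exactly this standard approach, and the forward direction is, as you note, Theorem~\ref{thm:main} combined with $\extEMSO(\Sign)=\EMSO(\Sign)$ for $m=0$. One small imprecision: you write ``every non-trivial guard evaluates to false'' and propose to discard all transitions with non-trivial guards, but a guard such as $\neg((\rel_1,r_1)=(\rel_2,r_2))$ evaluates to \emph{true} when both sides are undefined; the right statement is that every \emph{atomic} guard is false, hence each guard evaluates to a fixed boolean constant, and one discards only the transitions whose guard evaluates to $\myfalse$.
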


\iflong
\section{Infinite Data Words}\label{sec:infinite}

In the realm of reactive systems, it is appropriate to consider infinite data words, i.e., sequences from the set $(\Sigma \times \Data^m)^\omega$. Note that all the notions that we introduced in Section~\ref{sec:definitions} carry over to the new domain. In particular, a formula from $\MSO(\Sign)$ is interpreted over an infinite word $w$ without modifying the definition.
However, its fragment $\EMSO(\Sign)$ now appears limited. In terms of $\CommSign$, one cannot express ``some process sends infinitely many messages during an execution'', as can be shown using Hanf's Theorem. We therefore introduce a first-order quantifier $\exists^\infty$. Formula $\exists^\infty x\, \phi$ is satisfied by $w = w_1 w_2 \ldots
\in (\Sigma \times \Data^m)^\omega$ if there are infinitely many positions $i \ge 1$ such that $\phi$ is satisfied when $x$ is interpreted as $i$. We obtain the logics $\FO^\infty(\Sign)$ and $\EMSO^\infty(\Sign)$ as well as the language class $\ClassEMSO^\infty(\Sign)$. Now, a translation from logic into automata requires an extension of \myautomata. We define an $\omega$-\emphmyautomaton (over $\Sign$) to be a tuple
  $\Aut = (\States,\Reg,\Trans,{(\Local_\rel)}_{\rel \in \Sign},\Acc)$ where $\States,\Reg,\Trans,{(\Local_\rel)}_{\rel \in \Sign}$ are as in \myautomata, and 
$\Acc$ is henceforth a boolean formula over $\{\,\textup{`}q
= \infty\textup{'} \mid q \in Q\} \mathrel{\cup} \{\,\textup{`}q
\le N\textup{'} \mid q \in Q$ and $N \in \N\}$. Infnite runs $(q_1,\rho_1)(q_2,\rho_2) \ldots$ and satisfaction of the new global acceptance condition are defined as one would expect. In particular, atom $q = \infty$ is satisfied if $|\{i \ge 1 \mid q_i = q\}| = \infty$. The class of languages recognized by $\omega$-\myautomata is denoted by $\omega$-$\ClassCRA(\Sign)$. Theorems~\ref{thm:main} and \ref{thm:nodata} extend to infinite words.

\begin{theorem}\label{thm:infmain} For all $\Sign$, we have $\ClassEMSO^\infty(\Sign) \subseteq \omega\textup{-}\ClassCRA(\Sign)$. The size of the automaton is elementary in the size of the formula and $|\Sign|$. If $m = 0$, then $\ClassEMSO^\infty(\Sign) = \omega\textup{-}\ClassCRA(\Sign)$.
\end{theorem}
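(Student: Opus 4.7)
The proof proceeds by replaying, in the infinite setting, the two constructions behind Theorems~\ref{thm:main} and \ref{thm:nodata}, with one genuine addition: a Hanf-style normal form that also counts spheres appearing \emph{infinitely often}. The plan is therefore to (i) establish such a normal form for $\FO^\infty(\Sign)$, (ii) re-use the sphere automaton $\Aut_B$ from Proposition~\ref{prop:sphereaut} essentially verbatim, composing it with the new global acceptance condition, and finally (iii) treat the $m=0$ case by the standard automaton-to-logic encoding.

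For step (i), the key observation is that $\Graph{w}$ still has out- and in-degree at most $|\Sign|$ for an infinite data word $w$, so spheres are finite of size at most $\mathit{maxSize}$, and the set $\Spheres{B}$ remains finite. One shows, along the lines of Hanf's original argument (extended to accommodate $\exists^\infty$ as in locality arguments for infinite bounded-degree structures), that for every $\phi \in \FO^\infty(\Sign)$ one can compute, in elementary time, a radius $B$ and a boolean formula $\beta$ over atoms of the forms $\sphere \le N$ \emph{and} $\sphere = \infty$ (with $\sphere \in \Spheres{B}$ and $N \in \N$) such that $L(\phi)$ is exactly the set of infinite data words satisfying $\beta$. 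The atom $\sphere = \infty$ says $|\{i \ge 1 \mid \Sph{B}{w}{i} \cong \sphere\}| = \infty$. The proof is the natural adaptation: the $\exists^\infty$ quantifier adds to Hanf's equivalence the requirement that sphere multiplicities agree not only up to a finite threshold but also on the finite/infinite dichotomy, which is precisely what the new atoms express.

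For step (ii), the construction of $\Aut_B$ in Proposition~\ref{prop:sphereaut} is purely local: transitions inspect only the current position, its $\rel$-predecessors, and register contents, and the correctness of guessed spheres is maintained by forwarding proof obligations to neighbors. None of this depends on the word being finite, so the same automaton serves as an $\omega$-\myautomaton whose (infinite) accepting runs produce, at each position $i$, a state $q_i$ with $\pi(q_i)\cong\Sph{B}{w}{i}$. The local acceptance $F_\rel$ continues to ensure correctness at positions with no $\rel$-successor. Given the normal form $\beta$ of step~(i), we turn it into a global acceptance condition by replacing each atom $\sphere \le N$ with $\pi^{-1}(\sphere)\le N$ (a boolean combination of $q\le N$) and each atom $\sphere = \infty$ with $\pi^{-1}(\sphere) = \infty$ (a boolean combination of $q = \infty$), using that $\pi^{-1}(\sphere)$ is finite. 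Closure of $\omega$-$\ClassCRA(\Sign)$ under projection is inherited from the finite case by the same product/guessing construction, so existential set quantifiers in $\EMSO^\infty(\Sign)$ are absorbed exactly as in the proof of Theorem~\ref{thm:main}. This yields $\ClassEMSO^\infty(\Sign)\subseteq \omega$-$\ClassCRA(\Sign)$ with elementary blow-up.

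For the converse in the case $m=0$, a \myautomaton has no registers, so the argument of Theorem~\ref{thm:logaut} simplifies drastically, as in Theorem~\ref{thm:nodata}: we existentially quantify, for every transition $\delta\in\Trans$, a set variable $X_\delta$ encoding the positions at which $\delta$ is fired; an $\FO(\Sign)$ formula expresses that the $X_\delta$ partition the positions, respect labels, and agree on source/target states across the relations $\rel\in\Sign$, as well as local final states at $\rel$-maximal positions. The global acceptance formula $\Acc$, a boolean combination of atoms $q\le N$ and $q=\infty$, is translated atom-by-atom: $q\le N$ becomes a bounded counting formula in $\FO$, and $q=\infty$ becomes $\exists^\infty x\,\bigvee_{\delta:\,\text{target}(\delta)=q} x\in X_\delta$. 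This gives an $\EMSO^\infty(\Sign)$ sentence equivalent to $\Aut$, establishing equality for $m=0$.

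The main obstacle, as usual with locality arguments on infinite words, is step~(i): one has to verify that the Ehrenfeucht--Fra\"iss\'e game underlying Hanf's theorem is still won by Duplicator when the multiplicity information is given modulo a threshold \emph{and} modulo the finite/infinite distinction, and that the extended quantifier $\exists^\infty$ does not raise the required radius beyond what the ordinary FO kernel requires. The rest of the proof is largely bookkeeping on top of the constructions already proved in the finite case.
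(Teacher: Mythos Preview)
Your overall architecture is exactly the paper's: reuse the sphere automaton $\Aut_B$ on infinite words, invoke a Hanf-type normal form for $\FO^\infty$ with atoms $\sphere\le N$ and $\sphere=\infty$, translate these into the extended global acceptance condition, and handle $m=0$ by the standard transition-set encoding (now using $\exists^\infty$ for the $q=\infty$ atoms).

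There is, however, one genuine overclaim. In step~(i) you assert that the normal form for $\FO^\infty(\Sign)$ ``can be computed in elementary time''. The paper does \emph{not} establish this: it appeals to \cite{BK-IC08}, whose proof goes through Vinner's extension of Ehrenfeucht--Fra\"iss\'e games, and the paper explicitly remarks right after Theorem~\ref{thm:infmain} that the resulting translation is \emph{not effective} and that an effective alternative is unknown. So while the existence of $B$ and $\beta$ (and hence the size bound on the automaton) is secured, your claim of an elementary-time \emph{algorithm} for producing $\beta$ is not justified by the material at hand and should be dropped or flagged as an open point.

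A smaller point on step~(ii): saying the sphere automaton is ``purely local'' is not quite the right justification. The correctness argument for $\Aut_B$ (the appendix lemmas) ultimately relies on deriving contradictions from infinite strictly \emph{descending} chains of positions; this goes through for $\omega$-words precisely because every position still has a finite past. That is the reason the paper singles out, and it is what you should point to rather than mere locality of the transition relation.
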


\begin{proof}
The crucial observation is that Proposition~\ref{prop:sphereaut} still holds. We actually take the same automaton $\Aut_B$ and run it on infinite words. The argument that makes the construction work relies on the fact that the \emph{past} of any word position is finite.
Moreover, it was shown in \cite{BK-IC08} that Theorem~\ref{thm:Hanf} has a counterpart for formulas with infinity quantifier. The proof is based on Vinner's extension of Ehrenfeucht-Fra{\"i}ss{\'e} games \cite{Vinner1972}.
Thus, for $\phi \in \FO^\infty(\Sign)$, there are $B \in \N$ and a boolean formula $\beta$ over
$\{\textup{`}\sphere = \infty\textup{'}\,,\,\textup{`}\sphere \le N\textup{'} \mid \sphere \in \Spheres{B}$
and $N \in \N\}$ such that $L(\phi)$ is the set of data words
that satisfy $\beta$. With this, the constructions from Section~\ref{sec:autvslogic} can be  adapted to translate an $\EMSO^\infty(\Sign)$-sentence into an $\omega$-\myautomaton over $\Sign$.
\qed
\end{proof}

\smallskip

We remark that the proof of Theorem~\ref{thm:infmain} is not effective. Unlike the proof of Theorem~\ref{thm:main}, it does not rely on \cite{Hanf1965,BK2011}. We do not know if there is an effective alternative.
\fi

\section{Conclusion}\label{sec:conclusion}

We studied the realizability problem for data-word languages. A particular case of this general framework constitutes a first step towards a logically motivated automata theory for dynamic message-passing systems.
\ifshort
As future work, it remains to study alternative specification formalisms such as temporal logic \cite{Schwentick2010}.
\fi
\iflong
In light of this, it would be desirable to synthesize smaller and deadlock-free automata from logical or algebraic specifications. A good starting point for those studies may be temporal logic \cite{DL-tocl08,Schwentick2010}.

Our approach to modeling systems over infinite alphabets may also lead to meaningful model-checking questions.
\fi
It would be interesting to extend \cite{LeuckerMM02}, whose logic corresponds to ours in the case of $\CommSign$, to general data words.

\bibliographystyle{plain}
\bibliography{lit}

\newpage


\iflong
\appendix


\section*{A.~~Correctness of sphere automaton}

We will show that the \myautomaton $\Aut_B = (\States,\Reg,\Trans,{(\Local_\rel)}_{\rel \in \Sign},\Acc)$ over $\Sign$ and the mapping $\pi: \States \to \Spheres{B}$ are correct in the sense of Proposition~\ref{prop:sphereaut}: $L(\Aut_B) = (\Sigma \times \Data^\param)^\ast$ and, for every data
  word $w=w_1 \ldots w_n$ (where $w_i = (a_i,d_i)$), every accepting run $(q_1,\reg_1) \ldots
  (q_n,\reg_n)$ of $\Aut_B$ on $w$, and every position $i \in [n]$,
 $\pi(q_i) \cong \Sph{B}{w}{i}$.

\newcommand{\myword}{w}
\newcommand{\mylambda}{{\smash{\hat{\lambda}}}}
\newcommand{\mynu}{{\smash{\hat{\nu}}}}
\newcommand{\coloring}{\Phi}

\paragraph{Every data word is accepted.}

Let us first show $L(\Aut_B) = (\Sigma \times \Data^\param)^\ast$, i.e., that every data word is accepted by $\Aut_B$. Let $\myword = (a_1,d_1) \ldots (a_n,d_n) \in (\Sigma \times \Data^\param)^\ast$ be any data word and let $\Graph{\myword} =
([n],(\rel^\myword)_{\rel \in \Sign},\mylambda,\mynu)$ be its associated graph. We have to show $\myword \in L(\Aut_B)$. A key issue is the assignment of colors to word positions in $\myword$ such that overlapping spheres can be verified simultaneously. Let $i,i'  \in [n]$. We say that $i$ and $i'$ have a $B$\emph{-overlap} in $\myword$ if both 
$\Sph{B}{\myword}{i} \cong \Sph{B}{\myword}{i'}$ and $\mydist^\myword(i,i') \le 2B+1$.

\begin{lemma}\label{cl:coloring}
There is a mapping $\coloring: [n] \to \{1,\ldots,(2|\Sign|+1) \cdot \mathit{maxSize}^2+1\}$ such that $\coloring(i) \neq \coloring(i')$ whenever $i$ and $i'$ are distinct and have a $B$-overlap.
\end{lemma}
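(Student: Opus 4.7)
The plan is to cast the statement as a graph-coloring problem. Let $H$ be the graph with vertex set $[n]$ in which $i$ is adjacent to $i'$ iff $i \neq i'$ and $i,i'$ have a $B$-overlap in $\myword$. A map $\coloring$ as required by the lemma is precisely a proper vertex coloring of $H$, so by the standard greedy bound it suffices to show that the maximum degree $\Delta(H)$ is at most $(2|\Sign|+1) \cdot \mathit{maxSize}^2$. A suitable $\coloring$ will then be produced by processing positions $1, 2, \ldots, n$ in order and assigning each $i$ any color in $\{1, \ldots, (2|\Sign|+1) \cdot \mathit{maxSize}^2 + 1\}$ that is not already used on a previously colored $H$-neighbor of $i$; such a color always exists because $i$ has at most $(2|\Sign|+1) \cdot \mathit{maxSize}^2$ neighbors in $H$.

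Fix $i \in [n]$. By the definition of $B$-overlap, every $H$-neighbor of $i$ lies in the set $\{i' \neq i \mid \mydist^\myword(i,i') \le 2B+1\}$, so I would bound the size of this open $(2B+1)$-ball in $\Graph{\myword}$. The key observation is that its underlying undirected graph has maximum degree at most $2|\Sign|$, since each node has at most $|\Sign|$ predecessors and at most $|\Sign|$ successors across the relation symbols. Using the paper's bound $\mathit{maxSize} = (2|\Sign|+2)^B$ on the size of a $B$-ball, a splitting-at-the-midpoint argument yields $\mathit{maxSize}^2$ as a bound on the size of a $2B$-ball around any node: any vertex reachable within $2B$ steps lies in a $B$-ball around some node that itself lies in the $B$-ball of the center. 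Then every vertex at distance at most $2B+1$ from $i$ is either $i$ itself or is reached by taking one of the at most $2|\Sign|$ first edges incident to $i$ and then following a path of length at most $2B$; this gives a total of at most $1 + 2|\Sign| \cdot \mathit{maxSize}^2$ vertices, and hence $\Delta(H) \le 2|\Sign| \cdot \mathit{maxSize}^2 \le (2|\Sign|+1) \cdot \mathit{maxSize}^2$, as required.

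The only mildly delicate step is the ball-size estimate, but it is a routine bounded-degree counting argument and yields, with room to spare, the precise constant stated in the lemma. Everything else is the familiar reduction of a coloring problem to a degree bound via the greedy algorithm.
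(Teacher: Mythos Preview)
Your proposal is correct and follows essentially the same approach as the paper: build the ``overlap graph'' on $[n]$, bound its maximum degree by $(2|\Sign|+1)\cdot\mathit{maxSize}^2$, and apply the greedy coloring bound. The paper asserts the degree bound without justification, whereas you supply the routine ball-splitting argument (a $(2B{+}1)$-ball decomposes as one edge followed by a $2B$-ball, and a $2B$-ball is covered by $\mathit{maxSize}$ many $B$-balls), which is fine and even yields the slightly sharper bound $2|\Sign|\cdot\mathit{maxSize}^2$.
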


\begin{proof}
We obtain $\coloring$ as a coloring of the undirected graph $([n],\mathit{Arcs})$ where two nodes are connected iff they are distinct and have a $B$-overlap. The graph has degree at most 
$(2|\Sign|+1) \cdot \mathit{maxSize}^2$ so that it can be $((2|\Sign|+1) \cdot \mathit{maxSize}^2 +1)$-colored by some mapping $\coloring$, i.e., $\coloring(i) \neq \coloring(i')$ for every edge $\{i,i'\}$.
\qed
\end{proof}

\smallskip

\newcommand{\ic}{i_\textup{c}}

We now define a sequence $\xi = (q_1,\rho_1) \ldots (q_n,\rho_n)$ of configurations of $\Aut_B$ and show that $\xi$ is an accepting run of $\Aut_B$ on $\myword$. Let $i \in [n]$. We set
\[
q_i = \{\, (\Sph{B}{\myword}{\ic},i,\Phi(\ic)) ~\mid~ \ic  \in [n] \text{ such that }\mydist^\myword(\ic,i) \le B\,\}\,.
\]
Suppose $\esphere = (\sphere,\sactive,\scolor)$, $\sphere = (U,(\rel^\esphere)_{\rel \in \Sign},\labelone,\labeltwo,\scenter)$, and $k \in [m]$. We define $\rho_i((E,k))$ as follows. If there are positions $\ic,i' \in [n]$ such that $\mydist^\myword(\ic,i) \le B$, $\mydist^\myword(\ic,i') \le B$, $(\sphere,\sactive) \cong (\Sph{B}{\myword}{\ic},i')$, and $\scolor = \coloring(\ic)$, then we set $\rho_i((E,k)) = d^k(i')$. Otherwise, we let $\rho_i((E,k))$ be undefined.
Note that $\rho_i((E,k))$ is well defined, as there is at most one pair $\ic,i'$ satisfying the above properties.

We check that $q_i$ is a state. Let $\esphere = (\sphere,\sactive,\scolor) \in q_i$ and $\esphere' = (\esphere',\sactive',\scolor') \in q_i$ with $\sphere = (U,(\rel^\esphere)_{\rel \in \Sign},\labelone,\labeltwo,\scenter)$ and $\sphere' = (U',(\rel^{\esphere'})_{\rel \in \Sign},\labelone',\labeltwo',\scenter')$.
\begin{itemize}\itemsep=0.5ex
\item[(i)] Assume $\scenter = \sactive$ and $\scenter' = \sactive'$. Then, $(\sphere,\scenter) \cong (\Sph{B}{\myword}{i},i)$
and $(\sphere',\scenter') \cong (\Sph{B}{\myword}{i},i)$. Thus, $(\sphere,\scenter) \cong (\sphere',\scenter')$. Moreover, $\scolor = \scolor' = \coloring(i)$.

\item[(ii)] Clearly, we have $\labelone(\sactive) = \labelone'(\sactive')$ and $\labeltwo(\sactive) = \labeltwo'(\sactive')$.

\item[(iii)] Suppose $\sphere \cong \sphere'$ ($\sphere = \sphere'$, for simplicity) and $\scolor = \scolor'$. According to the definition of $q_i$, there are positions $i_1,i_2$ of $\myword$ such that $\mydist^\myword(i,i_1) \le B$, $\mydist^\myword(i,i_2) \le B$, $(\sphere,\sactive) \cong (\Sph{B}{\myword}{i_1},i)$, $(\sphere,\sactive') \cong (\Sph{B}{\myword}{i_2},i)$, and $\scolor = \coloring(i_1) = \coloring(i_2)$. We have $(\Sph{B}{\myword}{i_1},i) \cong (\Sph{B}{\myword}{i_2},i)$. As $i_1$ and $i_2$ have a $B$-overlap, we also have, by Lemma~\ref{cl:coloring}, $i_1 = i_2$. We deduce $\sactive = \sactive'$.
\end{itemize}

Next, we define a tuple $t_i = \trans{(\source_i,\test_i)}{a_i}{(\state_i,\update_i)}$ for all $i \in [n]$. We let $(\source_i)_{\rel} = q_{\prev{\rel}^\myword(i)}$ (which might be undefined). Moreover, let $g_i$ and $f_i$ be uniquely given by conditions T7 and T8 where we replace $q$ with $q_i$. Before we check that conditions (1)--(4) of a run are satisfied, we verify that $t_i$ is indeed a transition. In the following, we let $\esphere$ always refer to $\esphere = (\sphere,\sactive,\scolor)$ with $\sphere = (U,(\rel^\esphere)_{\rel \in \Sign},\labelone,\labeltwo,\scenter)$.

\begin{itemize}\itemsep=0.6ex
\item[T1] Obviously, we have $\slabel(q_i) = a_i$.

\item[T2] Let $\rel \in \Sign \setminus \dom(\source_i)$ (which implies that $\prev{\rel}^\myword(i)$ is undefined) and $\esphere \in q_i$. We have $(\sphere,\sactive) \cong (\Sph{B}{\myword}{\ic},i)$ for some $\ic$ with $\mydist^\myword(\ic,i) \le B$. As $\prev{\rel}^\myword(i)$ is undefined, we conclude that $\prev{\rel}^\esphere(\sactive)$ is undefined, too.

\item[T3] Let $\rel \in \dom(\source_i)$, $\esphere \in q_i$, $j \in U$, and $i_\rel = {\prev{\rel}^\myword(i)}$.\smallskip

Suppose $j \rel^\esphere \sactive$. We need to show $\esphere[j] \in q_{i_\rel}$. As $E \in q_i$, there is $\ic \in [n]$ such that $\mydist^\myword(\ic,i) \le B$, $(\sphere,\sactive) \cong (\Sph{B}{\myword}{\ic},i)$, and $\scolor = \coloring(\ic)$. Since $\sdist^\esphere(\scenter,j) \le B$ implies $\mydist^\myword(\ic,i_\rel) \le B$, and since $(\sphere,j) \cong (\Sph{B}{\myword}{\ic},i_\rel)$ and $\scolor = \coloring(\ic)$, we deduce $\esphere[j] = (\sphere,j,\scolor) \in q_{i_\rel}$.\smallskip

Conversely, suppose $\esphere[j] \in q_{i_\rel}$. We shall show $j \rel^\esphere \sactive$. There are positions $\ic,\ic' \in [n]$ such that we have $\mydist^\myword(\ic,i) \le B$, $\mydist^\myword(\ic',i_\rel) \le B$, $(\sphere,\sactive) \cong (\Sph{B}{\myword}{\ic},i)$, $(\sphere,j) \cong (\Sph{B}{\myword}{\ic'},i_\rel)$, and $\scolor = \coloring(\ic) = \coloring(\ic')$. Note that $\ic$ and $\ic'$ have a $B$-overlap. By Lemma~\ref{cl:coloring}, $\ic = \ic'$. As, then, $(\sphere,j) \cong (\Sph{B}{\myword}{\ic'},i_\rel)$, $(\sphere,\sactive) \cong (\Sph{B}{\myword}{\ic'},i)$, and $i_\rel \rel^\myword i$, we can deduce $j \rel^\esphere \sactive$.

\item[T4] is shown similarly to T3.

\item[T5] Let $\rel \in \dom(\source_i)$ and $\esphere \in q_i$ such that $\prev{\rel}^\esphere(\sactive)$ is undefined. There is $\ic \in [n]$ such that $\mydist^\myword(\ic,i) \le B$ and $(\sphere,\sactive) \cong (\Sph{B}{\myword}{\ic},i)$. Now, suppose $\sdist^\esphere(\gamma,\sactive) < B$. But then, we also have $\mydist^\myword(\ic,i) < B$ and $\prev{\rel}^\esphere(\sactive)$ is defined,  a contradiction. We deduce that $\sdist^\esphere(\gamma,\sactive) = B$.

\item[T6] is shown similarly to T5.

\item[T7] and T8 are immediate.
\end{itemize}
So far, we know that $t_i$ is a transition. Now, let us check the run conditions.
\begin{itemize}\itemsep=0.6ex
\item[(1)] and (2) are readily verified.
\item[(3)]Consider guard $\test_i = \test_1 \mathrel{\wedge} \test_2 \mathrel{\wedge} \test_3$.
We first check subformula $\test_1$. For $k_1,k_2 \in [m]$, by the definition of $\sim_{q_i}$ and $\Graph{w}$, $k_1 \sim_{q_i} k_2$ iff $d_i^{k_1} = d_i^{k_2}$.
Now, consider $\test_2$ and an atomic subformula $k = (\rel,(E,k))$ where $k \in [m]$, $\esphere \in q$, and $\rel \in \ptype{\sactive}$. Set $i_\rel = \prev{\rel}^\myword(i)$, which must indeed exist (by T2). As $E \in q_i$, there is $\ic \in [n]$ such that $\mydist^\myword(\ic,i) \le B$, $(\sphere,\sactive) \cong (\Sph{B}{\myword}{\ic},i)$, and $\scolor = \coloring(\ic)$. This implies $\mydist^\myword(\ic,i_\rel) \le B$, and we obtain $\rho_{i_\rel}((E,k)) = d_{i}^k$ so that $\test_2$ also holds.
Finally, we have to check $\test_3$. Consider its subformula $(\rel_1,(\esphere[j],k)) = (\rel_2,(\esphere[j],k))$ where $k \in [m]$, $E \in q_i$, $j \in U$, and $\rel_1,\rel_2 \in \ptype{\alpha}$. Let $i_1 = \prev{\rel_1}^\myword(i)$ and $i_2 = \prev{\rel_2}^\myword(i)$ (they both exist). Moreover, let $j_1 = \prev{\rel_1}^\esphere(\sactive)$ and $j_2 = \prev{\rel_2}^\esphere(\sactive)$.
As $E \in q_i$, there is $\ic \in [n]$ such that $\mydist^\myword(\ic,i) \le B$, $(\sphere,\sactive) \cong (\Sph{B}{\myword}{\ic},i)$, and $\scolor = \coloring(\ic)$. Due to the isomorphism, there is a unique $i' \in [n]$ such that $\mydist^\myword(\ic,i') \le B$ and $(\sphere,j) \cong (\Sph{B}{\myword}{\ic},i')$. Moreover, we have 
$(\sphere,j_1) \cong (\Sph{B}{\myword}{\ic},i_1)$ and $(\sphere,j_2) \cong (\Sph{B}{\myword}{\ic},i_2)$. In particular, $\mydist^\myword(\ic,i_1) \le B$ and $\mydist^\myword(\ic,i_2) \le B$. We deduce $\rho_{i_1}((E[j],k)) = \rho_{i_2}((E[j],k)) = d_{i'}^k$. Thus, $\test_3$ is satisfied.

\item[(4)] Let $(E,k) \in \Reg$. We distinguish three cases. 

\begin{itemize}

\item If there is $j \in U$ such that $E[j] \in q_i$, and $\ptype{j} \neq \emptyset$, then we have $\update_i((E,k)) = (\rel,(E,k))$ with $\rel = \rel_{E[j]}$. Since $E[j] \in q_i$, there is a position $\ic \in [n]$ such that $\mydist^\myword(\ic,i) \le B$, $(\sphere,j) \cong (\Sph{B}{\myword}{\ic},i)$, and $\scolor = \coloring(\ic)$. Moreover, there is a unique position $i'  \in [n]$ such that $\mydist^\myword(\ic,i') \le B$ and $(\sphere,\sactive) \cong (\Sph{B}{\myword}{\ic},i')$. As \mbox{$j_{\rel} = \prev{\rel}^\esphere(j)$} is defined, \mbox{$i_{\rel} = \prev{\rel}^\myword(i)$} is defined, too. Note that $(\sphere,j_\rel) \cong (\Sph{B}{\myword}{\ic},i_\rel)$ and $\mydist^\myword(\ic,i_\rel) \le B$. We obtain $\rho_i((E,k)) = d_{i'}^k = \rho_{i_\rel}((E,k))$. \smallskip

\item If there is $j \in U$ such that $E[j] \in q_i$ and $\ptype{j} = \emptyset$, then $\update_i((E,k)) = (k,\mydist^{\esphere}(\sactive,j))$. We show $\rho_i((E,k)) \in \Data_{B'}^k(i)$ where $B' = \mydist^{\esphere}(\sactive,j)$. As $E[j] \in q_i$, there is $\ic \in [n]$ such that $\mydist^\myword(\ic,i) \le B$, $(\sphere,j) \cong (\Sph{B}{\myword}{\ic},i)$, and $\scolor = \coloring(\ic)$. Thus, there is a unique position $i' \in [n]$ such that $\mydist^\myword(\ic,i') \le B$ and $(\sphere,\sactive) \cong (\Sph{B}{\myword}{\ic},i')$. We have $\mydist^\myword(i',i) \le \mydist^{\esphere}(\sactive,j)$, and we can deduce $\rho_i((E,k)) = d_{i'}^{k} \in \Data_{B'}^k(i)$. \smallskip

\item If there is no $j \in U$ such that $E[j] \in q_i$, then $\update_i((E,k))$ is undefined. Therefore, $\rho_i((E,k))$ should be undefined, too. Suppose, towards a contradiction, that $\rho_i((E,k)) \in \Data$. Then, there are $\ic,i' \in [n]$ such that we have $\mydist^\myword(\ic,i) \le B$, $\mydist^\myword(\ic,i') \le B$, $(\sphere,\sactive) \cong (\Sph{B}{\myword}{\ic},i')$, and $\scolor = \coloring(\ic)$, But then, there is a unique $j \in U$ such that $(\sphere,j) \cong (\Sph{B}{\myword}{\ic},i)$ so that $E[j] \in q_i$, which is a contradiction.
\end{itemize}
\end{itemize}

\smallskip

We conclude that $\xi$ is a run. Let us quickly verify that it is accepting. Trivially, $\Acc = \mytrue$ is satisfied. Now suppose $\rel \in \Sign$ and consider any position $i \in [n]$ such that $\nextwp{\rel}^\myword(i)$ is undefined. We have to show that $q_i$ is contained in $F_\rel$, i.e., $\nextwp{\rel}^\esphere(\sactive)$ is undefined for all $\esphere \in q_i$. So suppose $\esphere \in q_i$. There is $\ic \in [n]$ such that 
$\mydist^\myword(\ic,i) \le B$ and $(\sphere,\sactive) \cong (\Sph{B}{\myword}{\ic},i)$. As $\nextwp{\rel}^\myword(i)$ is undefined, $\nextwp{\rel}^\esphere(\sactive)$ must be undefined, too.

\paragraph{Every run keeps track of spheres.}

In this part of the proof, we show that we can infer, from every accepting run of $\Aut_B$ on data word $w$, the spheres that occur in $\Graph{w}$.

Let $\myword = (a_1,d_1) \ldots (a_n,d_n) \in (\Sigma \times \Data^\param)^\ast$ be a data word and $\Graph{\myword} =
([n],(\rel^\myword)_{\rel \in \Sign},\mylambda,\mynu)$ its graph. Suppose $\xi = (q_1,\rho_1) \ldots (q_n,\rho_n)$ is an accepting run of $\Aut_B$ on $w$ with corresponding transitions $t_1,\ldots,t_n$ where $t_i = \trans{(\source_i,\test_i)}{a_i}{(\state_i,\update_i)}$.

\medskip

The following claim states that an arbitrarily long path of an extended sphere $\esphere$ that starts in its active node is faithfully simulated by $\myword$. It will turn out to be crucial that, hereby, the data values in registers of the form $(E[j],k)$ are invariant during that simulation.

\begin{lemma}\label{cl:simulate}
Let $i \in [n]$ be some position, $e \ge 0$, and $\esphere = (\sphere,\sactive,\scolor)  \in q_i$ with $\sphere = (U,(\rel^\esphere)_{\rel \in \Sign},\labelone,\labeltwo,\scenter)$. Suppose there are $j_0,\ldots,j_e \in U$ and $\rel_1\ldots,\rel_\entf \in \Sign$ such that $\sactive = j_0$ and, for all $\range \in \{0,\ldots,\entf-1\}$, $j_\range \mathrel{\rel_{\range+1}^{\esphere}} j_{\range+1}$ or $j_{\range+1} \mathrel{\rel_{\range+1}^{\esphere}} j_\range$. Then, there is a unique sequence $i = i_0,\ldots,i_\entf \in [n]$ such that the following hold:
\begin{itemize}\itemsep=0.5ex

\item for each $\range \in \{0,\ldots,\entf-1\}$, $j_\range \mathrel{\rel_{\range+1}^{\esphere}} j_{\range+1}$ implies $i_\range \mathrel{\rel_{\range+1}^{\myword}} i_{\range+1}$ and $j_{\range+1} \mathrel{\rel_{\range+1}^{\esphere}} j_\range$ implies $i_{\range+1} \mathrel{\rel_{\range+1}^{\myword}} i_\range$

\item for each $\range \in \{0,\ldots,\entf\}$, we have $\esphere[j_\range] \in q_{i_\range}$, $\lambda(j_\range) = a_{i_\range}$, and $\nu(j_\range) = \mynu(i_\range)$

\item for each $\range \in \{1,\ldots,\entf\}$, $k \in [m]$, and $j \in U$, we have $\rho_{i_0}((\esphere[j],k)) = \rho_{i_{\range}}((\esphere[j],k))$

\item for each $\range \in \{0,\ldots,\entf\}$ and $k \in [m]$, we have that $\rho_{i_\range}((\esphere[j_\range],k)) = d_{i_\range}^k$
\end{itemize} 
\end{lemma}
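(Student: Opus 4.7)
I would prove the lemma by induction on $e$, with the four bullet points verified in parallel at each step.

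For the base case $e=0$, I would set $i_0 = i$. The edge condition is vacuous, and the first three conditions of the second bullet follow immediately: $\esphere[j_0] = \esphere \in q_i$ by assumption, and $\lambda(\sactive)=a_i$, $\nu(\sactive) = \mynu(i)$ come from T1 together with conditions (i)--(ii) defining a state. The third bullet is vacuous. The only nontrivial fact is the fourth: $\rho_{i_0}((\esphere[j_0],k)) = d_{i_0}^k$. I would read it off T8 at $i_0$ applied with $j^\ast = \sactive$ (unique by state condition (iii)): if $\ptype{\sactive}=\emptyset$, then $\update_{i_0}((\esphere,k))=(k,0)$ forces $\rho_{i_0}((\esphere,k)) = d_{i_0}^k$ directly; if $\ptype{\sactive}\neq\emptyset$, then $\update_{i_0}((\esphere,k))=(\rel_\esphere,(\esphere,k))$ transports the predecessor value, which equals $d_{i_0}^k$ by the atomic conjunct $k=(\rel,(\esphere,k))$ of $g_2$ in T7.

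For the inductive step, assume the claim for $e$ and consider $j_e,j_{e+1}$ and $\rel_{e+1}$. I would split on the direction of the edge. If $j_e \rel_{e+1}^\esphere j_{e+1}$, then $\nextwp{\rel_{e+1}}^{\esphere[j_e]}(j_e)=j_{e+1}$ is defined; since $\esphere[j_e]\in q_{i_e}$, the contrapositive of the local acceptance condition $\Local_{\rel_{e+1}}$ guarantees that $\nextwp{\rel_{e+1}}^\myword(i_e)$ is defined, and I set $i_{e+1}$ to be that successor. Symmetrically, if $j_{e+1}\rel_{e+1}^\esphere j_e$, then $\prev{\rel_{e+1}}^{\esphere[j_e]}(j_e)$ is defined, and T2 forces $\rel_{e+1}\in\dom(\source_{i_e})$, providing $i_{e+1} = \prev{\rel_{e+1}}^\myword(i_e)$. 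Uniqueness follows because $\cpred{\rel}^\myword$ and $\csucc{\rel}^\myword$ are partial functions. Membership $\esphere[j_{e+1}]\in q_{i_{e+1}}$ then comes from applying T4 in the forward case (to $\esphere[j_e]\in (\source_{i_{e+1}})_{\rel_{e+1}}=q_{i_e}$) and T3 in the backward case (to $\esphere[j_e]\in q_{i_e}$), and labels/partitions follow from T1 and state condition (ii) at $i_{e+1}$.

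The main obstacle is the third bullet, i.e., propagating the register invariance from $i_0$ to $i_{e+1}$. My plan is to apply T8 at $i_{e+1}$ to the register $(\esphere[j],k)$: since $\esphere[j_{e+1}]$ is the unique extended sphere of the form $\esphere[j^\ast]$ in $q_{i_{e+1}}$, the update uses $j^\ast = j_{e+1}$. In both the forward and backward cases one has $\rel_{e+1}\in\ptype{j_{e+1}}$, so $\ptype{j_{e+1}}\neq\emptyset$ and T8 gives $\rho_{i_{e+1}}((\esphere[j],k)) = \rho_{\cpred{\rel_\ast}(i_{e+1})}((\esphere[j],k))$ for some $\rel_\ast\in\ptype{j_{e+1}}$. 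Here I invoke guard $g_3$ of T7 at $i_{e+1}$, which equates the register values of $(\esphere[j],k)$ across all $\cpred{\triangleleft}(i_{e+1})$ with $\triangleleft\in\ptype{j_{e+1}}$; applying it to $\rel_\ast$ and $\rel_{e+1}$ shows $\rho_{i_{e+1}}((\esphere[j],k)) = \rho_{i_{e}}((\esphere[j],k))$ (where in the backward case the roles of $i_e$ and $i_{e+1}$ as predecessor/successor are reversed, but the corresponding update of T8 at $i_e$ together with $g_3$ at $i_e$ still equates the two values). Chained with the inductive hypothesis this yields equality with $\rho_{i_0}((\esphere[j],k))$. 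Finally, the fourth bullet at $i_{e+1}$ is obtained at each position exactly as in the base case, applying T8 and the $g_2$ conjunct of T7 to $\esphere[j_{e+1}]\in q_{i_{e+1}}$.
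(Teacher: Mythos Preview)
Your proposal is correct and follows essentially the same inductive argument as the paper: base case via T1, T7 ($g_1,g_2$), and T8; inductive step via $F_{\rel_{e+1}}$/T4 in the forward direction and T2/T3 in the backward direction, with the register invariance propagated through T8 combined with guard $g_3$. One small slip: the assertion ``in both the forward and backward cases one has $\rel_{e+1}\in\ptype{j_{e+1}}$'' is false in the backward case (there $\rel_{e+1}\in\ptype{j_e}$), but your parenthetical already supplies the right fix---applying T8 and $g_3$ at $i_e$ rather than $i_{e+1}$---which is exactly what the paper does.
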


\begin{proof}
We proceed by induction on $e$. Suppose $\entf = 0$.
By T1 and guard $\test_1$ of T7, $\lambda(\sactive) = a_{i}$ and $\nu(\sactive) = \mynu(i)$.
Let $k \in [m]$ and suppose $\ptype{\sactive} \neq \emptyset$. Then, $\update_{i}((\esphere,k)) = (\rel,(\esphere,k))$ where we let  $\rel = \rel_{\esphere}$. Thus, $\rho_{i}((\esphere,k)) = \rho_{\prev{\rel}^\myword(i)}((\esphere,k))$. By guard $\test_2$ of T7, we have  $\rho_{i}((\esphere,k)) = d_{i}^k$. If $\ptype{\sactive} = \emptyset$, then $\rho_{i}((\esphere,k)) = d_{i}^k$ is due to the update $\update_{i}((\esphere,k)) = (k,0)$ (T8).

So let $\entf \ge 0$, $j_0,\ldots,j_\entf,j_{\entf+1}
  \in U$, and $\rel_1,\ldots,\rel_\entf,\rel_{\entf+1} \in \Sign$ such that $\sactive = j_0$ and, for every $\range \in \{0,\ldots,\entf\}$, $j_\range \mathrel{\rel_{\range+1}^{\esphere}} j_{\range+1}$ or $j_{\range+1} \mathrel{\rel_{\range+1}^{\esphere}} j_\range$. Let $i_0,\ldots,i_\entf
  \in [n]$ be the unique corresponding sequence with the required properties.
We consider two cases:
\begin{itemize}\itemsep=0.5ex
\item Assume $j_\entf \rel_{\entf+1}^{\esphere} j_{\entf+1}$. Then, $q_{i_\entf} \not\in F_{\rel_{\entf+1}}$ so that $\nextwp{\rel_{\entf+1}}^{\myword}(i_{\entf})$ is defined. We set $i_{\entf+1} = \nextwp{\rel_{\entf+1}}^{\myword}(i_{\entf})$.

Due to T4, we have $\esphere[j_{\entf+1}] \in q_{i_{\entf+1}}$. By T1 and guard $\test_1$ of T7, we obtain $\lambda(j_{\entf+1}) = a_{i_{\entf+1}}$, and $\nu(j_{\entf+1}) = \mynu(i_{\entf+1})$.

Let $k \in [m]$ and $j \in U$. Due to condition T8, $\esphere[j_{\entf+1}] \in q_{i_{\entf+1}}$ implies that $\update_{i_{\entf+1}}((\esphere[j],k)) = (\rel,(\esphere[j],k))$ for some $\rel \in \Sign$. Due to guard $\test_3$ of condition T7, we have $\rho_{\prev{\rel}^{\myword}(i_{\entf+1})}((\esphere[j],k)) = \rho_{i_{\entf}}((\esphere[j],k))$. We can now deduce $\rho_{i_{\entf}}((\esphere[j],k)) = \rho_{i_{\entf+1}}((\esphere[j],k))$.

Finally, let $k \in [m]$. We have $\update_{i_{\entf+1}}((\esphere[j_{\entf+1}],k)) = (\rel,(\esphere[j_{\entf+1}],k))$ where we let  $\rel = \rel_{\esphere[j_{\entf+1}]}$. Thus, $\rho_{i_{\entf+1}}((\esphere[j_{\entf+1}],k)) = \rho_{\prev{\rel}^\myword(i_{\entf+1})}((\esphere[j_{\entf+1}],k))$. By guard $\test_2$ of T7, we obtain $\rho_{i_{\entf+1}}((\esphere[j_{\entf+1}],k)) = d_{i_{\entf+1}}^k$.


\item Assume $j_{\entf+1} \rel_{\entf+1}^{\esphere} j_{\entf}$. By T2, $\rel_{\entf+1} \in \dom(\source_i)$. Thus, there is (a unique) $i_{\entf+1}$ such that $i_{\entf+1} \rel_{\entf+1}^{\myword} i_{\entf}$.

By T3, we have $\esphere[j_{\entf+1}] \in q_{i_{\entf+1}}$. Moreover, $\lambda(j_{\entf+1}) = a_{i_{\entf+1}}$, and $\nu(j_{\entf+1}) = \mynu(i_{\entf+1})$.

Let $k \in [m]$ and $j \in U$. By condition T8, we have $\esphere[j_{\entf}] \in q_{i_{\entf}}$ implies $\update_{i_{e}}((\esphere[j],k)) = (\rel,(\esphere[j],k))$ for some $\rel \in \Sign$. Due to guard $\test_3$ of condition T7, we have $\rho_{\prev{\rel}^{\myword}(i_{\entf})}((E[j],k)) = \rho_{i_{\entf+1}}((E[j],k))$. We deduce $\rho_{i_\entf}((\esphere[j],k)) = \rho_{i_{\entf+1}}((\esphere[j],k))$.

Finally, let $k \in [m]$. We distinguish two cases. Suppose $\ptype{j_{\entf+1}} \neq \emptyset$. Then, $\update_{i_{\entf+1}}((\esphere[j_{\entf+1}],k)) = (\rel,(\esphere[j_{\entf+1}],k))$ where we let  $\rel = \rel_{\esphere[j_{\entf+1}]}$. Thus, $\rho_{i_{\entf+1}}((\esphere[j_{\entf+1}],k)) = \rho_{\prev{\rel}^\myword(i_{\entf+1})}((\esphere[j_{\entf+1}],k))$. By guard $\test_2$ of T7, we have  $\rho_{i_{\entf+1}}((\esphere[j_{\entf+1}],k)) = d_{i_{\entf+1}}^k$. If $\ptype{j_{\entf+1}} = \emptyset$, then $\rho_{i_{\entf+1}}((\esphere[j_{\entf+1}],k)) = d_{i_{\entf+1}}^k$ is due to the update $\update_{i_{\entf+1}}((\esphere[j_{\entf+1}],k)) = (k,0)$ (T8).

\end{itemize}
This concludes the proof of Lemma~\ref{cl:simulate}.
\qed
\end{proof}

By means of Lemma~\ref{cl:simulate}, we will show that spheres that are contained in states indeed occur in a data word. It will be used in combination with the following simple monotonicity fact, which follows easily from the definitions.


Next, we show that a sphere correctly simulates $\myword$ and vice versa, which concludes the correctness proof for $\Aut_B$.

For $i \in [n]$, let
$E_i = (\sphere_i,\sactive_i,\scolor_i)$ with $\sphere_i \df (U_i,(\rel^{\esphere_i})_{\rel \in \Sign},\labelone_i,\labeltwo_i,\scenter_i)$ be the unique extended sphere from $q_i$ such that $\scenter_i = \sactive_i$. In particular, $\sphere_i = \pi(q_i)$.

\begin{lemma}
For all $i \in [n]$, we have $\Sph{B}{\myword}{i} \cong \sphere_i$.
\end{lemma}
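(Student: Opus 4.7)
The plan is to construct, for each $i \in [n]$, an isomorphism $\phi_i \colon \sphere_i \to \Sph{B}{\myword}{i}$, using Lemma~\ref{cl:simulate} as the main engine. Since $\sactive_i = \scenter_i$, the extended sphere $\esphere_i$ itself lies in $q_i$, so the lemma is applicable at position $i$ with $\esphere = \esphere_i$. For each node $j \in U_i$, I would fix a shortest path $P_j$ from $\scenter_i$ to $j$ in $\sphere_i$ and let $\phi_i(j)$ be the endpoint of the corresponding $\myword$-path furnished by Lemma~\ref{cl:simulate}. This yields $\mydist^\myword(i,\phi_i(j)) \le \sdist^{\esphere_i}(\scenter_i, j) \le B$, $\esphere_i[j] \in q_{\phi_i(j)}$, $\mylambda(\phi_i(j)) = \lambda_i(j)$, and $\mynu(\phi_i(j)) = \nu_i(j)$, so labels and data-value partitions transfer for free.

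Next I would prove that $\phi_i$ is surjective onto $V_i := \{i' \in [n] : \mydist^\myword(i,i') \le B\}$, by induction on $d = \mydist^\myword(i,i')$. The base $d = 0$ gives $\phi_i(\scenter_i) = i$. For the inductive step, given $i'$ at distance $d+1 \le B$, pick a witnessing edge $i'' \rel^\myword i'$ (or its reverse) from a shortest $\myword$-path, with $i'' = \phi_i(j'')$ by induction. Applying T4 (respectively T3) at the transition entering $i'$, with $\esphere_i[j''] \in q_{i''}$ lying in $\source_\rel$, produces a $j' \in U_i$ with $j'' \rel^{\esphere_i} j'$ (resp.\ $j' \rel^{\esphere_i} j''$) and $\esphere_i[j'] \in q_{i'}$; conditions T5 and T6 guarantee that this neighbour genuinely exists in $\sphere_i$ even when $j''$ lies on the boundary. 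Extending $P_{j''}$ by this edge and reapplying Lemma~\ref{cl:simulate} then yields $\phi_i(j') = i'$.

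The crux is the uniqueness lemma: for every $j \in U_i$, $\phi_i(j)$ is the unique $i^\ast \in V_i$ with $\esphere_i[j] \in q_{i^\ast}$. Indeed, if $\esphere_i[j] \in q_{i^\ast}$ with $i^\ast \in V_i$, surjectivity gives $i^\ast = \phi_i(k)$ for some $k \in U_i$, and Step~1 places $\esphere_i[k] \in q_{i^\ast}$ as well; both extended spheres share the underlying sphere $\sphere_i$ and color $\scolor_i$, so state condition~(iii) forces $j = k$ and $i^\ast = \phi_i(j)$. From this, injectivity of $\phi_i$ is immediate, and edge reflection follows via T4: if $\phi_i(j) \rel^\myword \phi_i(j')$, the source $q_{\phi_i(j)}$ of the transition at $\phi_i(j')$ contains $\esphere_i[j]$, so T4 yields some $k \in U_i$ with $j \rel^{\esphere_i} k$ and $\esphere_i[k] \in q_{\phi_i(j')}$, hence $k = j'$ by uniqueness. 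Edge preservation $j \rel^{\esphere_i} j' \Rightarrow \phi_i(j) \rel^\myword \phi_i(j')$ is obtained symmetrically via T3, which places $\esphere_i[j]$ into $q_{\prev{\rel}^\myword(\phi_i(j'))}$, leaving it only to identify this position with $\phi_i(j)$ via the uniqueness lemma.

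The main obstacle is ensuring that the $\myword$-positions arising in the edge-preservation argument --- namely $\prev{\rel}^\myword(\phi_i(j'))$ via T3, or equivalently $\nextwp{\rel}^\myword(\phi_i(j))$ via extending $P_j$ --- actually lie inside $V_i$, so that the uniqueness lemma applies. A priori such a position could sit at distance $B+1$ from $i$, breaking the argument. Ruling out this ``spill-out'' is precisely where the register-bookkeeping of Lemma~\ref{cl:simulate} (points~3 and~4) and the monotonicity condition~(4) in the definition of a signature do the heavy lifting: data-value patterns propagated through the registers $(\esphere_i[j],k)$ force cycles in $\sphere_i$ to close correctly in $\myword$, which is exactly the extra strength of full \myautomata over \ngautomata identified by Lemma~\ref{lem:weakng}. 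Once this boundary analysis is settled, the four pieces assemble into the required isomorphism.
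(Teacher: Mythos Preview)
Your plan—constructing the isomorphism in the direction $\sphere_i \to \Sph{B}{\myword}{i}$ via Lemma~\ref{cl:simulate} and leveraging state condition~(iii) as a shortcut to uniqueness—is a sensible alternative to the paper's layer-by-layer construction (which goes word $\to$ sphere and carries the invariant that $h$ restricted to each $e$-sphere is already an isomorphism). Two concrete gaps, however, keep the argument from going through as written.

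First, the sentence ``conditions T5 and T6 guarantee that this neighbour genuinely exists in $\sphere_i$ even when $j''$ lies on the boundary'' is backward. T5 and T6 are one-directional: if the sphere neighbour of the active node is \emph{missing}, then the active node is at distance $B$ from the centre. Their contrapositive yields existence of the neighbour only when $j''$ is \emph{not} on the boundary. Since your surjectivity induction is on the $\myword$-distance $d$, nothing controls $\sdist^{\esphere_i}(\scenter_i,j'')$: the preimage $j''$ produced by the inductive hypothesis may well sit at sphere-distance $B$ even though $d<B$, and then T5/T6 are silent. The paper avoids this by building the map layer by layer, so that the inductive hypothesis gives $\sdist^{\esphere_i}(\scenter_i,h(i''))=d$ automatically.

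Second, there is a genuine circularity between your surjectivity step and your uniqueness lemma. In the inductive step you obtain $j'$ with $\esphere_i[j']\in q_{i'}$, but to conclude that the \emph{fixed-path} value $\phi_i(j')$ equals $i'$ you already need to know that at most one position in $V_i$ carries $\esphere_i[j']$—and that is precisely the uniqueness lemma. Conversely, your proof of the uniqueness lemma invokes surjectivity of $\phi_i$ (``surjectivity gives $i^\ast=\phi_i(k)$''). Breaking this cycle forces you either to prove uniqueness directly—two distinct positions carrying the same $\esphere_i[j]$ yield an infinite descending chain via Lemma~\ref{cl:simulate} and monotonicity, exactly as in the paper's injectivity step—or to restructure the induction so that the bijection on each radius-$e$ shell is established before moving outward, which is again the paper's route. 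Either way, the infinite-chain / monotonicity argument is needed not only for the ``spill-out'' in edge preservation that you correctly isolate, but already for uniqueness; the cheap route through condition~(iii) alone does not survive. Once this is acknowledged, your decomposition and the paper's converge: both use (iii) for local well-definedness and the descending-chain argument at the two genuinely hard places (injectivity and edge preservation on the boundary).
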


\newcommand{\hh}{\bar{h}}

\begin{proof}
For $\entf \in \{0,\ldots,\radius\}$, let $\entf$-$\sphere_i$ denote the $\entf$-sphere of $(U_i,(\rel^{\esphere_i})_{\rel \in \Sign},\labelone_i,\labeltwo_i)$ around $\scenter_i$, which is defined in the canonical manner.
We show, by induction, the following more general statement:
  \vspace{1.5ex}\\
  \hspace*{\fill}
   \begin{minipage}{0.9\linewidth}
     For every $\entf \in \{0,\ldots,\radius\}$, there is an isomorphism $h:
     \Sph{\entf}{\myword}{i} \rightarrow
     \entf$-$\sphere_i$ such that, for
     each $i' \in [n]$ with $\mydist^\myword(i,i') \le \entf$, we have $\esphere_i[h(i')] \in
     q_{i'}$.
   \end{minipage}
   \hspace{\fill}(*) \vspace{1.5ex}\\
We easily verify that (*) holds for $\entf = 0$. Now suppose there is an isomorphism $h: \Sph{\entf}{\myword}{i} \rightarrow \entf$-$\sphere_i$ with $\entf < B$.
We extend the domain of $h$ to elements $i'$ with $\mydist^\myword(i,i') = \entf+1$ as follows. Let $i_1,i_2 \in [n]$ such that $\mydist^\myword(i,i_1) = \entf$ and $\mydist^\myword(i,i_2) = \entf+1$. Let $\rel \in \Sign$. We distinguish several cases:
\begin{itemize}\itemsep=0.5ex
\item Suppose $i_1 \rel^\myword i_2$. Since $\mydist^\myword(i,i_1) < B$, we have $\mydist^\myword(\scenter_i,h(i_1)) < B$. By T6, there is $j_2 \in U_i$ such that $h(i_1) \mathrel{\rel^{E_i}} j_2$. Since $\esphere_i[h(i_1)] \in q_{i_1}$, we obtain, by T1, T4, and T7, $\lambda_i(j_2) = a_{i_2}$, $\nu_i(j_2) = \mynu(i_2)$, and $\esphere_i[j_2] \in q_{i_2}$.

\item Suppose $i_2 \rel^\myword i_1$. Similarly, due to $\mydist^\myword(i,i_1) < B$ and T5, there is $j_2 \in U_i$ such that $j_2 \rel^{E_i} h(i_1)$. Using T1, T3, and T7, we obtain $\lambda_i(j_2) = a_{i_2}$, $\nu_i(j_2) = \mynu(i_2)$, and $\esphere_i[j_2] \in q_{i_2}$.
\end{itemize}
We set $\hh(i_2) = j_2$ and $\hh(i') = h(i')$ for all positions $i'$ in $\Sph{\entf}{\myword}{i}$. In doing so, we extend the domain of $h$ to elements with distance $\entf+1$ from $i$. Note that this extension $\hh: \Sph{(\entf+1)}{\myword}{i} \rightarrow (\entf+1)$-$\sphere_i$ is well defined, i.e., $j_2$ is uniquely determined by $i_2$ and does not depend on the choice of $i_1$ or $\rel$: if, for $i_2$, we obtained distinct elements $j_2$ and $j_2'$, then $\esphere_i[j_2] \in q_{i_2}$ and $\esphere_i[j_2'] \in q_{i_2}$, which contradicts the definition of a state.

\smallskip

We show that we obtain a homomorphism $\hh: \Sph{(\entf+1)}{\myword}{i} \rightarrow
     (\entf+1)$-$\sphere_i$. Let $i_1,i_2 \in [n]$ such that $\mydist^\myword(i,i_1) = \mydist^\myword(i,i_2) = \entf+1$. Moreover, let $\rel \in \Sign$. Suppose  $i_1 \rel^\myword i_2$ (the case $i_2 \rel^\myword i_1$ is symmetric). We have $\esphere_i[\hh(i_1)] \in q_{i_1}$ and $\esphere_i[\hh(i_2)] \in q_{i_2}$. By T3 (or T4), this implies $\hh(i_1) \rel^{\esphere_i} \hh(i_2)$.

\smallskip

Next, we show that $\hh$ is surjective. Let $j_1,j_2 \in U_i$ and $\rel \in \Sign$ such that $\mydist^{\esphere_i}(\scenter_i,j_1) = \entf$, $\mydist^{\esphere_i}(\scenter_i,j_1) = \entf+1$, and $j_1 \rel^{\esphere_i} j_2$ (the case $j_2 \rel^{\esphere_i} j_1$ is similar). We have $\esphere_i[j_1] \in q_{h^{-1}(j_1)}$. By T4 and $q_{h^{-1}(j_1)} \not\in F_\rel$, there is $i_2 \in [n]$ such that $\mydist^\myword(i,i_2) = \entf+1$, $h^{-1}(j_1) \rel^\myword i_2$, and $\esphere_i[j_2] \in q_{i_2}$. We deduce that $\hh$ is surjective.

\begin{figure}[t]
\centering
\begin{minipage}{0.46\textwidth}
\centering
{
\scalebox{0.84}{
\begin{picture}(56,85)(-15,-3)
\unitlength=0.32em
  \gasset{Nframe=y,Nw=5.5,Nh=5.5,Nmr=8,ilength=4} 
  \node(e2)(0,0){}
  \node(x1)(0,20){}
  \node(x4)(0,40){}
  \node(x3)(0,60){}
  \node[Nw=4,Nh=4,Nmr=0](e)(31,10){}
  \node[Nw=4,Nh=4,Nmr=0](x2)(31,30){}
  \node[Nw=4,Nh=4,Nmr=0](x5)(31,50){}
  \put(1,0){
    \drawcurve[AHnb=0](2,20)(10,18)(15,15)(20,12)(28,10)
  }
  \put(1,0){
    \drawcurve[AHnb=0](2,0)(10,2)(15,5)(20,8)(28,10)
  }

  \put(1,20){
    \drawcurve[AHnb=0](2,20)(10,18)(15,15)(20,12)(28,10)
  }
  \put(1,20){
    \drawcurve[AHnb=0](2,0)(10,2)(15,5)(20,8)(28,10)
  }
  
  \put(1,40){
    \drawcurve[AHnb=0](2,20)(10,18)(15,15)(20,12)(28,10)
  }
  \put(1,40){
    \drawcurve[AHnb=0](2,0)(10,2)(15,5)(20,8)(28,10)
  }

  \gasset{Nframe=n,Nadjust=w,Nh=3,Nw=5,Nmr=0}
  \node(e2)(0,0){$i_2$}
  \node(e1)(0,20){$i_1$}
  \node(x3)(0,40){$i_1^{1}$}
  \node(x3)(0,60){$i_1^{2}$}
    \node(e)(31,10){$i$}
  \node(x3)(30,58){$\vdots$}
  \node(x3)(0,68){$\vdots$}

\node(e2)(-10,0){$\esphere_i[j_1]$}
\node(x1)(-10,20){$\esphere_i[j_1]$}
\node(x4)(-10,40){$\esphere_i[j_1]$}
\node(x6)(-10,60){$\esphere_i[j_1]$}

\node(e)(37,10){$\esphere_i$}
\node(x2)(37,30){$\esphere_i$}
\node(x5)(37,50){$\esphere_i$}
\end{picture}
}
}
\caption{$\hh$ is injective\label{fig:injective}}
\end{minipage}
~~~~~~
\begin{minipage}{0.46\textwidth}
\centering
{
\scalebox{0.84}{
\begin{picture}(56,85)(-15,-3)
\unitlength=0.32em
  \gasset{Nframe=y,Nw=5.5,Nh=5.5,Nmr=8,ilength=4} 
  \node(e2)(0,0){}
  \node(x1)(0,20){}
  \node(e1)(0,30){}
  \node(x4)(0,40){}
  \node(x3)(0,50){}
  \node(x6)(0,70){}
  \node[Nw=4,Nh=4,Nmr=0](e)(31,15){}
  \node[Nw=4,Nh=4,Nmr=0](x2)(31,35){}
  \node[Nw=4,Nh=4,Nmr=0](x5)(31,55){}
  \drawedge[ELside=r](e1,x1){$\rel^\myword$}
  \drawedge[ELside=r](x3,x4){$\rel^\myword$}
  \put(1,0){
    \drawcurve[AHnb=0](28,15)(25,16)(20,20)(15,25)(10,28)(2,30)
  }
  \put(1,20){
    \drawcurve[AHnb=0](28,15)(25,16)(20,20)(15,25)(10,28)(2,30)
  }
  \put(1,40){
    \drawcurve[AHnb=0](28,15)(25,16)(20,20)(15,25)(10,28)(2,30)
  }
  \put(1,0){
    \drawcurve[AHnb=0](28,15)(25,14)(20,10)(15,5)(10,2)(2,0)
  }
  \put(1,20){
    \drawcurve[AHnb=0](28,15)(25,14)(20,10)(15,5)(10,2)(2,0)
  }
  \put(1,40){
    \drawcurve[AHnb=0](28,15)(25,14)(20,10)(15,5)(10,2)(2,0)
  }
  \gasset{Nframe=n,Nadjust=w,Nh=3,Nw=5,Nmr=0}
  \node(e2)(0,0){$i_2$}
  \node(x1)(0,20){$i_2^{1}$}
  \node(e1)(0,30){$i_1$}
  \node(x4)(0,40){$i_2^{2}$}
  \node(x3)(0,50){$i_1^{1}$}
  \node(x6)(0,70){$i_1^{2}$}
  \node(e)(31,15){$i$}
  \node(x3)(30,63){$\vdots$}
  \node(x3)(0,78){$\vdots$}

\node(e2)(-10,0){$\esphere_i[j_2]$}
\node(x1)(-10,20){$\esphere_i[j_2]$}
\node(e1)(-10,30){$\esphere_i[j_1]$}
\node(x4)(-10,40){$\esphere_i[j_2]$}
\node(x3)(-10,50){$\esphere_i[j_1]$}
\node(x6)(-10,70){$\esphere_i[j_1]$}

\node(e)(37,15){$\esphere_i$}
\node(x2)(37,35){$\esphere_i$}
\node(x5)(37,55){$\esphere_i$}
\end{picture}
}}
\caption{$\hh^{-1}$ is a homomorphism\label{fig:infseq}}
\end{minipage}
\end{figure}

Let us show that $\hh$ is injective. Let $i_1,i_2 \in [n]$ such that $\mydist^\myword(i,i_1) = \mydist^\myword(i,i_2) = \entf+1$. Assume $i_1 \neq i_2$. We show that, then, $\hh(i_1) \neq \hh(i_2)$. Let $j_1 = \hh(i_1)$ and $j_2 = \hh(i_2)$. Assume, towards a contradiction, that $j_1 = j_2$. Furthermore, assume $i_1 < i_2$ (the other case is symmetric). In $E_i$, there are paths from $j_1$ to $\alpha$ and from $\alpha$ to $j_1$ that are simulated, in $\myword$, by paths from $i_2$ to $i$ and from $i$ to $i_1$, respectively. By Lemma~\ref{cl:simulate} and monotonicity of a signature, we can simulate these paths of $\esphere_i$ arbitrarily often in $\myword$. This yields an infinite descending chain $ \ldots < i_1^{2} < i_1^{1} < i_1 < i_2$ such that $\esphere[j_1] \in q_{i_1^{l}}$ and $d_{i_2}^k = d_{i_1}^k = d_{\smash{i_1^{l}}}^k$ for all $l \ge 1$ and $k \in [m]$. But this is a contradiction, as every word position has only finitely many smaller positions. The procedure is illustrated in Figure~\ref{fig:injective}.

Finally, we show that $\hh: \Sph{(\entf+1)}{\myword}{i} \rightarrow (\entf+1)$-$\sphere_i$ is actually an isomorphism. Let $j_1,j_2 \in U_i$ and $\rel \in \Sign$ such that $\mydist^{\esphere_i}(\scenter,j_1) = \mydist^{\esphere_i}(\scenter,j_2) = \entf+1$ and $j_1 \rel^{\esphere_i} j_2$. We show that this implies $\hh^{-1}(j_1) \rel^\myword \hh^{-1}(j_2)$. Set $i_1 = \hh^{-1}(j_1)$ and $i_2 = \hh^{-1}(i_2)$. Assume, towards a contradiction, that $i_1 \,\mathrel{{\not\!\!\rel}^\myword} i_2$. We have $j_1 \neq j_2$, $\esphere_i[j_1] \in q_{i_1}$, and $\esphere_i[j_2] \in q_{i_2}$. Due to the definition of the set of states of $\Aut_B$, this implies $i_1 \neq i_2$. Suppose $\nextwp{\rel}^\myword(i_1) < i_2$ (the other case is similar). Again, by Lemma~\ref{cl:simulate} and monotonicity of $\Int{\rel}{w}$, we can build an infinite descending chain $ \ldots < i_1^{2} < i_1^{1} < i_1 < i_2$ such that $\esphere[j_1] \in q_{i_1^{l}}$ for all $l \ge 1$ (cf.\ Figure~\ref{fig:infseq}). This is a contradiction.
\qed
\end{proof}

\section*{B.~~Comparison with Class Automata}
\label{sec:classautomata}

We compare \myautomata to class automata \cite{BL2010}, which have been shown to capture all (extended) XPath queries. Class automata are a smooth (undecidable) extension of data automata and, therefore, of \cmautomata. A class automaton is suitable to work over words (even trees) with multiple data values. It consists in a pair $(\Aut,\Baut)$ where $\Aut$ is a non-deterministic letter-to-letter transducer from the label alphabet $\Sigma$ to some working alphabet $\Gamma$, and $\Baut$ is a finite automaton over $\Gamma \times \{0,1\}^m$. A data word $(a_1,d_1) \ldots (a_n,d_n) \in \Sigma \times \{0,1\}^m$ is accepted if, for input $a_1 \ldots a_n$, there is some output $u_1 \ldots u_n \in \Gamma^\ast$ of $\Aut$ such that, for all $d \in \Data$, the word $(u_1,b_1) \ldots (u_n,b_n) \in (\Gamma \mathrel{\times} \{0,1\}^m)^\ast$ is accepted by $\Baut$. Hereby, $b_i^k = 1$ iff $d_i^k = d$.

We will show that, for $m=2$, class automata capture neither EMSO logic nor \ngautomata. Note that class automata do not depend on a signature. To allow for a fair comparison, we choose the simple signature $\CaSign = \{\edge{+1}\,,\edge{\sim}^1\,,\edge{\sim}^2\}$.

\begin{theorem}
  There is $L \in \ClassEMSO(\CaSign) \cap \ClassRCRA(\CaSign)$ such that $L$ cannot be recognized by any class automaton.
\end{theorem}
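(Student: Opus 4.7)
I will exhibit a concrete language $L \subseteq (\{a\} \times \Data^2)^\ast$ over signature $\CaSign$ and verify three properties: EMSO-definability, RCRA-recognizability, and non-recognizability by class automata. Take $L$ to consist of all data words $w = (a, d_1^1, d_1^2) \cdots (a, d_n^1, d_n^2)$ such that, for every $i < n$, if $d_i^1 = d_{i+1}^1$ then $d_i^2 = d_{i+1}^2$: any two consecutive positions sharing their first data value must also share their second.

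For $L \in \ClassEMSO(\CaSign)$ the $\FO(\CaSign)$-sentence
\[
\forall x, y\, \bigl(x \edge{+1} y \wedge x \edge{\sim}^1 y \,\rightarrow\, x \edge{\sim}^2 y\bigr)
\]
defines $L$, because for two successive positions the relation $\edge{\sim}^k$ reduces to equality of the $k$-th data value. For $L \in \ClassRCRA(\CaSign)$, construct a \ngautomaton with two registers $r_1, r_2$ and updates $r_k \df (k, 0)$ at every position, copying the current $k$-th data value (a non-guessing update). On every transition whose source map includes a $\edge{+1}$-predecessor, attach the guard
\[
(\edge{+1}, r_1) = 1 \,\rightarrow\, (\edge{+1}, r_2) = 2\,,
\]
enforcing that if the predecessor's first coordinate equals the current, then so does the second. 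All register updates are non-guessing, so $L$ belongs to $\ClassRCRA(\CaSign)$.

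The main obstacle is showing that $L$ is not recognized by any class automaton. The plan exploits the structural factorization of a class automaton $(\Aut_0, \Baut)$ into a label-only transducer $\Aut_0$ and an automaton $\Baut$ that inspects the per-data-value marked word independently for each $d \in \Data$. Over the singleton $\Sigma = \{a\}$, the transducer is data-blind and its set of possible outputs depends only on the input length, while the defining condition of $L$ forces coordination between the first and second coordinates that $\Baut$ cannot enforce one data value at a time. Concretely, the plan is a pumping/pigeonhole argument: for any candidate class automaton with $|\Baut| = K$, fix an $L$-word containing a repeated block $(a, d, e)^k$ with $k$ exceeding $K$ and $d \neq e$. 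A joint pigeonhole on $\Baut$-states identifies positions inside the block where the runs of $\Baut$ on the per-$e$ marked word (and, simultaneously, on the per-$e'$ marked word for a fresh value $e'$) repeat. A local perturbation replacing a single block occurrence $(a, d, e)$ with $(a, d, e')$ yields a word violating $L$ whose per-data-value markings differ from the original only in a short window corresponding to per-$e$ and per-$e'$; the pigeonhole ensures that $\Baut$'s acceptance on both markings is preserved, contradicting $L$-recognition.

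The hardest step is this simultaneous pigeonhole: the perturbation affects the per-$e$ and per-$e'$ markings at the same position, and one must arrange that $\Baut$'s runs on both of them end in accepting states whenever the original pair did. This is done by choosing the perturbation site so that, on both per-markings, the $\Baut$-state just before the perturbation has already been visited earlier in the block, and by exploiting that the fresh $e'$ appears only at the perturbed position so that per-$e'$ coincides with per-$e''$ (for any other absent value) up to that position.
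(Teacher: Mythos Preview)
Your language $L$ is in fact recognizable by a class automaton, so the third part of the argument cannot succeed. Here is a concrete class automaton for $L$: the transducer nondeterministically annotates each position $i<n$ with a bit $c_i\in\{0,1\}$, intended to encode whether $d_i^2=d_{i+1}^2$. The automaton $\Baut$ then (a) verifies the annotation --- whenever $b_i^2=1$ it requires $b_{i+1}^2=c_i$ --- and (b) enforces the property --- whenever $b_i^1=b_{i+1}^1=1$ it requires $c_i=1$. Both checks are local two-letter tests, hence realizable by a finite automaton remembering one previous symbol. If $w\in L$, the correct annotation makes all checks pass for every $d$; if $w\notin L$, some $i$ has $d_i^1=d_{i+1}^1$ but $d_i^2\neq d_{i+1}^2$, and then either $c_i=1$ makes (a) fail on $d=d_i^2$, or $c_i=0$ makes (b) fail on $d=d_i^1$. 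The point is that the transducer, although blind to data, can \emph{guess} local data relationships which $\Baut$ then verifies one value at a time; your $L$ is exactly such a local data constraint.

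This also explains where your pigeonhole argument breaks. After fixing the transducer output $u$ that witnesses acceptance of $(a,d,e)^k$, a single substitution $(a,d,e)\mapsto(a,d,e')$ at position $j$ changes the per-$e$ letter at $j$ from $(u_j,0,1)$ to $(u_j,0,0)$. A repetition of $\Baut$-states before position $j$ tells you nothing about how $\Baut$ reacts to this \emph{new} letter: in the construction above, the pair $(u_{j-1},0,1)(u_j,0,0)$ immediately triggers rejection regardless of the incoming state. Pigeonhole on states lets you pump or excise identical segments, but it does not let you alter a single letter and preserve acceptance. The paper avoids this by choosing a language whose defining constraint is \emph{global} (two halves of the word must match position by position), namely $[\{(a,1,1)\ldots(a,n,n)(a,1,1)\ldots(a,n,n)\mid n\ge 1\}]_{\CaSign}$, and then appeals to \cite{BL2010} for non-recognizability by class automata; the obstruction there genuinely exceeds what a transducer annotation plus per-value verification can express.
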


\begin{proof}
Let $\Sigma = \{a\}$ and $\Data = \N$. Using \cite{BL2010}, one can show that there is no class automaton that recognizes $L = [\{(a,1,1) \ldots (a,n,n)(a,1,1) \ldots (a,n,n) \mid n \ge 1\}]_{\CaSign}$. It is, however, easy to define an $\EMSO(\CaSign)$-sentence for $L$. We restrict to the construction of a \ngautomaton, which is very similar to the automaton from Example~\ref{ex:rCRA}. Here, we will need four registers, $r_1^k$ and $r_2^k$ for $k = 1,2$. The crucial difference is in the second phase, where we encounter a data value for the second time. We henceforth require that, at position $n+i$, the $k$-th data value $d_{n+i}^k$ is contained in register $r_1^k$ at $\cpred{\edge{\sim}^k}(n+i) = i$. The value $d_{n+i}^k$ is henceforth stored in  $r_1^k$ and has to coincide, at position $n+i+1$, with the contents of $r_2^k$ at position $\cpred{\edge{\sim}^k}(n+i+1) = i+1$.
\qed
\end{proof}
\fi

\end{document}
